\newcommand\footnoteref[1]{\protected@xdef\@thefnmark{\ref{#1}}\@footnotemark}
\setlist[itemize]{leftmargin=1em,topsep=0.2em,noitemsep}
\setlist[enumerate]{leftmargin=1.4em,topsep=0.2em,noitemsep,labelsep=0.3em}
\newtheorem{theorem}{Theorem}[section]
\newtheorem{lemma}[theorem]{Lemma}
\newtheorem{corollary}[theorem]{Corollary}
\newtheorem{observation}[theorem]{Observation}
\theoremstyle{definition}
\newtheorem{definition}[theorem]{Definition}
\newcommand{\REPL}{\mathsf{REPR}}
\newcommand{\frakC}{\mathfrak{C}}
\newcommand{\vect}{\protect\overrightarrow}
\newcommand{\resp}{respectively\xspace}
\newcommand{\termasm}[1]{\mathcal{A}_{\Box}[{#1}]}
\newcommand{\prodasm}[1]{\mathcal{A}[{#1}]}
\newcommand{\dom}[1]{{\rm dom}(#1)}
\newcommand{\Z}{\mathbb{Z}}
\newcommand{\N}{\mathbb{N}}
\newcommand{\R}{\ensuremath{\mathbb{R}}}
\newcommand{\calT}{\mathcal{T}}
\newcommand{\calU}{\mathcal{U}}
\newcommand{\prodT}{\prodasm{\mathcal{T}}}
\newcommand{\prodpaths}[1]{{\bf{P}}[{#1}]}
\newcommand{\prodpathsT}{\prodpaths{\mathcal{T}}}
\newcommand{\prodpathsU}{\prodpaths{\mathcal{U}}}
\newcommand{\ra}{\rightarrow}
\newcommand{\rev}[1]{\ensuremath{#1^\leftarrow}}
\newcommand{\pathassembly}[1]{\mathrm{asm}{(#1)}}
\newcommand{\asm}[1]{\pathassembly{#1}}
\newcommand{\gglue}[3]{\mathrm{glue}(#1_{#2} #1_{#3})}
\newcommand{\glue}[2]{\mathrm{glue}(P_{#1} P_{#2})}
\newcommand{\glueP}[2]{\mathrm{glue}(P_{#1} P_{#2})}
\newcommand{\gluePD}[2]{\mathrm{glue}(P_{#1} D_{#2})}
\newcommand{\glueQ}[2]{\mathrm{glue}(Q_{#1} Q_{#2})}
\newcommand{\namedGlue}[2]{\mathrm{glue}(#1  \,  #2)}
\newcommand{\pos}[1]{\mathrm{pos}(#1)}
\newcommand\type[1]{\mathrm{type}(#1)}
\newcommand\defeq{\mathrel{\overset{\makebox[0pt]{\mbox{\normalfont\tiny\sffamily def}}}{=}}}
\newcommand{\thmTM}{}   
\newcommand{\thmMain}{}
\newcommand\midpoint[2]{\mathrm{mid}(#1,#2)}
\newcommand{\pn}{\ensuremath{P^{n+1}}\xspace}
\newcommand{\pk}{\ensuremath{P^{k}}\xspace}
\newcommand{\fk}{\ensuremath{E^{k}}\xspace}
\newcommand{\gk}{\ensuremath{F^{k}}\xspace}
\newcommand{\hs}{$h$-successful\xspace}
\newcommand{\an}{\ensuremath{\alpha_n}\xspace}
\newcommand{\asmprefix}[2]{{}^{#1\leadsto\!}#2}
\newcommand{\embed}[1]{\ensuremath{\frak{E}_{#1}}}
\newcommand{\ep}{\ensuremath{\embed{P}}}
\newcommand{\pnviz}{\ensuremath{\gglue{P^{n+1}}{p}{p+1}}\xspace}
\newcommand{\midpnviz}{\ensuremath{\midpoint{P^{n+1}_p}{P^{n+1}_{p+1}}}\xspace}
\newcommand{\olq}{\ensuremath{\overline{q}}}
\newtheorem*{rep@theorem}{\rep@title}
\newcommand{\newreptheorem}[2]{%
\newenvironment{rep#1}[1]{%
 \def\rep@title{#2 \ref{##1}}%
 \begin{rep@theorem}}%
 {\end{rep@theorem}}}
        \pgfpointadd{\pgfpointdecoratedinputsegmentlast}{\pgfpoint{1pt}{1pt}}
\title{The non-cooperative tile assembly model is not intrinsically universal or capable of bounded Turing machine simulation}
\author{Pierre-\'Etienne Meunier\\
  Inria  \\
\href{mailto:pierre-etienne.meunier@inria.fr}{pierre-etienne.meunier@inria.fr}\thanks{This work was carried out while at Inria, Paris, France, and the Department of Computer Science, Aalto University, Finland,
and  
Aix Marseille Universit\'e, CNRS, LIF UMR 7279, 13288, Marseille, France,
and
 LIAFA UMR 7089, Paris~7, France, and California Institute of Technology, Pasadena, CA 91125, USA. Supported in part by National Science Foundation Grant CCF-1219274.}
\and
Damien Woods\\
Inria \\
\href{mailto:damien.woods@inria.fr}{damien.woods@inria.fr}\thanks{This work was carried out while at Inria, Paris, France, as well as California Institute of Technology, Pasadena, CA 91125, USA, and during a brief visit to  LIAFA (UMR 7089), Paris 7, France.  Supported by National Science Foundation grants CCF-1219274, 0832824 (The Molecular Programming Project), CCF-1219274, and CCF-1162589, USA, a visiting professor award from Paris~7, and by INRIA.}}
\date{}
\begin{document}

\numberwithin{figure}{section}
\newcommand\turinginothermodels{}  
\maketitle


\begin{abstract}
The field of algorithmic self-assembly is concerned with the computational and expressive power of nanoscale self-assembling molecular systems. In the well-studied cooperative, or temperature 2, abstract tile assembly model it is known that there is a tile set to simulate any Turing machine and an intrinsically universal tile set that simulates the shapes and dynamics of any instance of the model, up to spatial rescaling. It has been an open question as to whether the seemingly simpler noncooperative, or temperature 1, model is capable of such behaviour. Here we show that this is not the case, by showing that there is no tile set in the noncooperative model that is intrinsically universal, nor one capable of time-bounded Turing machine simulation within a bounded region of the plane.

Although the noncooperative model intuitively seems to lack the complexity and power of the cooperative model it was not immediately obvious how to prove this. 
One reason is that there have been few tools to analyse the structure of complicated paths in the plane. This paper provides a number of such tools. A second reason is that almost every obvious and small generalisation to the model (e.g.\ allowing error, 3D, non-square tiles, signals/wires on tiles, tiles that repel each other, parallel synchronous growth) endows it with great computational, and sometimes simulation, power. Our main results show that all of these generalisations provably increase computational and/or simulation power. Our results hold for both deterministic and nondeterministic noncooperative systems. Our first main result stands in stark contrast with the fact that for both the cooperative tile assembly model, and for 3D noncooperative tile assembly, there are respective intrinsically universal tilesets. Our second main result gives a new technique (reduction to simulation) for proving negative results about computation in tile assembly. However, our results leave as an open problem whether there might be other ways noncooperative systems compute. 

\end{abstract}

\thispagestyle{empty}

\let\oldref=\ref
\renewcommand\ref[1]{\oldref{ext-#1}}
\renewcommand\ref[1]{\oldref{#1}}


\section{Introduction}
The design and laboratory fabrication of nanoscale molecular systems that implement sophisticated computation is a goal held by many. 
If we are to have such an engineering discipline that exploits  the idea that molecules can compute, then we need a firm foundation of the kind of computational theory that is relevant to such systems. 
The field of algorithmic tile assembly provides one such theoretical framework targeted specifically at molecular self-assembling systems.  
One of most well-studied models of computation for molecular self-assembly systems is the abstract tile assembly model, put forward by Winfree~\cite{Winf98}. The model describes crystal-like growth process where, starting from a small connected arrangement of square tiles,  called a seed assembly,  
a growth process takes place where other unit-size square tiles stick to the ever-larger growing assembly. Local rules specify which tiles can stick at each location along the boundary of the assembly. Growth happens asynchronously and in parallel; the model is a kind of asynchronous nondeterministic cellular automaton. Winfree~\cite{Winf98} showed that the model can simulate Turing machines, Winfree and Rothemund showed that it can efficiently self-assemble squares~\cite{RotWin00,Roth01}, and Winfree and Soloveichik~\cite{SolWin07} used bounded-space simulation of time-/space-bounded Turing machines to exhibit for each finite connected shape a Kolmogorov-efficient tile set that assembles a scaled version of that shape.  
Recently, it has been shown that there is even a single {\em intrinsically universal} tile set set that faithfully simulates the geometry (shapes) and dynamics of any instance of the model, up to spatial rescaling~\cite{IUSA}.

These results were all shown for the so-called {\em cooperative} (or {\em temperature 2}) model, where tiles bind to the growing assembly if they, or at least some of them, bind on two or more sides. This provides a kind of ``context sensitivity'' in the growth process. What happens if we allow {\em noncooperative} (or {\em temperature 1})  growth where tiles bind if they match on at least one side?  Growth like this looks like growing and branching tips in 2D. Tendrils snake out from the seed, possibly crashing into each other, and more often than not they seem to  merely form simple structures (cycles and/or repeated path segments), and certainly not the kind of structures needed for computation. 
Putting proofs behind this intuition has been a challenge and the literature has seen a number of unproven conjectures about the limitations of temperature 1. In this paper, we settle two  such questions. 

Our first main result is on the topic of simulation in tile assembly. As noted, it has been shown that there is an intrinsically universal tile set  for the cooperative model; that is, a tile set is capable of simulating any instance of the cooperative model~\cite{IUSA}. More precisely,   there is a tile set~$U$ that given as input any instance~$\calT$ of the tile assembly model (encoded as a seed assembly), tiles from $U$ self-assemble (at temperature 2) to  simulate  the geometry (shapes) and dynamics of~$\calT$ perfectly, modulo a spatial rescaling.   By spatial rescaling we mean that each unit-sized square tile in~$\calT$ is simulated by an $m \times m$ square block of tiles over $U$.   The result is a kind of completeness result for the abstract tile assembly model: the tile set $U$ is ``hard'' for all tile assembly systems in the sense it is able, via extensive use of cooperative binding, to capture all possible production and dynamics of all systems, and of course every instantiation of $U$ is itself also a valid tile assembly system. Since then,  it has been shown~\cite{Meunier-2014} that the noncooperative tile assembly model can not simulate the cooperative model  but it was left open  (Conjecture~1.4~\cite{Meunier-2014})  whether the noncooperative model can  simulate itself. So although the noncooperative model is weak, perhaps it is just strong enough for self-simulation? In other words, is there a noncooperative tile set that is  ``hard'' for the noncooperative model? We answer this conjecture by showing that there is no such intrinsically universal tile set for the noncooperative~model. 

\renewcommand\turinginothermodels{\cite{RotWin00,SolWin07,Versus,OneTile,Cook-2011,Roth01,Patitz-2011,Signals,Hendricks-2014,Fekete2014,gilbert2015continuous,Winf98,Winfree98simulationsof}}

Our second main result is on computation in the noncooperative model. We show, that it is impossible to simulate a time-bounded Turing machine in a bounded rectangular region of the plane in the noncooperative model (see Theorem~\ref{thm:no_finite_TM} for the formal statement).
Although this statement has caveats (i.e.\ both instances of the word ``bounded''),  it implies that the noncooperative model  can not simulate Turing machines using any method with a geometry remotely similar to any of the known ways to simulate Turing machines in any known tile assembly model~\turinginothermodels.\footnote{I.e.\ by simulating a time $t(n)$ and space $s(n)$-bounded Turing machine in a $O(f(t(n))) \times O(g(s(n)))$ region for finite functions $f$ and $g$.} It is important to note that the  negative result about simulation in ref~\cite{Meunier-2014} does not say anything about computation in the model; in fact that particular negative result also holds in the 3D noncooperative model, despite the fact that model can simulate Turing machines. It is also important to point out that many generalisations~\cite{Versus,2HAMIU,OneTile,geotiles,Cook-2011,Roth01,Patitz-2011,BMS-DNA2012a,Signals,Jonoska2014,Fekete2014,Hendricks-2014,gilbert2015continuous}
of the classical 2D noncooperative model can indeed carry out ``bounded'' simulation of Turing machines; thus our result formally separates these generalised noncooperative models from the classical 2D model.

\paragraph{New tools for noncooperative tile assembly.}
Besides showing limitations on noncooperative growth in terms of simulation and computation power, we contend that this paper brings some new techniques to the table. Generally in tile assembly systems, in order to carry out nontrivial computation for finite or infinite shape-building one often has the goal of building structures that (a) are large but (b) not too large (e.g. neither hardcoding a small shape nor filling the entire plane could be reasonably regarded as interesting computation---the interesting {\em algorithmic} stuff lies in-between).  In this paper we provide two tools to analyse, and prove negative results on, building such shapes in the noncooperative model. The first is a method to show that any {\em any}  path of tiles $P$ that travels a long enough horizontal distance while staying above some horizontal line can be either pumped forever or else blocked by growing something else. Hence if $P$ was supposed to form part of some interesting shape, then our first tool (Lemma~\ref{lem:onepath}) makes it so that we can use $P$ to make another path that either goes outside the shape ($P$ is pumpable), or else prevents $P$ from growing to completion ($P$ is blocked).  This contrasts with previous works, e.g. \cite{Doty-2011,Manuch-2010}, since here we \emph{use} non-pumpable paths to prove that other ``unintended'' assemblies can be produced. In fact one of the main new ideas in our work is to prove strong properties about non-pumpable paths.

Our second tool (Theorem~\ref{thm:shapes}) builds on this to simultaneously block multiple paths, despite the fact they may interact with each other in very complicated ways. 
More precisely, given a set of paths of tiles, we define a total ordering on those paths so that we can iteratively apply the first tool to infinitely pump and/or block all of the paths.  
This methodology seems general enough that it might find future application. 
See Section~\ref{sec:intuition} for a proof overview.  

Another contribution of this work that might prove useful in the future is a set of definitions (Section~\ref{sec:defs-paths}) and lemmas (Section~\ref{secvisibility}) that capture a number of basic properties about producible paths at temperature 1. 
Two properties we reason about again and again are (a) visibility of a glue $g$ from the south meaning that no glue on the path lies immediately below $g$, and (b) the notion of one path being more right-turning, or more {\em right-priority}, than another. 
We often force right-priority paths $P$ to grow a branch until that branch crashes into (intersects the position of) a prefix of the path, then we embed the new crashed path in $\mathbb{R}^2$ and reasoning using the visibility of some glue along $P$ to argue that the crashed path encloses a component of the plane along the {\em left-hand side} of $P$. 
 Our  conventions and tricks for reasoning about paths of tiles via  embeddings in $\mathbb{R}^2$ could be applied to a variety of models. 
 They in turn allow us to frequently use reasoning that is at the abstraction level of {\em paths in the plane} as opposed to the more low-level of individual tiles and glues.  
Together this collection of tools allow us to disrupt any attempt to build shapes of a certain kind, and they work whether or not nondeterminism is deployed as a tool by the ill-fated programmer.  We hope  these ideas may find use  independently of the two main problems we solve here.

Intrinsic universality, and simulation between tile assembly systems, is giving rise to a kind of complexity theory for comparing models of self-assembly\cite{woods2015ntrinsic}. It is interesting to note that in this setting sometimes it is possible to prove negative results on the {\em simulation power} of models that are already known to be Turing universal~\cite{2HAMIU, Hendricks-2014,  Meunier-2014}.
Here we show that one can obtain a negative result on {\em Turing machine-style computation itself}, via a negative result on simulation between tile assembly systems.
Hence we show (for the first time) that simulation between tile assembly systems is a new method to obtain negative results on Turing computation in tile assembly.

\paragraph{Previous and future work.}
A large number of papers have conjectured or discussed that in one sense or another, sophisticated computation such as Turing machine simulation or building shapes with few tile types is impossible in the noncooperative model~\cite{RotWin00, Roth01, Cook-2011,  Doty-2011, Reif-2012, Manuch-2010, Patitz-2011, BMS-DNA2012a}. 
Our Theorem~\ref{thm:no_finite_TM} implies that any claimed simulator of Turing machines by noncooperative (temperature~1) systems would have to look very different from the known methods for cooperative abstract tile assembly model~\cite{Winf98,RotWin00,SolWin07} and its generalizations such as the two-handed~\cite{Versus,2HAMIU} or polygon~\cite{OneTile,geotiles} models, as well as variants of the noncooperative models, such as 3D tiles~\cite{Cook-2011}, probabilistic simulations~\cite{Cook-2011,Roth01}, negative glues~\cite{Patitz-2011}, staged and stepwise assembly~\cite{BMS-DNA2012a}, active signals~\cite{Signals,Jonoska2014}, polyomino-shaped tiles \cite{Fekete2014,Hendricks-2014} and polygons~\cite{gilbert2015continuous}).

Rothemund and Winfree~\cite{RotWin00} gave the first negative result on 2D temperature 1 systems: building an $N \times N$ square requires $N^2$ tile types if we insist that the square is {\em fully connected}. They conjecture this holds in the absence of that assumption. 
 Ma\v{n}uch, Stacho, and Stoll~\cite{Manuch-2010} show that 2D  temperature 1 systems {\em without mismatches} require at least $2N-1$ tile types to uniquely self-assemble $N\times N$ squares.  Tile assembly systems that always build a single terminal  assembly are said to be {\em directed}.
  Doty, Patitz and Summers~\cite{Doty-2011} conjecture that every directed~2D noncooperative system is {\em pumpable} meaning, roughly speaking, that every sufficiently long path of tiles has a segment that can be producibly repeated infinitely often. (They conjecture this for directed systems since  
  by a result of Cook, Fu and Schweller~\cite{Cook-2011} we know that non-directed systems simulate Turing machines, with some error.) Their paper shows that if this conjecture holds then certain forms of computation (e.g.\ infinite computation) are impossible for directed~2D  temperature 1 systems. Proof of that conjecture would not imply our main results which are concerned with bounded (finite) computation and simulation, nor do our results imply that temperature 1 systems are pumpable (i.e.\ the present paper leaves the pumpability question open). Also our negative results do not make any assumptions about pumpablity,  mismatches nor directedness. 

As already noted, it has been shown~\cite{Meunier-2014} that noncooperative tile assembly can not simulate the cooperative model, here we answer the main open question from that paper (Conjecture~1.4~\cite{Meunier-2014}).

Meunier~\cite{meunier2015} gives  positive results for 2D noncooperative systems. First, by showing the existence of relatively simple noncooperative tile assembly systems that always build finite assemblies that contain a path where at least one tile type is repeated. A second, more general, construction gives for each real number $\epsilon > 0$, a tileset $T_\epsilon$ which, started from a single tile seed, produces only finite terminal assemblies, all of height $(2-\epsilon)|T_\epsilon|$.
So although general-purpose computation seems impossible at temperature~1, we know that one form of algorithmic self-assembly is possible, namely building long(ish) paths by re-using tile types.

One of the main reasons one simulates Turing machines with tile assembly systems is to build shapes.
Theorem~\ref{thm:no_finite_TM} shows that none of the standard ways to make shapes in models that are generalisations of the noncooperative model can possibly work in the noncooperative model itself. This gives one formal sense in which shape building via computation is impossible at temperature~1.
We leave (all!) others open. 

Beyond  self-assembly,  the combinatorics of self-avoiding walks in the plane, first introduced by Flory in 1953~\cite{Flory53} in the context of polymer chemistry,  has provided long-standing open problems attracting attention from mathematicians and computer scientists~\cite{knuth:math,Bousquet2010}. 
Our setting and results can be interpreted as memory-bounded versions of this topic: indeed, noncooperative self-assembly is exactly the process of building self-avoiding paths in $\mathbb{Z}^2$, but with a {\em memory} encoded by tile types. It would be interesting to see if our techniques could be applied to that domain to
shed an algorithmic light on the problem of counting or sampling self-avoiding walks.

There are a large number of papers on temperature 1 models that are generalisations of the classical temperature 1 model that we study~\cite{Cook-2011, Roth01, Patitz-2011,BMS-DNA2012a,Signals,Jonoska2014,Fekete2014,Hendricks-2014,gilbert2015continuous}.
Since those models achieve Turing universality, na\"ive application of our techniques to those models is  provably impossible. But often we care more about shape-building than computation and our techniques give a method to edit producible shapes, hence we ask:   Can our techniques be generalised to show limitations to the classes of shapes efficiently producible  in those models?  Another question: Is there a non-trivial hierarchy of simulation power within the noncooperative model? We leave this as an open research direction to further clarify and investigate the power of noncooperative self-assembly,\footnote{The question is not without merit, as recently it was shown that the two-handed, or hierarchical, model of self-assembly has an infinite set of hierarchies with each level in the hierarchy more power than the one below~\cite{2HAMIU}. However, the dearth of positive results for the temperature~1 abstract tile assembly model suggests positive results are unlikely to be easily found.} that would certainly require new techniques beyond what we've seen to date.

Our results do not close the problem of determining what can be built \emph{computationally} at temperature 1; there are many potential forms of computation that could in principle be exhibited by noncooperative systems beyond those formally encapsulated by Theorems~\ref{thm:main} and~\ref{thm:no_finite_TM}.
We close with a conjecture that attempts to eliminate many of these.
All known temperature~1 tile assembly systems that reuse tile types without producing infinite terminal assemblies produce assemblies that place tiles at a small Manhattan distance from the seed. 
For example, using $|T| = 2N-1$ tile types to build a size $N\times N$ square with Manhattan diameter $|T|+1$~\cite{RotWin00}, or, for all real numbers $\epsilon > 0$, using $|T_\epsilon|$ tile types to build finite terminal assemblies, all of height at least $(2-\epsilon)|T_\epsilon|$~\cite{meunier2015}.
 We conjecture that if a temperature~1 tile assembly system with~$|T|$ tile types produces only finite terminal assemblies, then these terminal assemblies place tiles at Manhattan distance no more than $2|T|$ from the seed.
This bound is just large enough so that the techniques exploited in this paper --- that require reuse of a visible glue type along a path of tiles --- could potentially find application, but small enough to almost meet the lower bound in~\cite{meunier2015}. More importantly, our conjectured bound severely limits the kinds of {\em finite} computations achievable in the temperature~1 abstract tile assembly model.

\subsection{Results}

We give an overview of our two main results, although a number of notions have yet to be formally defined (see Section~\ref{sec:defs} for definitions). Our first main result shows that the noncooperative abstract tile assembly model is not intrinsically universal: 

\renewcommand{\thmMain}{The noncooperative abstract tile assembly model is not intrinsically universal. In other words, there is no tileset $U$ that at temperature 1 simulates all noncooperative tile assembly systems.}
\begin{theorem}\label{thm:main}
\thmMain
\end{theorem}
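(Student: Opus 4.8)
The plan is to argue by contradiction. Suppose some tileset $U$ simulated every noncooperative system at temperature~$1$, and defeat it on the simplest possible \emph{infinite} target: let $\mathcal{L}$ be the two-tile system whose unique terminal assembly is the infinite eastward ray $\{(i,0):i\ge 0\}$. Then $U$ simulates $\mathcal{L}$ at some fixed scale $m$ under some fixed macrotile representation $R$, so by the simulation definition every producible $U$-assembly $\alpha$ has $R(\alpha)$ a producible $\mathcal{L}$-assembly, with the ``fuzz'' of $\alpha$ (tiles outside correctly formed macroblocks) confined to macroblock positions adjacent to blocks already coding the ray. Since producible assemblies are connected and $R(\alpha)$ must, for suitable $\alpha$, agree with the scaled ray arbitrarily far east, such $\alpha$ are connected, reach arbitrarily large $x$-coordinate, yet lie inside a horizontal strip of height $O(m)$.

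First I would fix a coordinate $X$ chosen far larger than any ``long enough'' threshold associated with $|U|$ and the strip height, and pick a producible $\alpha$ whose representation already contains all macroblocks of the ray out to column $X$. Inside $\alpha$ there are paths of tiles that reach from near the seed out to column $X$; each such path travels horizontal distance $\Theta(X)$ while staying above the floor of the strip, which is exactly the hypothesis of Lemma~\ref{lem:onepath}. Because several such paths can coexist and weave through one another, I would feed the whole collection of routes by which $\alpha$ connects its western part to column $X$ into Theorem~\ref{thm:shapes}, which pumps and/or blocks all of them at once. The outcome is a dichotomy: either one of these paths gets producibly pumped (extended to an infinite producible path), or growth past some bounded column is blocked in a producible $U$-assembly.

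I would then close each branch against faithfulness of the simulation. In the pumping branch, the pumped path --- produced, in the construction behind Lemma~\ref{lem:onepath}, by repeating a segment between two occurrences of a glue that is visible from the south and whose surrounding window recurs, which sends a fresh branch out of the strip --- lies in a producible $U$-assembly placing tiles in macroblock positions that are neither ray-blocks nor adjacent to ray-blocks; this is fuzz the simulation forbids, so $U$ produces an assembly with no legal $\mathcal{L}$-meaning, contradicting soundness. In the blocking branch, extend the blocker to a maximal producible $U$-assembly $\beta$: growth never passes the blocked column, so $R(\beta)$ is a bounded prefix of the ray, whereas $\mathcal{L}$ from the corresponding state still grows east forever --- a $U$-production ``stuck'' where the simulated system is not, contradicting the requirement that the dynamics of $U$ track those of $\mathcal{L}$. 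Either way Theorem~\ref{thm:main} follows, and nothing above uses determinism since Lemma~\ref{lem:onepath} and Theorem~\ref{thm:shapes} do not.

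The hard part is the blocking branch together with the step that upgrades ``a single long path in $\alpha$'' to ``every route across $\alpha$ blocked simultaneously.'' Obstructing one path is useless if growth can re-route around it through the rest of the height-$O(m)$ band, so one really must stop every route across a vertical cut at once; but those routes can share tiles and constrain each other intricately, and organising a consistent order in which to pump-or-block them --- so that no blocking step destroys the producibility a later step relies on --- is precisely the work done by Theorem~\ref{thm:shapes} (itself resting on the right-priority/south-visibility apparatus, and on the fact, used in Lemma~\ref{lem:onepath}, that a long path confined to a bounded-height strip must reuse a south-visible glue). A smaller but genuine subtlety is guaranteeing that the pumping branch actually exits the fuzz band instead of marching harmlessly east inside it; this is exactly why one pumps at a repeated \emph{south-visible} glue, which forces the new growth to turn out of the band.
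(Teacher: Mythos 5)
Your high-level plan — contradict intrinsic universality by exhibiting a simple family of shape-building targets, then pump-or-block every path — is the right shape of argument and is indeed what the paper does. But your choice of target, the infinite eastward ray $\mathcal{L}$, is fatal, and this is not the ``smaller but genuine subtlety'' you flag near the end; it is the crux.

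The pump-or-block dichotomy in Lemma~\ref{lem:onepath} derives its force from the \emph{geometry} of the simulated shape: in the paper's ``flipped-L'' family $\mathcal{T}_N$, the horizontal arm is bounded (length $N+1$), the vertical arm is only $3m$ wide, and any honest simulation must therefore route some path up across a fixed horizontal line $h=10m$. Lemma~\ref{lem:visibility-setup} pigeonholes two same-type south-visible glues horizontally $\geq 3m$ apart and vertically $\leq 3m$ apart \emph{inside the horizontal arm}, so pumping that segment repeatedly produces a path whose period is ``mostly eastward'', and the pumped path eventually passes the right edge of the bounded horizontal arm and exits the simulation zone — that is what makes the pumping branch a contradiction. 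With your ray $\mathcal{L}$, the simulation zone \emph{is} an unbounded eastward strip: a pumped path marching east stays inside it forever, producing nothing illegal. Your one-sentence fix — that pumping at a repeated south-visible glue ``forces the new growth to turn out of the band'' — is not true: the pumping vector between two same-type south-visible glues can perfectly well have zero $y$-component, and nothing about south-visibility forbids this. Without the pumping branch yielding a contradiction, the blocking branch also collapses, because throughout the construction of enclosing branches (and throughout the induction in Theorem~\ref{thm:shapes}) the pumping case is the safety net invoked precisely when a blocker cannot be built; it is not optional. More mechanically, Lemma~\ref{lem:onepath}, Lemma~\ref{lem:visibility-setup}, the notion of $h$-successful path, and Theorem~\ref{thm:shapes} are all stated and proved \emph{for the specific $\mathcal{T}_N$ simulation zone}; you cannot feed paths from a simulation of $\mathcal{L}$ into them without rebuilding that apparatus, and for the ray the rebuilding fails at exactly the step above.

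One smaller mismatch: your blocking branch appeals to the dynamics requirement of simulation (``a $U$-production stuck where $\mathcal{L}$ is not''). The paper's argument does not need dynamics; it reaches a contradiction purely via equivalent terminal shapes (Definition~\ref{def:equiv-shape}): the blocked assembly grows to a \emph{terminal} assembly with no tile at or above height $h$, whereas every terminal assembly of $\mathcal{T}_N$ extends up forever. This matters because it lets the paper's Theorem~\ref{thm:shapes} be a shape statement, which is what feeds cleanly into Theorem~\ref{thm:no_finite_TM}. To repair your proof you should replace $\mathcal{L}$ by a family like the paper's $\mathcal{T}_N$ — bounded in one direction and unbounded in the other — so that both the ``goes too far east'' and ``never reaches height $h$'' contradictions are available.
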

\noindent The intuition behind the proof is given in Section~\ref{sec:intuition}, and the  proof is given in Sections~\oldref{sec:onepath} and~\oldref{sec:manypaths}.

Our second main result, that is almost immediate from our main theorem, shows that temperature 1 systems are severely limited in their ability to simulate Turing machines.  The standard published methods to simulate Turing machines in 2D in the abstract tile assembly model and its generalizations~\turinginothermodels, are (or can be easily modified to be) such that simulation of a $s(n)$ space bounded, and $t(n)$ time bounded Turing machine $M$ can be achieved in a $O(s(n)) \times O(t(n))$ rectangle with (a)~a seed assembly (encoding $M,x$)
contained in the leftmost $O(1)$ columns (coordinates), (b)~an output assembly (encoding the output of $M$ on input $x$) that
includes a unique tile type appearing on the rightmost column, and (c)~no tile ever goes outside this rectangle.  
The following theorem states, in a formal way,  that simulating Turing machines in a bounded rectangular region, without error and with the accept/reject answer given as a tile on the rightmost column is impossible for the  2D noncooperative abstract tile assembly model, for deterministic or even nondeterministic tile assembly systems. In the theorem statement it is important to note that the  ``bounding function'' $B_{M}$ is arbitrary in the sense that it allows a potential simulator tile assembly system to use much more space  than the actual running time or  space usage  of the Turing machine~$M$; this generality serves to strengthen the theorem statement (e.g.\ bounded Turing machine simulation is impossible even if we allow the tile assembly system to use, say, exponential, or doubly exponential, or indeed any finite spatial scaling). 

\renewcommand{\thmTM}{
Let $t: \N  \rightarrow\N $, $s:\N \rightarrow\N $ and let $B_M:\N\ra\N$  such that $\forall n\in\N$, $B_M(n)\geq s(n)$.
Let $M$ be any Turing machine that halts on all inputs $x \in \{ 0,1\}^\ast$ in time $t(|x|)$ using space~$s(|x|)$. 
There is no pair $(V, B_M)$ where  $V$ is a tileset and $B_M$  is a function  such that for all  $x\in \{0,1\}^\ast$, $|x|=n$,  there is a seed assembly $\sigma_{M,x}$ and tile assembly system $\mathcal{V}_x =(V,\sigma_{M,x},1)$  
 such that:
\begin{enumerate}
\item $\dom{\sigma_{M,x}}\subseteq \{0,1,\ldots,B_M(n)-1\}\times \{0,1,\ldots,B_M(n)-1\}$
\item for all $\alpha\in\termasm{\mathcal{V}}$, 
   $   \dom{\alpha} \subseteq \{0,1,\ldots, t(n) B_M(n)  -1\}  \times \{0,1,\ldots, B_M(n)-1 \}  $, 
$\dom{\alpha} \cap \left( \left\{ b+1 ,b +2 , \ldots, b + B_M(n) -1 \right\} \times \left\{ 0,1,\ldots,B_M(n)-1 \right\}\right) \neq \emptyset$
where $b=B_M(n) (t(n)-1)$ and 
$\alpha$ has at least one occurrence of a special tile type $H\in V$
on the rightmost column of $\dom{\alpha}$, and nowhere else, if and only if $M$ accepts $x$.
\end{enumerate} }
\begin{theorem}\label{thm:no_finite_TM}
\thmTM
\end{theorem}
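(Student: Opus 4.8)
The plan is to prove the contrapositive via a reduction to Theorem~\ref{thm:main}: I would show that if the bounded Turing-machine simulation forbidden by the statement were possible, then the noncooperative model would be intrinsically universal, which Theorem~\ref{thm:main} rules out. For the reduction it is cleanest to work with a rich family of machines --- a fixed universal machine equipped with a step counter, so that each time bound $t$ gives a total machine $M_t$ halting in time and space $O(t)$ --- the case of weaker total machines being easier, or handled directly by the path tools discussed below. So suppose for contradiction that there are a tileset $V$ and a function $B$ such that for every machine $M'$, input $x'$ and bound $t$ there is a seed assembly over $V$ whose temperature-$1$ system decides ``does $M'$ accept $x'$ within $t$ steps'' inside a bounded rectangle, reporting the answer by placing a distinguished tile $H$ on, and only on, the rightmost column; this is exactly what conditions~(1)--(2) supply, uniformly in the input.

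I would then use $V$ as the computational core of a block-representation (macrotile) construction of a candidate intrinsically universal tileset $U$, adapting the standard intrinsic-universality construction for the cooperative model. Given a noncooperative system $\mathcal T = (T,\sigma_{\mathcal T},1)$ to simulate, the seed of $U$ is a block encoding of $\sigma_{\mathcal T}$ in which each macrotile also carries an encoding of the rule table of $T$; a macrotile, upon receiving along one of its sides the single glue presented by the relevant neighbouring macrotile, runs a seeded copy of $V$ to compute which tile(s) of $T$ may attach at the corresponding site, then presents the appropriate glues on its remaining sides and grows its interior accordingly. The reason the merely \emph{bounded} computational power of $V$ suffices here is that the simulated system is itself at temperature~$1$, so every tile of $T$ attaches by a single matching glue on a single side: the attachment decision is a single-input computation and needs \emph{no cooperative binding} --- precisely the one ingredient of standard intrinsic-universality constructions genuinely unavailable at temperature~$1$. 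Hence the construction lives at temperature~$1$, and one checks that the productions of $U$ seeded by $\langle\mathcal T,\sigma_{\mathcal T}\rangle$ are in block correspondence with those of $\mathcal T$, with matching asynchronous, parallel and nondeterministic dynamics and matching infinite productions; this $U$ contradicts Theorem~\ref{thm:main}.

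The bulk of the work --- and the step I expect to be the main obstacle --- is making that block construction genuinely correct at temperature~$1$. One must (i) pick the block size, i.e.\ the spatial scaling factor, as a function of $T$ (via the bounds $B$) large enough that every embedded $V$-computation terminates inside its block and is correctly seeded from the glue arriving along a macrotile boundary; (ii) ensure a macrotile grows to completion exactly when the corresponding tile is placeable in $\mathcal T$, with races between rival macrotiles competing for the same block --- including a partially grown macrotile meeting another --- resolved using only temperature-$1$ geometry, so that no spurious deadlocked or mismatched assemblies arise (for instance by having each macrotile ``commit'' to its identity via a single tile near its input boundary, after which its remaining growth is forced, so that collisions behave like tile collisions in $\mathcal T$); and (iii) discharge the formal simulation relation in both directions. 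Beneath all of this is why \emph{boundedness} of the assumed computation is the true crux: the obstruction isolated by Lemma~\ref{lem:onepath} and Theorem~\ref{thm:shapes} --- that a temperature-$1$ path forced to travel far horizontally within a bounded-height strip can always be pumped forever or else blocked --- is exactly why such a path cannot carry out the bounded computation that a bounded-rectangle simulator of $M$ would need, so the reduction simply makes plain that these two impossibilities are the same.
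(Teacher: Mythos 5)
Your approach is genuinely different from the paper's and, as written, has a serious gap. The paper's proof is a short, direct reduction to Theorem~\ref{thm:shapes}: it takes a Turing machine $M$ over $\{1\}^\ast$ that accepts all inputs, and \emph{modifies} the hypothesized tileset $V$ by replacing the east glue of the halting tile $H$ with a new strength-1 glue that seeds an unbounded vertical column of fresh tile types. The terminal assemblies of the modified system then have exactly the (rescaled) flipped-L shape of $\mathcal{T}_N$ from Section~\ref{sec:T}, violating Theorem~\ref{thm:shapes}. Crucially, this argument needs nothing from $V$ beyond conditions~(1)--(2) of the theorem statement, and it only reasons about the \emph{shapes} of terminal assemblies; no claim about dynamics, about simulation in the sense of Definition~\ref{def-s-simulates-t}, or about intrinsic universality is ever needed.

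Your proposal instead tries to derive intrinsic universality from the hypothesis, and the step you flag as ``the main obstacle'' is in fact a genuine gap, not a technicality to be discharged. The hypothesis gives you a system that, inside a bounded rectangle, produces a \emph{single bit} of output (the presence or absence of $H$ somewhere on the rightmost column), and does so from a pre-placed seed. A macrotile in an intrinsic-universality construction must do something much stronger: upon receiving one glue on one side, it must grow to fill an $m\times m$ block, present correctly positioned glues on up to three \emph{other} sides, tolerate nondeterministic choices of the simulated system, and resolve collisions between rival partially grown macrotiles without ever leaving diagonal fuzz or deadlocked garbage. None of that follows from ``$V$ writes a yes/no bit on the rightmost column'' --- the output format is simply wrong, and there is no cooperative binding available to combine or gate information as every standard macrotile construction does. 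Worse, the capability you would need --- reliably filling a bounded $m\times m$ region with a committed tile identity at temperature~1 --- is essentially the capability that Theorem~\ref{thm:shapes} and Lemma~\ref{lem:onepath} show does not exist, so your reduction presupposes something already forbidden by the very result you are reducing to. (The paper itself points out that 3D noncooperative systems simulate Turing machines yet still cannot simulate 2D cooperative systems, which is strong evidence that ``bounded TM simulation implies intrinsic universality'' is not a valid implication.) The paper's route avoids all of this by reducing only to the shape-level statement of Theorem~\ref{thm:shapes}, which is strictly weaker than Theorem~\ref{thm:main} and is tailored precisely so that a one-line modification of $V$ suffices.
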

\noindent The formalism simply states that there is no tile set $V$, such that when $V$ is instantiated as a noncooperative (temperature 1) tile assembly system $\mathcal{V}_x= (V,\sigma_{M,x},1)$, with an input seed assembly $\sigma_{M,x}$ (that somehow encodes a Turing machine $M$ and its input $x$), then $\mathcal{V}_x$ simulates~$M$ on $x$ within a finite rectangular region, writing a yes/no answer as $H$/``no tile'' anywhere on the rightmost column of tiles. Since the ``bounding function'' $B_M$ in the theorem statement can be arbitrarily large, the theorem holds even~if~we allow the noncooperative system to use an {\em arbitrarily large}, but finite, rectangular bounding box for the simulation. Section~\oldref{sec:intuition} gives an intuitive overview of the proof, and the actual proof is given in Section~\oldref{sec:no_finite_TM}. 

\section{Definitions and preliminaries}\label{sec:defs}
\label{definitions}

Let $\Z$ be the integers, $\mathbb{Z}^+ = \{1,2,3,\ldots \}$ and $\N = \{0,1,2,3,\ldots \}$.

When referring to the relative placements of positions in the grid graph of $\mathbb{Z}^2$, or in the plane~$\mathbb{R}^2$, we say that a position $P = (x_P, y_P)$ is \emph{to the right of} (\resp, \emph{to the left of}, \emph{above}, \emph{below}) of another position $Q = (x_Q, y_Q)$ if $x_P\geq x_Q$ (\resp $x_P\leq x_Q$, $y_P\geq y_Q$, $y_P\leq y_Q$).
This definition should not be confused with the definitions of right and left turns, nor with the definition of right-hand side and left-hand side, all of which are defined  below.

Moreover, unless stated otherwise, vectors of $\Z^2$ and $\R^2$ are \emph{column vectors}, i.e. $\vect{u} = \left(\begin{array}{c}x_u\\y_u\end{array}\right)$.

\subsection{Abstract tile assembly model}\label{sec:atam}

The abstract tile assembly was introduced by Winfree~\cite{Winf98}. In this paper we study a restriction of the abstract tile assembly model called the temperature 1 abstract tile assembly model, or noncooperative abstract tile assembly model. For definitions of the full model, as well as intuitive explanations, see for example~\cite{RotWin00,Roth01}.

A \emph{tile type} is a unit square with four sides,
each consisting of a glue \emph{type} and a nonnegative integer \emph{strength}. Let  $T$  be a a finite set of tile types.
In all sets of tile types used in this paper, we assume the existence of a well-defined total ordering that we call the {\em canonical ordering}.

The sides of a tile type are respectively called  north, east, south, and west, as shown  in the following picture:
\begin{center}
\vspace{-1ex}
\begin{tikzpicture}[scale=0.8]
\draw(0,0)rectangle(1,1);
\draw(0,0.5)node[anchor=east]{West};
\draw(1,0.5)node[anchor=west]{East};
\draw(0.5,0)node[anchor=north]{South};
\draw(0.5,1)node[anchor=south]{North};
\end{tikzpicture}
\vspace{-1ex}\end{center}

An \emph{assembly} is a partial function $\alpha:\mathbb{Z}^2\dashrightarrow T$ where $T$ is a set of tile types and the domain of $\alpha$ (denoted $\dom{\alpha}$) is connected.\footnote{Intuitively, an assembly is a positioning of unit-sized tiles, each from some set of tile types $T$, so that their centers are placed on (some of) the elements of the discrete plane $\mathbb{Z}^2$ and such that those elements of $\mathbb{Z}^2$ form a connected set of points.} 
A {\em tile} is a  pair $((x,y),t)  \in \mathbb{Z}^2 \times T$ where $(x,y)$ is a {\em position} and $t$ is a tile type. 
Hence the elements of an assembly are tiles. 
We let $\mathcal{A}^T$ denote the set of all assemblies over the set of tile types $T$. 
In this paper, two tile types in an assembly are said to  {\em bind} (or \emph{interact}, or are
\emph{stably attached}), if the glue types on their abutting sides are
equal, and have strength $\geq 1$.  An assembly $\alpha$ induces a
weighted \emph{binding graph} $G_\alpha=(V,E)$, where $V=\dom{\alpha}$, and
there is an edge $\{ a,b \} \in E$ if and only if $a$ and $b$ interact, and
this edge is weighted by the glue strength of that interaction.  The
assembly is said to be $\tau$-stable if every cut of $G$ has weight at
least $\tau$.

A \emph{tile assembly system} is a triple $\mathcal{T}=(T,\sigma,\tau)$,
where $T$ is a finite set of tile types, $\sigma$ is a $\tau$-stable assembly called the \emph{seed}, and
$\tau \in \mathbb{N}$ is the \emph{temperature}.
Throughout this paper,  $\tau=1$.

Given two $\tau$-stable assemblies $\alpha$ and $\beta$, we say that $\alpha$ is a
\emph{subassembly} of $\beta$, and write $\alpha\sqsubseteq\beta$, if
$\dom{\alpha}\subseteq \dom{\beta}$ and for all $p\in \dom{\alpha}$,
$\alpha(p)=\beta(p)$.
We also write
$\alpha\rightarrow_1^{\mathcal{T}}\beta$ if we can obtain $\beta$ from
$\alpha$ by the binding of a single tile type, that is:  $\alpha\sqsubseteq \beta$, $|\dom{\beta}\setminus\dom{\alpha}|=1$ and the tile type at the position $\dom{\beta}\setminus\dom{\alpha}$ stably binds to $\alpha$ at that position.  We say that $\gamma$ is
\emph{producible} from $\alpha$, and write
$\alpha\rightarrow^{\mathcal{T}}\gamma$ if there is a (possibly empty)
sequence $\alpha_1,\alpha_2,\ldots,\alpha_n$ where $n \in \N \cup \{ \infty \} $, $\alpha= \alpha_1$ and $\alpha_n =\gamma$, such that
$\alpha_1\rightarrow_1^{\mathcal{T}}\alpha_2\rightarrow_1^{\mathcal{T}}\ldots\rightarrow_1^{\mathcal{T}}\alpha_n$. 
A sequence of $n\in\mathbb{Z}^+ \cup \{\infty\}$ assemblies
$\alpha_0,\alpha_1,\ldots$ over $\mathcal{A}^T$ is a
\emph{$\mathcal{T}$-assembly sequence} if, for all $1 \leq i < n$,
$\alpha_{i-1} \to_1^\mathcal{T} \alpha_{i}$.

The set of \emph{productions}, or \emph{producible assemblies}, of a tile assembly system $\mathcal{T}=(T,\sigma,\tau)$ is the set of all assemblies producible
from the seed assembly $\sigma$ and is written~$\prodasm{\mathcal{T}}$. An assembly $\alpha$ is called \emph{terminal} if there is no $\beta$ such that $\alpha\rightarrow_1^{\mathcal{T}}\beta$. The set of all terminal assemblies of $\mathcal{T}$ is denoted~$\termasm{\mathcal{T}}$.  

As mentioned, in this paper $\tau=1$.  Also throughout this paper, we make the simplifying assumption that all glue types have strength 0 or 1: it is not difficult to see that this assumption does not change the behavior of the model (if a glue type $g$ has strength $s_g \geq 1$, in the $\tau =1$ model then a tile with glue type $g$ binds to a matching glue type on an assembly border irrespective of the exact value of $s_g$).

\subsection{Simulation between tile assembly systems and intrinsic universality} \label{sec:simulation_def}

To state our main result, we must formally define what it means for one tile assembly system to ``simulate'' another.  A number of definitions of simulation have been put forward for various self-assembly models~\cite{USA, IUSA, Meunier-2014, OneTile, 2HAMIU, Fekete2014}, here and in Appendix~\ref{sec:appendix:addSimDefs} we use those from~\cite{Meunier-2014}.

Let $T$ be a tile set, and let $m\in\Z^+$.
An \emph{$m$-block supertile} over $T$ is a partial function $\alpha : \Z_m^2 \dashrightarrow T$, where $\Z_m = \{0,1,\ldots,m-1\}$. Let $B^T_m$ be
the set of all $m$-block supertiles over~$T$.  The $m$-block with no domain is
said to be $\emph{empty}$.  For a general assembly $\alpha:\Z^2 \dashrightarrow T$
and $(x,y)\in\Z^2$, define $\alpha^m_{(x,y)}$ to
be the $m$-block supertile defined by $\alpha^m_{(x,y)}(x',y') = \alpha(mx+x',my+y')$ for
all $x',y'\in\{0,1,\ldots,m-1\}$.
For some tile set $S$, a partial function $R: B^{S}_m
\dashrightarrow T$ is said to be a \emph{valid $m$-block supertile
  representation} from $S$ to $T$ if for any $\alpha,\beta \in B^{S}_m$ such
that $\alpha \sqsubseteq \beta$ and $\alpha \in \dom R$, then $R(\alpha) =
R(\beta)$.

For a given
valid $m$-block supertile representation function $R$ from tile set~$S$ to tile
set $T$, define the \emph{assembly representation function}\footnote{Note that
  $R^*$ is a total function since every assembly of $S$ represents \emph{some}
  assembly of~$T$; the functions $R$ and $\alpha$ are partial to allow undefined
  points to represent empty space.}  $R^*: \mathcal{A}^{S} \rightarrow
\mathcal{A}^T$ such that $R^*(\alpha') = \alpha$ if and only if
$\alpha(A) = R\left(\alpha'^m_{A}\right)$
for all $A \in \Z^{2}$.  For an assembly $\alpha' \in \mathcal{A}^{S}$
such that $R^*(\alpha') = \alpha$, $\alpha'$ is said to map
\emph{cleanly} to $\alpha \in \mathcal{A}^T$ under $R^*$ if for all non empty
blocks $\alpha'^m_{A}$,
$A+\vec{u} \in \dom \alpha$ for some $\vec{u}\in\mathbb{Z}^2$ such that
$\|\vec{u}\|_2\leq 1$. 
In other words, $\alpha'$ may have tiles on supertile blocks representing empty
space in~$\alpha$, but only if that position is adjacent to a tile in $\alpha$.
We call such growth ``around the edges'' of~$\alpha'$ \emph{fuzz} and thus
restrict it to be adjacent to only valid supertiles, but not diagonally adjacent
(i.e.\ we do not permit \emph{diagonal fuzz}).

Below, let $\mathcal{T} = \left(T,\sigma_T,\tau_T\right)$
be a tile assembly system, let $\mathcal{S} = \left(S,\sigma_S,\tau_S\right)$ be a tile assembly system, and let $R$ be an $m$-block representation function $R:B^S_m
\rightarrow T$.

\begin{definition}
  \label{def-equiv-prod}
  \label{def:equiv-shape} We say that $\mathcal{S}$ and $\mathcal{T}$ have
\emph{equivalent terminal shapes} (under $R$) if $\{\dom{R^{*}(\alpha)} \mid \alpha\in\termasm{\mathcal S}\} = \{\dom{\beta} \mid \beta\in\termasm{{\mathcal T}_{n}}\}$. 
\end{definition}

Our main negative result on simulation (Theorem~\ref{thm:main}) shows that any claimed intrinsically universal noncooperative tileset $U$ does not satisfy Definition~\ref{def:equiv-shape} when used to simulate  certain  noncooperative tile assembly systems. Intrinsically universal tilessets must satisfy Definition~\ref{def:equiv-shape} (see Observation~\ref{obs:prodshapes}) 
 and moreover must satisfy a significantly stronger set of definitions than Definition~\ref{def:equiv-shape}; such stronger definitions are given in Appendix~\ref{sec:appendix:addSimDefs}.

\subsection{Paths and non-cooperative self-assembly}\label{sec:defs-paths}
This definition sections introduces quite a number of key definitions and concepts that will be used extensively throughout the paper. 

Let $n \in \mathbb{N}$ and let $T$ be a set of tile types.  As already defined in Section~\ref{sec:atam}, a {\em tile} is a pair $((x,y),t)$ where $(x,y) \in \mathbb{Z}^2$ is a position and $t\in T$ is a tile type. 

Intuitively, a path is a finite or one-way-infinite simple (non-self-intersecting) sequence of tiles placed on points of $\mathbb{Z}^2$ so that each tile in the sequence interacts with the previous one, or more precisely: 

\begin{definition}[Path]\label{def:path}
  A {\em path} is a (finite or infinite) sequence  $P = P_0 P_1 P_2 \ldots$  of tiles   $P_i = ((x_i,y_i),t_i) \in \mathbb{Z}^2 \times T$, such that:
\begin{itemize}
\item for all $P_j$ and $P_{j+1}$ defined on $P$ it is the case that~$t_{j}$ and~$t_{j+1}$ interact, and
\item for all $P_j,P_k$ such that $j\neq k$ it is the case that $ (x_j,y_j) \neq (x_k,y_k)$.
\end{itemize}
\end{definition}

Whenever $P$ is finite, i.e. $P = P_0P_1P_2\ldots P_{n-1}$ for some $n$, $n$ is termed the {\em length} of $P$.
By definition, paths are simple (or self-avoiding), and this fact will be repeatedly used through the paper. A \emph{position} of $P$ is an element of $\mathbb{Z}^2$ that appears in $P$ (and therefore appears exactly once), and an \emph{index} $i$ of $P$ is simply an integer in $\{0,1,\ldots,n-1\}$.
For a path $P = P_0 \ldots  P_i P_{i+1} \ldots P_j  \ldots $, we define the notation $P_{i,i+1,\ldots,j} = P_i P_{i+1} \ldots P_j$, i.e.\  ``the subpath of $P$ between indices $i$ and $j$, inclusive''.

Although a path is not an assembly, we know that each adjacent pair of tiles in the path sequence interact implying that the set of path positions forms a connected set in $\Z^2$ and hence every path uniquely represents an assembly containing exactly the tiles of the path, more formally:  
For a path $P = P_0 P_1 P_2 \ldots$ we define the set of tiles  $\pathassembly{P} = \{ P_0, P_1, P_2, \ldots\}$ which we observe is  an assembly\footnote{I.e.  $\pathassembly{P}$ is  a partial function from $\Z^2$ to tile types that is defined on a connected set.} and  we call $\pathassembly{P}$ a {\em path assembly}. 
A {\em path $P$ is said to be producible} by some tile assembly system $\calT = (T,\sigma,1)$ if the assembly $(\asm{P} \cup \sigma ) \in \prodT$ is producible, and we call such a $P$ a {\em producible path}. 
We define $$\prodpathsT = \{ P  \mid P \textrm{ is a path and } (\asm{P}\cup\sigma) \in \prodT \} $$  to be the set of producible paths of $\calT$.\footnote{Intuitively, although  producible paths  are not assemblies, any  producible path $P$ has the nice property that it encodes an unambiguous description of how to grow $\asm{P}$ from the seed $\sigma$, in ($P$) path  order, to produce  the assembly $\sigma \cup \asm{P}$.} 

For any path $P = P_0 P_1 P_2, \ldots$ and integer $i\geq 0$, we write $\pos{P_i} \in \mathbb{Z}^2$, or  $(x_{P_i},y_{P_i}) \in \mathbb{Z}^2$, for the position of $P_i$ and $\type{P_i}$ for the tile type of $P_i$. Hence if  $P_i = ((x_i,y_i),t_i) $ then $\pos{P_i} =  (x_{P_i},y_{P_i}) = (x_i,y_i) $ and $\type{P_i} = t_i$.

If two paths, or two assemblies, or a path and an assembly, share a common position we say they {\em intersect} at that position. Furthermore, we say that two paths, or two assemblies, or a path and an assembly,  {\em agree} on a position if they both place the same tile type at that position and {\em conflict} if they place a different tile type at that position. 

Note that, since the domain of a producible assembly is a connected set in $\Z^2$, and since in an assembly sequence of some TAS $\calT = (T,\sigma,1)$ each tile binding event $\beta_i \rightarrow_1^\mathcal{T} \beta_{i+1} $ adds a single node $v$ to the binding graph $G_{\beta_{i}}$ of $\beta_i$  to give a new binding graph $G_{\beta_{i+1}}$, and adds at least one  weight-1 edge joining $v$ to the subgraph $G_{\beta_i} \in G_{\beta_{i+1}}$, then for any tile $((x,y),t) \in \alpha$  in a producible assembly $\alpha \in \prodT$, there is a edge-path (sequence of edges) in the binding graph of $\alpha$ from $\sigma$ to $((x,y),t)$. From there, the following important fact about temperature 1 tile  assembly is straightforward to see.

\begin{observation}
Let $\calT = (T,\sigma,1)$ be a tile assembly system and let $\alpha \in \prodT$. 
For any tile $((x,y),t) \in \alpha$ there is a producible path $P \in \prodpathsT$ that for some $i \in \N$ contains $P_i = ((x,y),t)$.
\end{observation}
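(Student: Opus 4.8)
The plan is to extract the desired path directly from the binding-graph structure described in the paragraph immediately preceding the observation. Recall that for any producible assembly $\alpha\in\prodT$ and any tile $((x,y),t)\in\alpha$, there is an edge-path in the binding graph $G_\alpha$ from (some tile of) the seed $\sigma$ to $((x,y),t)$; this was just argued from the fact that every single-tile attachment event adds a node together with at least one weight-$1$ edge to the previous binding graph. So first I would fix such an edge-path $\pi = v_0 v_1 \ldots v_k$ in $G_\alpha$, where $v_0\in\dom{\sigma}$, $v_k = (x,y)$, and consecutive $v_j, v_{j+1}$ are adjacent in $\Z^2$ with matching strength-$1$ glues on their abutting sides.

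The mild subtlety is that an edge-path in a graph need not be a \emph{path} in the sense of Definition~\ref{def:path}, for two reasons: it might revisit a position (i.e.\ fail to be simple), and it must be \emph{producible}, meaning $\asm{P}\cup\sigma$ must itself be a producible assembly. For simplicity: $G_\alpha$ has vertex set $\dom{\alpha}\subseteq\Z^2$, so each vertex is already a distinct position; hence any graph-theoretic \emph{simple} path in $G_\alpha$ is automatically a path of tiles in the sense of Definition~\ref{def:path} — just take the tile types from $\alpha$ at those positions, and note consecutive tiles interact because the corresponding edge is present in the binding graph. And a shortest edge-path from $v_0$ to $v_k$ in $G_\alpha$ is automatically simple, which handles the first concern. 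For the producibility concern, I would invoke the Observation itself... no — rather, I would note that since $\pi$ lies entirely inside $\dom{\alpha}$ and $\alpha$ is producible, one can simply take $P = P_0 P_1 \ldots P_k$ with $P_j = (v_j, \alpha(v_j))$ and check $\asm{P}\cup\sigma\sqsubseteq\alpha$ together with the fact that $\asm P\cup\sigma$ is producible: build it by attaching the tiles $P_1, P_2, \ldots, P_k$ in this order, each attaching to its predecessor via the strength-$1$ edge of $\pi$ (one must check $P_0\in\dom\sigma$ or, more carefully, that $\asm{P}\cup\sigma$ has connected domain and is produced in order — since each $P_{j+1}$ is adjacent to $P_j$ with a matching glue, each step is a valid $\rightarrow_1^{\calT}$ step from $\sigma\cup\{P_0,\ldots,P_j\}$).

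So the key steps, in order, are: (1) fix a tile $((x,y),t)\in\alpha$ and, using the binding-graph fact, take a shortest edge-path $\pi$ in $G_\alpha$ from a seed vertex to $(x,y)$; (2) observe $\pi$ is simple (shortest $\Rightarrow$ no repeated vertices $\Rightarrow$, since vertices are positions, no repeated positions), so the induced sequence of tiles $P = P_0\ldots P_k$ with $P_j=(v_j,\alpha(v_j))$ satisfies the interaction and self-avoidance clauses of Definition~\ref{def:path}; (3) verify $P$ is producible by checking $\asm P\cup\sigma\in\prodT$, attaching $P_1,\ldots,P_k$ in path order, using that each $P_{j+1}$ binds to $P_j$ via the corresponding strength-$1$ glue of $\pi$ and that $\asm{P}\cup\sigma\sqsubseteq\alpha$; (4) conclude $P\in\prodpathsT$ and $P_k = ((x,y),t)$, so $i = k$ works.

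The only real obstacle is step (3) — making sure the path $P$ can be grown as a genuine $\calT$-assembly sequence from $\sigma$, rather than merely being a subassembly of $\alpha$; but this is immediate once one notes that the edges of $\pi$ are strength-$1$ binding edges, so each prefix $\sigma\cup\{P_0,\ldots,P_j\}$ extends to $\sigma\cup\{P_0,\ldots,P_{j+1}\}$ by a single valid tile attachment (the base case needing only that $\sigma$ itself is producible and $P_0$, as an element of $\dom\sigma$, already carries the tile type $\sigma(P_0)=\alpha(P_0)$). Everything else is bookkeeping, and indeed the excerpt already signposts this proof by saying ``the following important fact $\ldots$ is straightforward to see.''
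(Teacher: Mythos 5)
Your proof is correct and supplies precisely what the paper leaves implicit: the paragraph preceding the observation establishes the binding-graph edge-path, and the observation is then called ``straightforward to see,'' which is exactly your argument (take a shortest edge-path so that simplicity is automatic, then attach the tiles in path order to verify producibility of $\asm{P}\cup\sigma$). The one point worth tightening is the possibility that $\pos{P_j}\in\dom{\sigma}$ for some $j>0$ (in which case that attachment is not a $\rightarrow_1^{\calT}$ step); choosing the shortest edge-path from the \emph{set} $\dom{\sigma}$ rather than from one fixed seed vertex rules this out, and even without that care the assembly sequence can simply omit positions already covered by $\sigma$, so the producibility claim survives.
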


For $A,B\in\mathbb{Z}^2$, we define $\vect{AB} = B - A$ to be the vector from $A$ to $B$, and  for two tiles $P_i = ((x_i,y_i),t_i)$ and $P_j = ((x_j,y_j),t_j)$ we define $\vect{P_i P_j} = \pos{P_j} - \pos{P_i}$ to mean the vector from $\pos{P_i}=(x_i,y_i)$ to $\pos{P_j}=(x_j,y_j)$.
The translation of a path $P$ by a vector $\vect{v} \in \mathbb{Z}^2$, written $P+\vect{v}$, is  the path $Q$ where
and for all indices $i$ of $P$,  
$\pos{Q_i}=\pos{P_i}+\vect{v}$ 
and 
$\type{Q_i}=\type{P_i}$. 
As a convenient notation, for a path $PQ$ composed of subpaths $P$ and $Q$,  when we write $PQ +\vect{v} $ we mean $(PQ)+\vect{v}$ (i.e.\ the translation of all of $PQ$ by $+\vect{v}$).
The translation of a path $P$ by a vector $\vect{v} \in \mathbb{Z}^2$, written $P+\vect{v}$, is  the path $Q$ where
and for all indices $i$ of $P$,  

The translation of an assembly $\alpha$ by a vector $\vect{v}$, written $\alpha+\vect{v}$, is the assembly $\beta$ defined on the set  $\dom\alpha+\vect{v}$ as $\beta(x,y)=\alpha((x,y)-\vect{v})$ where $(x,y) \in \Z^2$.
A \emph{column} $x\in \mathbb{Z}$ is the set of all points of $\mathbb{Z}^2$ with x-coordinate $x$, and a \emph{row} $y\in\mathbb{Z}$ is the set of all points of $\mathbb{Z}^2$ with y-coordinate~$y$.

\newcommand{\torture}{\ensuremath{P_{i+1+ ( (k-i-1)\! \mod(j-i) )}+ \lfloor (k-i-1)/ (j-i) \rfloor\vect{P_iP_j}}}

Next, for a path $P$ and two indices $i,j$ on $P$, we will define a (not necessarily producible) sequence called the pumping of $P$ between $i$ and $j$. 
\begin{definition}[pumping of $P$ between $i$ and $j$]
  \label{def:pumpingPbetweeniandj}
Let $\calT = (T,\sigma,1)$ be a tile assembly system and $P\in\prodpathsT$.
We say that the \emph{``pumping of $P$ between $i$ and $j$''} is the sequence $\olq$ of elements from $\Z^2\times T$ defined by:

\begin{equation*}
\olq_k =
\begin{cases}
 P_k &\qquad \textrm{for } 0\leq k \leq i \\
 \torture &\qquad \textrm{for }  i < k  ,
\end{cases}
\end{equation*}
\end{definition}

Hence, intuitively, $\olq$ has two parts. It begins with a  finite sequence $P_{0,1,\ldots,i}$. Then appended to that, there is an infinite  sequence where the tile types appear with positions at regular intervals in the plane.    
We formalize the latter intuition in the following Lemma:
\begin{lemma}
  \label{lem:torture}
  Let $P$ be a path with tiles from some tileset $T$, $i<j$ be two integers, and $q$ be the pumping of $P$ between $i$ and $j$.
  Then for all integer $k\geq i$, $q_{k+(j-i)} = q_k + \vect{P_iP_j}$.
\end{lemma}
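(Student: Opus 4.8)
The plan is to prove the identity by a direct computation straight from Definition~\ref{def:pumpingPbetweeniandj}, treating separately the boundary index $k=i$ and the generic case $k>i$. No induction is really needed: the statement asserts the equality for each $k\geq i$ independently, and each instance unwinds to arithmetic on the mod/floor in the definition.

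First I would record the two elementary facts about Euclidean division that drive everything: for every integer $a\geq 0$ and every integer $d\geq 1$, one has $(a+d)\bmod d = a\bmod d$ and $\lfloor (a+d)/d\rfloor = \lfloor a/d\rfloor + 1$. These will be applied with $d = j-i \geq 1$ (using $i<j$) and $a = k-i-1$; the requirement $a\geq 0$ is exactly what the hypothesis $k>i$ provides, and it is also what guarantees that the \emph{second} clause of the definition governs both $q_k$ and $q_{k+(j-i)}$.

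For the generic case $k>i$: since $k>i$ and $k+(j-i)>i$, both $q_k$ and $q_{k+(j-i)}$ are given by the second clause of the definition. Writing out $q_{k+(j-i)}$, the argument of $P$ is $i+1+\bigl((k+(j-i)-i-1)\bmod (j-i)\bigr) = i+1+\bigl((k-i-1)\bmod (j-i)\bigr)$, which is precisely the argument of $P$ appearing in $q_k$; and the translation coefficient becomes $\lfloor (k+(j-i)-i-1)/(j-i)\rfloor = \lfloor (k-i-1)/(j-i)\rfloor + 1$. Hence $q_{k+(j-i)}$ is the very same tile of $P$ that occurs in $q_k$, translated by one additional copy of $\vect{P_iP_j}$, i.e.\ $q_{k+(j-i)} = q_k + \vect{P_iP_j}$. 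For the base case $k=i$: unwinding the definition gives $q_i = P_i$ from the first clause (as $0\leq i\leq i$), while $q_{i+(j-i)} = q_j$ falls under the second clause and, because $0\leq j-i-1 < j-i$, we get $(j-i-1)\bmod(j-i) = j-i-1$ and $\lfloor (j-i-1)/(j-i)\rfloor = 0$, so $q_j = P_{i+1+(j-i-1)} = P_j$. Thus the claim at $k=i$ reduces to $P_j = P_i + \vect{P_iP_j}$, which holds since $\pos{P_i} + \vect{P_iP_j} = \pos{P_j}$ by the definition of $\vect{P_iP_j}$ (and the tile type is carried along by translation).

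I do not expect a genuine obstacle here: the entire content is the modular-arithmetic bookkeeping above. The only points requiring care are (a) verifying that $k>i$ forces $k-i-1\geq 0$ so that both the division identities and the second clause of the definition legitimately apply to $q_k$ \emph{and} to $q_{k+(j-i)}$, and (b) handling the index $k=i$ on its own, since there $q_i$ is supplied by the first clause of the definition rather than the periodic second clause.
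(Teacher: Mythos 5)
Your proof is essentially the same modular-arithmetic computation as the paper's: unwind $q_{k+(j-i)}$ via the second clause of the definition, apply $(a+d)\bmod d = a\bmod d$ and $\lfloor (a+d)/d\rfloor = \lfloor a/d\rfloor + 1$, and recognise $q_k$. For $k>i$ the two arguments coincide line for line.

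One small remark on the boundary case. The paper's proof does not split off $k=i$; it silently substitutes the second-clause formula for $q_k$ in the last equality, which is only literally the definition when $k>i$. Your instinct to treat $k=i$ separately is therefore the more careful one. However, the justification you give there is not quite airtight as written: you reduce the claim to $P_j = P_i + \vect{P_iP_j}$ and say this holds because ``the tile type is carried along by translation.'' Under the paper's conventions, translation carries along the tile type of the tile being translated, so $P_i + \vect{P_iP_j}$ is the tile at $\pos{P_j}$ with type $\type{P_i}$, whereas $P_j$ has type $\type{P_j}$; these agree only if $\type{P_i} = \type{P_j}$, which is not among the hypotheses of this lemma. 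This is really a wrinkle in the lemma statement (the paper's own proof has the same tacit gap at $k=i$), and it is harmless in context: wherever Lemma~\ref{lem:torture} is actually invoked (the argument in Case~\ref{qfin} of Lemma~\ref{lem:onepath}), the relevant index satisfies $k \geq i+1$, so the $k=i$ case is never used. Still, if you want your $k=i$ case to be correct you should either assume $\type{P_i}=\type{P_j}$, or restrict the claim to positions, or simply state the lemma for $k>i$.
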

\begin{proof}
  By the definition of $q$:
  \begin{eqnarray*}
    q_{k + (j-i)} &=& P_{i+1+((k + (j-i) - i - 1) \mod (j-i))} + \left\lfloor\frac{k + (j-i) - i-1}{j-i}\right\rfloor \vect{P_iP_j}\\
        &=& P_{i+1+((k - i - 1) \mod (j-i))} + \left\lfloor\frac{k-i-1}{j-i} + 1\right\rfloor \vect{P_iP_j}\\
        &=& P_{i+1+((k - i - 1) \mod (j-i))} + \left(\left\lfloor\frac{k-i-1}{j-i} \right\rfloor + 1\right) \vect{P_iP_j}\\
        &=& q_{k}+ \vect{P_iP_j}
      \end{eqnarray*}
\end{proof}

The following  definition gives the notions of  pumpable and finitely pumpable that are used in our proofs. It is followed by a less formal but more intuitive description.

\begin{definition}[Pumpable]\label{def:pumpable}
Let $\calT = (T,\sigma,1)$ be a tile assembly system.  We say that a producible path $P \in\prodpathsT$, is {\em infinitely pumpable}, or simply {\em pumpable}, if there are two integers $i<j$ such that the pumping of $P$ between $i$ and $j$ is a producible (infinite) path, i.e. $\olq \in\prodpathsT$.
\end{definition}

In other, more intuitive, words, a producible path $P\in\prodpathsT$ is \emph{infinitely pumpable}, or simply \emph{pumpable}, if there is a producible infinite assembly $\alpha \in \prodT$ and two indices  $i<j$ on~$P$, such that~$\alpha$ contains exactly $\sigma$, then $\asm{P_{0,1,\ldots,j}}$, and then infinitely many occurrences of the ``pumpable segment'' $\asm{ P_{i,i+1,\ldots,j-1}}$ each translated by successive positive integer multiples $\{1,2,3,\ldots \}$ of $\vect{P_iP_j}$, where these occurrences do not intersect $\sigma$, $\asm{P_{0,1,\ldots,j-1}}$ or themselves, each tile along this path assembly is bound to the previous, and $\alpha$ contains no other tiles.\footnote{We remark that  this definition of ``infinitely pumpable'' intentionally excludes pumping that intersects with and {\em agrees with} the seed, $P_{0,1,\ldots j}$ or some translated (pumped) segment.}

For all $i$ such that both $P_i$ and $P_{i+1}$ are defined (i.e. for all $i\in \mathbb{N}$ if $P$ is infinite, and for all $i < |P|-1$ otherwise), we define the
``\emph{output side of $P_i$}'' to be the side of $\type{P_i}$ adjacent to $\type{P_{i+1}}$, and for all
$i>0$, we define the ``\emph{input side of $P_i$}'' to be the side  of $\type{P_i}$ adjacent to~$\type{P_{i-1}}$. The sides of $\type{P_i}$ that are neither output sides nor input sides of $P_i$ are said to be \emph{free}, as are the glues of those sides.\footnote{By this definition of input and output sides the first tile of a path does not have an input side, and the last one does not have an output side. We also remark that this definition of input/output sides is defined relative to a specific  \emph{path}, and is {\em not} a property of the tiles themselves; moreover, the tiles, including the first and last tiles, may have other glue types, i.e.\ {\em free} glue types, not used by the path.  Despite the fact free sides may have strength 1 glue types we typically ignore this in our analysis of paths---this is because our proofs typically analyse paths one at a time and thus require us to consider only the non-free tiles sides that actually bind the tiles along the path assembly and thus don't require us to make statements about free sides.}

Let $P=P_0P_1\ldots $. For $i > 0 $, we say that a \emph{right turn} (\resp \emph{left turn})  from $P$ at index~$i$  is a path with prefix $P_0P_1\ldots P_i x$ for some $x\in\mathbb{Z}^2\times T$ adjacent to $P_i$  such that orientated in the direction $\vect{P_{i-1}P_{i}}$, $\pos{x}$ is clockwise (\resp anti-clockwise) from $P_{i+1}$. More formally, let $\vec{u} = \vect{P_{i}P_{i-1}}$ (the unit column vector from $\pos{P_{i}}$ to $\pos{P_{i-1}}$), let  
$\rho	 = \bigl(\begin{smallmatrix}
0&1 \\ -1&0
\end{smallmatrix} \bigr)$, and let $\tau = (\rho\cdot \vec{u}, \rho\cdot\rho\cdot \vec{u}, \rho\cdot\rho\cdot\rho\cdot \vec{u})$, then we say that $P_0P_1\ldots P_i x$ is a right turn from $P_0P_1\ldots P_i P_{i+1}$ if $\vect{P_{i} x}$ appears after $\vect{P_{i} P_{i+1}}$ in $\tau$.

We define $\glue{i}{i+1} = (g, i)$  [i.e.\ $(g, i)$ is a pair of the form (glue type, path index)] where $g$ is the  shared glue type between consecutive tiles $P_i$ and $P_{i+1}$ on the path $P$. 
When we say ``glue'' in the context of a path, we mean a pair of the form (glue type, path index). 
We define $\type{\glue{i}{i+1}} = g$ to denote the glue type of $\glue{i}{i+1}$, 
we write $\pos{\glue{i}{i+1}} = (\pos{P_i}, \pos{P_{i+1}})$ (the ``position of glue $\glue{i}{i+1}$'') to denote the edge (of the grid graph of $\mathbb{Z}^2$) of $\glue{i}{i+1}$, oriented from $\pos{P_i}$ to $\pos{P_{i+1}}$.
Moreover, for $A,B \in \R^2$ we define $\midpoint{A}{B} = A+\frac{1}{2}\vect{AB} \in \mathbb{R}^2$ to be the  midpoint of the line segment $[A,B]\subsetneq \mathbb{R}^2$, and for a pair of tiles $P_i,P_j$ we define $\midpoint{P_i}{P_j}$ to be the midpoint 
of the line segment $[\pos{P_i},\pos{P_j}]\subsetneq \mathbb{R}^2$.

\begin{definition}[The right priority path of a set of paths]\label{def:rp}
Let $P$ and $Q$, where $P \neq Q$, be two paths 
with $\pos{P_0} = \pos{Q_0} $ and $\pos{P_1} = \pos{Q_1}$.  
Let $i$ be the smallest index such that $i \geq 0$ and  $P_i\neq Q_i$.
We say that $P$ is the {\em right priority path} of $P$ and $Q$  if either (a) $P_{0,1,\ldots, i}$ is a right turn from $Q$ or (b) $\pos{P_i}=\pos{Q_i}$ and the type of $P_i$ is smaller than the type of $Q_i$ in the canonical ordering of tile types.

For sets of paths, we extend this definition as follows: let $p_{0} \in \mathbb{Z}^{2},  p_{1} \in \mathbb{Z}^{2}$ be two adjacent positions. If $S$ is a set of paths such that for all $P\in S$, $P_0=p_0$ and $P_1=p_1$, we call the \emph{right-priority path of $S$} the path that is right-priority path of all other paths~in~$S$.

The {\em left priority path of a set of paths} is defined symmetrically: swap left for right in Definition~\ref{def:rp}.

\end{definition}

\newcommand{\range}[1]{\mathrm{range}(#1)}

\subsubsection{Curves: embedding paths in $\R^2$}

A {\em curve}, or a {\em curve in $\mathbb{R}^2$},  is defined to be a continuous function $f:[0,1] \rightarrow \R^2$. We say that~$f$ is continuous at some $x_0\in \dom{f}$ if $\forall \varepsilon, \exists \eta, \forall x, |x-x_0|\leq \eta\Rightarrow \|f(x)-f(x_0)\|_2\leq \varepsilon$, where for all $a,b\in[0,1]$, $\|(a,b)\|_2 = \sqrt{a^2+b^2}$, and we say that $f$ is continuous if and only if $f$ is continuous at all $x_0\in [0,1]$.

Intuitively, we will define the concatenation of a finite sequence of curves $f_0,f_1,\ldots,f_{k-1}$ to be  a function $F : [0,1] \rightarrow \R^2 $ that for each $i \in \{0,1,\ldots, k-1 \}$ represents $f_i$ by rescaling the  domain of  $f_i$ to be in the interval $[\frac{i}{k}, \frac{i+1}{k} ]$. Thus $F$ is defined on $[0,1]$ and has range  $\bigcup_{i=0}^{k-1} \left( \range{f_i} \right)$.
 This is defined as follows:
\begin{definition}[Concatenation of curves in $\R^2$]\label{def:concat}
Given a finite sequence of $k\in\N$ curves $f_0,f_1,\ldots,f_{k-1}$ in $\R^2$ their {\em concatenation} is the function $F: [0,1] \rightarrow \R^2$ defined for all $i$ such that $0\leq i< k$ and all $x\in \left[\frac{i}{k}, \frac{i+1}{k}\right]$ as $F(x) = f_i(x k-i) $. 
\end{definition}

For example, Figure~\ref{fig:rhs}(c) shows the concatenation of two curves:  the curve in Figure~\ref{fig:rhs}(b) and a unit-length vertical line segment. 

The following observation states that the concatenation $F$ of $k$ continuous functions $f_0,f_1,\ldots,f_{k-1}$, that have the property  $f_i(1)= f_{i+1}(0)$ for $0\leq i < k-1 $, is itself a continuous function and  although the proof is straightforward, it is worth explicitly stating since it is used extensively in this paper: 
\begin{observation}\label{obs:concatGivesCurve}
Let $f_0,f_1,\ldots,f_{k-1}$ be a finite sequence of curves in $\R^2$ that have the property that for all $i \in \{0,1,\ldots,k-2 \}$, $f_i(1)= f_{i+1}(0)$ and let  $F$ be  the concatenation of $f_0,f_1,\ldots,f_{k-1}$. Then $F$ is  a curve in $\R^2$. 
\end{observation}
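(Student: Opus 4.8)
The plan is to verify directly that the concatenation $F$, as given in Definition~\ref{def:concat}, satisfies the $\varepsilon$-$\eta$ definition of continuity at every point of $[0,1]$. First I would note that $F$ is genuinely well defined: the formula $F(x) = f_i(xk-i)$ is prescribed for $x$ in the \emph{closed} interval $[\frac{i}{k},\frac{i+1}{k}]$, so the two consecutive intervals $[\frac{i-1}{k},\frac{i}{k}]$ and $[\frac{i}{k},\frac{i+1}{k}]$ both contain the junction point $\frac{i}{k}$; there the first formula gives $f_{i-1}(k\cdot\frac{i}{k}-(i-1)) = f_{i-1}(1)$ and the second gives $f_i(k\cdot\frac{i}{k}-i) = f_i(0)$, and these agree precisely by the hypothesis $f_{i-1}(1)=f_i(0)$. (The same remark shows the endpoints $0$ and $1$ cause no ambiguity.)

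Next I would establish continuity at an \emph{interior} point $x_0$ of some subinterval $(\frac{i}{k},\frac{i+1}{k})$. Here $F$ coincides, on a neighbourhood of $x_0$, with the composition $x\mapsto f_i(xk-i)$ of the affine (hence continuous) map $x\mapsto xk-i$ with the continuous curve $f_i$; continuity of $F$ at $x_0$ then follows from continuity of $f_i$ at $x_0 k - i$ by the usual $\varepsilon$-$\eta$ chain (given $\varepsilon$, pick $\eta'$ witnessing continuity of $f_i$, then take $\eta = \eta'/k$).

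The remaining, and only slightly delicate, step is continuity at a junction point $x_0 = \frac{i}{k}$ with $1\le i\le k-1$, where $F$ is defined by two different formulas on the two sides. I would handle the left and right neighbourhoods separately: for $x \le x_0$ we have $F(x)=f_{i-1}(xk-(i-1))$, and as $x\to x_0^-$ the argument $xk-(i-1)\to 1$, so by continuity of $f_{i-1}$ at $1$ the values approach $f_{i-1}(1)$; symmetrically, for $x\ge x_0$ the values approach $f_i(0)$; and by the well-definedness remark both of these equal $F(x_0)$. Combining the two one-sided $\eta$'s (taking the minimum) yields a single $\eta$ valid on a full neighbourhood of $x_0$, and the endpoints $x_0=0$, $x_0=1$ are the one-sided special cases of this same argument applied to $f_0$ and $f_{k-1}$ respectively. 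Since $F$ is then continuous at every $x_0\in[0,1]$, it is a curve in $\R^2$ by definition. I do not anticipate any real obstacle here; the only point requiring care is bookkeeping the one-sided limits at the $k-1$ junction points and invoking the endpoint-matching hypothesis at exactly those points. (Alternatively, one could invoke the pasting lemma for the finite closed cover $\{[\frac{i}{k},\frac{i+1}{k}]\}_{i=0}^{k-1}$ of $[0,1]$, but since the paper works from the explicit $\varepsilon$-$\eta$ definition I would keep the argument self-contained.)
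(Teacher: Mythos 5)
Your proof is correct and takes essentially the same route as the paper: both establish continuity on each rescaled subinterval (you via composition with an affine map, the paper by noting rescaling preserves continuity) and then handle the junction points using the endpoint-matching hypothesis. Your version is somewhat more careful — in particular you explicitly note the well-definedness issue at the overlapping endpoints of the closed subintervals in Definition~\ref{def:concat}, which the paper glosses over — but the underlying argument is the same.
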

\begin{proof}
First note that  for each $i\in \{0,1,\ldots,k-1 \}$,  $f_i$ is a continuous function and that rescaling (shrinking) the domain of $f_i$ from $[0,1]$ to $[\frac{i}{k}, \frac{i+1}{k} ]$ preserves continuity. Secondly, since for each $i\in \{0,1,\ldots,k-2 \}$, $f_i(1)= f_{i+1}(0)$ and  $F$ contains $ \frac{i+1}{k} \rightarrow f_i(1) $, the function $F$ is continuous on the (``double-length'') interval  $[\frac{i}{k} , \frac{i+2}{k}]$. Since this holds for all such $i$, $F$ is continuous on its entire domain $[0,1]$, and thus is a curve in~$\R^2$.\end{proof}

\begin{observation}\label{obs:ClosedSimpleCurve}
Let $f_0,f_1,\ldots,f_{k-1}$ be a finite sequence of finite-length  simple curves in $\R^2$ such that for all $i \in \{0,1,\ldots,k-2 \}$, $f_i(1)= f_{i+1}(0)$, also $f_{k-1}(1) = f_0(0)$, and 
those $k$ nonempty intersections between $f_0,f_1,\ldots,f_{k-1}$  are the only nonempty intersections between them. 
Let  $F$ be  the concatenation of $f_0,f_1,\ldots,f_{k-1}$. 
Then $F$ is  a finite-length closed simple curve in $\R^2$. 
\end{observation}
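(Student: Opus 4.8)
The plan is to verify in turn the properties making up the conclusion — that $F$ is a curve, that it is closed, that it has finite length, and that it is simple — with only the last one requiring real work. First I would invoke Observation~\ref{obs:concatGivesCurve}: the hypotheses $f_i(1)=f_{i+1}(0)$ for $i\in\{0,\ldots,k-2\}$ are exactly its hypotheses, so $F$ is already known to be a curve (continuous on $[0,1]$). Closedness is immediate from Definition~\ref{def:concat}: $F(0)=f_0(0)$ and $F(1)=f_{k-1}(1\cdot k-(k-1))=f_{k-1}(1)$, and the extra hypothesis $f_{k-1}(1)=f_0(0)$ gives $F(0)=F(1)$. Finiteness of length follows because arc length is invariant under the monotone affine reparametrisations used in the concatenation and is additive across the breakpoints $i/k$, so $\mathrm{length}(F)=\sum_{i=0}^{k-1}\mathrm{length}(f_i)$, a finite sum of finite terms.

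The substance is simplicity, i.e.\ that $F$ is injective on $[0,1)$; equivalently, that $F(x)=F(y)$ with $0\le x\le y\le 1$ forces $x=y$ or $(x,y)=(0,1)$. Given such $x,y$, I would choose indices $i\le j$ with $x\in[\tfrac{i}{k},\tfrac{i+1}{k}]$ and $y\in[\tfrac{j}{k},\tfrac{j+1}{k}]$ (at a breakpoint either choice works, since the two abutting curves agree there), and set $a=xk-i$, $b=yk-j$, so that $F(x)=f_i(a)$ and $F(y)=f_j(b)$ and the common point $z:=F(x)$ lies in $f_i\cap f_j$. If $j=i$, then $f_i(a)=f_i(b)$ and injectivity of the simple curve $f_i$ gives $a=b$, hence $x=y$. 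If $j>i$, then $z$ is a common point of the distinct curves $f_i$ and $f_j$; since the only nonempty pairwise intersections are the $k$ gluing ones (each of which is the shared endpoint of the two curves involved), $f_i\cap f_j\neq\emptyset$ forces $j\equiv i+1\pmod k$, i.e.\ $j=i+1$ or $(i,j)=(0,k-1)$ (these coincide when $k=2$). In the case $j=i+1$ (with $i\le k-2$), the point $z$ must be $f_i(1)=f_{i+1}(0)$, so injectivity of $f_i$ and $f_{i+1}$ gives $a=1$, $b=0$ and $x=y=\tfrac{i+1}{k}$; in the case $(i,j)=(0,k-1)$, the point $z$ must be $f_0(0)=f_{k-1}(1)$, so injectivity gives $a=0$, $b=1$ and $(x,y)=(0,1)$. (When $k=2$, $f_0\cap f_1$ contains both gluing points and each subcase can occur, but both conclusions are among the allowed ones.) Together with $F(0)=F(1)$, this shows $F$ is a closed simple curve.

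The step I expect to be most delicate is pinning down the precise meaning of ``those $k$ nonempty intersections $\ldots$ are the only nonempty intersections.'' I would read it as: $f_i\cap f_{i+1}$ (indices mod $k$) is a single point, namely the shared endpoint; these $k$ points are distinct; and $f_i\cap f_j=\emptyset$ whenever $i,j$ are at cyclic distance at least $2$ — which is exactly what the case analysis above uses. I would also record that ``simple curve'' here is taken to mean an injective curve (the standard notion for a non-closed curve), and note that the intersection hypothesis already forces each $f_i$ to be injective: if $f_i(0)=f_i(1)$ then that point would be a shared endpoint of $f_{i-1}$ and $f_{i+1}$, contradicting distinctness of the $k$ gluing points (or, for $k=2$, collapsing $f_0\cap f_1$ to a single point), so no generality is lost. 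The remaining checks — continuity via Observation~\ref{obs:concatGivesCurve}, the endpoint identities, and length additivity — are routine.
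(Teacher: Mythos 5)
Your proof is correct and follows the same route as the paper's: invoke Observation~\ref{obs:concatGivesCurve} for continuity, the endpoint identity $f_{k-1}(1)=f_0(0)$ for closedness, additivity of length for finiteness, and the intersection hypothesis for simplicity. The paper's own proof merely asserts simplicity ``because those are the only nonempty intersections,'' whereas you carry out the case analysis ($j=i$, $j=i+1$, $(i,j)=(0,k-1)$) that actually establishes injectivity of $F$ on $[0,1)$; your reading of the intersection hypothesis — that each $f_i\cap f_{i+1}$ is exactly the shared endpoint and all other pairwise intersections are empty — is the one the observation needs, and your handling of breakpoints and of the $k=2$ degeneracy is sound.
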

\begin{proof}  The hypotheses of Observation~\ref{obs:concatGivesCurve} are satisfied hence $F$ is a curve. 
$F$ is composed of a finite set of $k$ finite length component curves so $F$ is of finite length. 
$F$ is closed because for $i\in \{0,1,\ldots,k-2 \}$,  $f_i(1)= f_{i+1}(0)$, and  $f_{k-1}(1) = f_0(0)$, and $F$ is simple because those are the only nonempty intersections between $f_0,f_1,\ldots,f_{k-1}$. 
\end{proof}

Also, we will sometimes need other curves that are not defined by paths:

\begin{definition}[Line segment]
  Let $A, B\in\R^2$. The \emph{line segment from $A$ to $B$}, which we write $[A,B]$, is the curve defined for all $x\in[0,1]$ by $f(x) = A+x\vect{AB}$.
  
\end{definition}
\begin{definition}\label{def:embed}
For any path $P$ we define $\frak{E}_P$ to be the {\em canonical embedding} of $P$ where $\frak{E}_{P}:[0,1]\rightarrow\mathbb{R}^2$, such that
for all $s$ such that $0\leq s < 1$  
$$\frak{E}_P(s) = \pos{P_{\lfloor s\cdot(|P|-1)\rfloor}} + (s\cdot(|P|-1) - \lfloor s\cdot(|P|-1)\rfloor) \vect{P_{\lfloor s\cdot(|P|-1)\rfloor}P_{\lfloor s\cdot(|P|-1)\rfloor + 1}}$$ 
and 
$$\frak{E}_P(1) = \pos{P_{|P|-1}} \, .$$
\end{definition}

Note that by Definition~\ref{def:embed}, the canonical embedding of a path  is a curve, i.e. the canonical embedding is a continuous function from $[0,1]$ to $\R^2$. Figure~\ref{fig:rhs}(a) shows an example path $P$ and Figure~\ref{fig:rhs}(b) shows its canonical embedding $\frak{E}_P$. 

This paper frequently uses the Jordan curve theorem, which is a statement about curves in~$\mathbb{R}^2$: any  simple closed (and hence finite) curve in $\R^2$ partitions $\R^2$ into exactly two connected components, a bounced one and an unbounded one.

In our proofs, we will often reason about right turns and left turns from a curve, and also about on which side of a closed simple curve is the bounded connected component. Since all of the  closed simple curves $c$ we will define will be simple finite polygons (with a finite number of finite-length sides), their left-hand side and right-hand side can be defined by taking any point $A$ on a segment of the polygonal curve $c$, not at a corner, and reasoning as follows. Since $c$ is locally a straight line around $A$, $c$ is differentiable at $A$. Also $c$ has a direction (from domain element 0 to domain element 1). The left-hand side of $c$ is therefore the connected component to the left of $A$ when positioned at $A$ orientated in the direction from $0$ to $1$ along $c$, and the right-hand side of $c$ is the connected component to the right of $A$. By defining curves within a very small distance of $c$,  we can show that the left-hand side of $c$ is connected, and the right-hand side of $c$ is also connected. For example, Figure~\ref{fig:rhs}(c) shows  such a polygonal closed simple curve $c$, with its left-hand side highlighted in grey.

\begin{figure}[t]
\begin{center}
\begin{adjustbox}{max width=1\textwidth}
\tikzset{cross/.style={cross out, draw=black, minimum size=2*(#1-\pgflinewidth), inner sep=0pt, outer sep=0pt},
cross/.default={1.5pt}}
\newcommand{\rhsFigScale}{0.64}
\newcommand{\rhsFigSpace}{9ex}

\begin{tikzpicture}[scale=\rhsFigScale,font=\small]

\begin{scope}

\foreach \i in {-2,...,4} 
  \foreach \j in {-1,...,5} {
     \draw (\i,\j) node[cross,rotate=45] {} ;
}

\draw[fill=black!10!white](2.5,3.5) rectangle (3.5,4.5); 
\foreach \i in {-1,...,2} {
   \draw[fill=white](\i-0.5,3.5) rectangle (\i+0.5,4.5);
}
\draw[fill=white](-1.5,2.5) rectangle (-0.5,3.5);
\draw[fill=white](-1.5,1.5) rectangle (-0.5,2.5);
\draw[fill=white](-0.5,1.5) rectangle (0.5,2.5);
\draw[fill=white](-0.5,0.5) rectangle (0.5,1.5);
\draw[fill=white](0.5,0.5) rectangle (1.5,1.5);  
\draw[fill=white](1.5,0.5) rectangle (2.5,1.5);  
\draw[fill=white](1.5,-0.5) rectangle (2.5,0.5);  
\foreach \i in {0,...,2} {
   \draw[fill=white](2.5,\i-0.5) rectangle (3.5,\i+0.5);
}
\draw[fill=white](2.5,2.5) rectangle (3.5,3.5);

\draw(3,4)node[]{$P_0$};
\draw(2,4)node[]{$P_1$};
\draw(1,4)node[]{$P_2$};
\draw(0,4)node[]{$P_3$};
\draw(-1,4)node[]{$P_4$};
\draw(-1,3)node[]{$P_5$};
\draw(-1,2)node[]{$P_6$};
\draw(0,2)node[]{$P_7$};
\draw(0,1)node[]{$P_8$};
\draw(1,1)node[]{$P_9$};
\draw(2,1)node[]{$P_{10}$};
\draw(2,0)node[]{$P_{11}$};
\draw(3,0)node[]{$P_{12}$};
\draw(3,1)node[]{$P_{13}$};
\draw(3,2)node[]{$P_{14}$};
\draw(3,3)node[] (RHSa) {$P_{15}$};

\draw(-1.5,-0.6)node{(a)};

\end{scope}
\begin{scope}[xshift=9cm]

\foreach \i in {-2,...,4} 
  \foreach \j in {-1,...,5} {
     \draw (\i,\j) node[cross,rotate=45] {} ;
}

\draw[very thick,->](3,4)--(-1,4)--(-1,2)--(0,2)--(0,1)--(2,1)--(2,0)--(3,0)--(3,3);
\draw(-1.3,4.3)node[]{$\frak{E}_P$};
\draw(3.8,4.3)node{$\frak{E}_P(0)$};
\draw(4,2.7)node{$\frak{E}_P(1)$};

\draw(-1.5,-0.6)node{(b)};

\end{scope}
\begin{scope}[xshift=18cm]

\draw[fill=black!10!white](3,4)--(-1,4)--(-1,2)--(0,2)--(0,1)--(2,1)--(2,0)--(3,0)--cycle;

\foreach \i in {-2,...,4} 
  \foreach \j in {-1,...,5} {
     \draw (\i,\j) node[cross,rotate=45] {} ;
}

\draw[very thick,->](3,4)--(-1,4)--(-1,2)--(0,2)--(0,1)--(2,1)--(2,0)--(3,0)--(3,4);

\draw[->](1.5,1)--(1.5,1.5);
\draw(1.5,1.5)node[anchor=south]{\tiny LHS};

\draw[->](0,1.5)--(-0.5,1.5);
\draw(-0.5,1.5)node[anchor=east]{\tiny RHS};

\draw[->](-0.5,4)--(-0.5,3.5);
\draw(-0.5,3.5)node[anchor=north]{\tiny LHS};

\draw(-1.25,4.25)node[]{$c$};
\draw(3.0,4.4)node{$c(0)\!=\!c(1)$};

\draw(-1.5,-0.6)node{(c)};
\end{scope}

\end{tikzpicture}
\end{adjustbox}
\end{center}
\caption{Crosses denote points in $\Z^2$.  
(a)~A path $P=P_0,P_1,\ldots, P_{15} $. 
(b)~The curve $\frak{E}_P$,~the canonical embedding of $P$  in $\R^2$.
(c)~The curve $c$ defined as the concatenation  of  $\frak{E}_P$ and the unit-length line segment $[\pos{P_{15}}, \pos{P_{0}} ] = [\frak{E}_P(1),\frak{E}_P(0)]$. Two endpoints of $c$ are identical, $c(0)=c(1)$, moreover $c$  is a closed simple curve in $\R^2$.  A single thick arrowhead on $c$ indicates the direction of~$c$ and the position of~$c$'s identical endpoints. 
By the Jordan curve theorem, such a closed simple curve in $\mathbb{R}^2$ partitions the plane into two connected components exactly one of which is `bounded' (has finite area).  
The `right-hand side' (RHS) and `left-hand side' (LHS) of~$c$ are indicated by small thin arrows with the LHS being the bounded component  highlighted in grey.}
\label{fig:rhs}
\end{figure}


\section{A family of tile assembly systems \texorpdfstring{$\calT_{N}$}{T\_N}}
\label{sec:T}

Definition~\ref{def:tn} defines a (very simple) infinite family of noncooperative tile assembly systems $\{\mathcal{T}_N \mid N\in \Z^+  \}$. The proof of our main theorem shows that there is no tile set $U$ that for all~$N$ simulates $\calT_{N}$.
\begin{definition}\label{def:tn}
For each $N\in \Z^+$, let ${\cal T}_N=(T_N,\sigma_N,1)$ be the tile assembly
system that assembles the infinite assembly shown in Figure~\ref{fig:tn}.
\end{definition}
\begin{figure}[ht]
\begin{subfigure}[t]{0.42\textwidth}
\begin{center}
\includegraphics[scale=0.8,viewport=0 50 200 180]{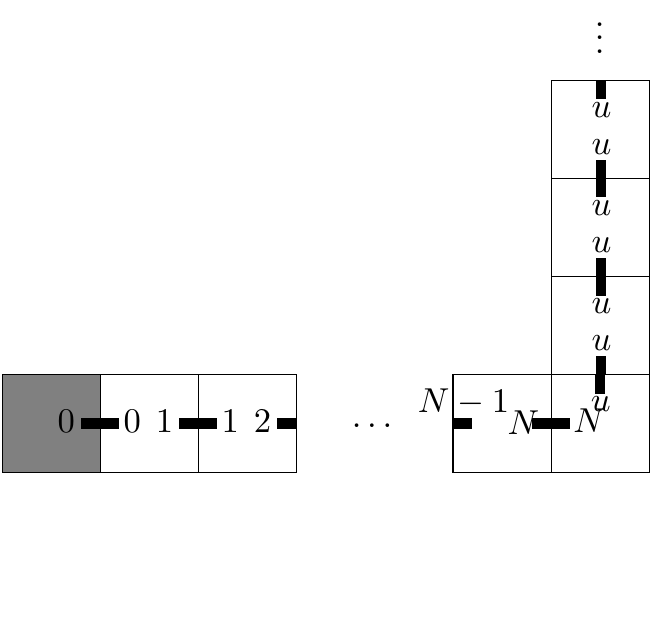}
\end{center}
\vspace{-2ex}
\caption{}
\end{subfigure}
\begin{subfigure}[t]{0.25\textwidth}
\begin{center}
\begin{tikzpicture}[scale=0.2]\draw[fill=gray](0,0)rectangle(1,1);\draw(1,0)rectangle(2,1);\draw(2,0)rectangle(3,1);\draw(3,0)rectangle(4,1);\draw(4,0)rectangle(5,1);\draw(5,0)rectangle(6,1);\draw(6,0)rectangle(7,1);\draw(7,0)rectangle(8,1);\end{tikzpicture}
\end{center}
\vspace{-2ex}
\caption{}
\end{subfigure}
\begin{subfigure}[t]{0.25\textwidth}
\begin{center}
\begin{tikzpicture}[scale=0.2]\draw[fill=gray](0,0)rectangle(1,1);\draw(1,0)rectangle(2,1);\draw(2,0)rectangle(3,1);\draw(3,0)rectangle(4,1);\draw(4,0)rectangle(5,1);\draw(5,0)rectangle(6,1);\draw(6,0)rectangle(7,1);\draw(7,0)rectangle(8,1);\draw(8,0)rectangle(9,1);\draw(9,0)rectangle(10,1);\draw(10,0)rectangle(11,1);\draw(11,0)rectangle(12,1);\draw(12,0)rectangle(13,1);\draw(13,0)rectangle(14,1);\draw(13,1)rectangle(14,2);\draw(13,2)rectangle(14,3);\draw(13,3)rectangle(14,4);\draw(13,4)rectangle(14,5);\draw(13,5)rectangle(14,6);\end{tikzpicture}
\end{center}
\vspace{-2ex}
\caption{}
\end{subfigure}
\vspace{2ex}
\caption{(a) The ``flipped-L'' TAS ${\cal T}_N = (T_N,\sigma_N,1)$, that
  deterministically assembles a single, infinite assembly. 
  $T$ contains $N+3$ tile types as shown, and the grey tile is the seed $\sigma_N$ which is placed at the origin $(0,0)$. 
  $\calT_N$ grows from the seed, distance $N+1$ to the east, and then grows infinitely to the north.
  Hence~$\calT_{N}$ builds an infinite path assembly. 
  Productions of an example such tile assembly system for $N=12$ are shown: (b) $\mathcal{T}_{12}$ after~7 tile additions, and (c) $\mathcal{T}_{12}$ after~18 tile additions.}
\label{fig:tn} 
\end{figure}


\section{Intuition behind the proofs of Theorems~\ref{thm:main} and~\ref{thm:no_finite_TM} }\label{sec:intuition}

We begin with a description of the high-level intuition behind the proof of our main result, Theorem~\ref{thm:main}. One of the main difficulties of this result is that for {\em any} finite number of non-cooperative tile assembly systems, there is in fact a single non-cooperative simulator for all of them:  simply let the tiles of the simulator be the disjoint union of all tilesets of the simulated systems. Moreover, it is known~\cite{Meunier-2014} that in 3D, there is a tileset, operating at temperature 1 (i.e.\ noncooperative), that simulates all non-cooperative tile assembly systems. 
Hence our proof is going to crucially make use of the fact that any claimed simulator tileset is of finite size, and must work in the plane. 

First, we assume, for the sake of contradiction, that there is a single tileset~$U$, that simulates all noncooperative (or temperature 1) tile assembly systems. Hence, in particular, the tileset $U$ simulates the class of systems $\{ \calT_{N} \mid N \in \Z^+ \}$ described in Section~\ref{sec:T} and shown in Figure~\ref{fig:tn}. Hence, for all $\calT_{N}$, there is a tile assembly system $\mathcal{U}_{\calT_N} = (U,\sigma_{\calT_N},1)$ and scale factor $m \in \Z^+$ such that~$\mathcal{U}_{\calT_N}$ simulates~$\mathcal{T}_N$.\footnote{\label{ft:dropTN}Later in the paper we drop the $\calT_N$ subscripts from $\calU_{\calT_N}, \sigma_{\calT_N}$ and simply say that $\mathcal{U} = (U,\sigma,1)$ simulates~$\calT_N$.}
In particular, by the definition of simulation (Section~\ref{sec:simulation_def}), and in particular by Definition \ref{def-equiv-prod}, this implies that for each terminal producible assembly~$\alpha$ of~$\mathcal{T}_N$ there is a producible assembly $\alpha' \in  \mathcal{U}_{\calT_N}$ that represents the shape of $\alpha$, and vice-versa.  Figure~\ref{fig:fuzz} shows what such a simulation should look like. 
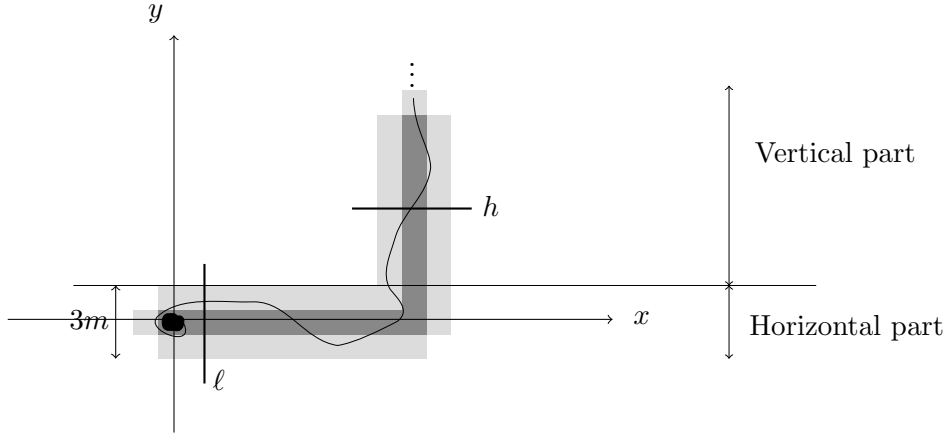
\begin{figure}[ht]
\begin{center}
\definecolor{cdcdcdc}{RGB}{220,220,220}
\definecolor{c888888}{RGB}{136,136,136}

\begin{tikzpicture}[y=0.80pt, x=0.8pt,yscale=-1, inner sep=0pt, outer sep=0pt]
  \path[fill=cdcdcdc,rounded corners=0.0000cm] (76.8060,925.5563) rectangle
    (88.3338,937.0841);
  \path[cm={{0.0,-1.0,1.0,0.0,(0.0,0.0)}},fill=cdcdcdc,rounded corners=0.0000cm]
    (-1040.8345,65.2782) rectangle (-937.0841,99.8617);
  \path[fill=cdcdcdc,rounded corners=0.0000cm] (-38.4722,1017.7787) rectangle
    (88.3339,1052.3622);
  \path[fill=c888888,rounded corners=0.0000cm] (-38.4722,1029.3065) rectangle
    (76.8060,1040.8343);
  \path[cm={{0.0,-1.0,1.0,0.0,(0.0,0.0)}},fill=c888888,rounded corners=0.0000cm]
    (-1040.8345,76.8060) rectangle (-937.0841,88.3338);
  \path[fill=cdcdcdc,rounded corners=0.0000cm] (-50.0000,1029.3065) rectangle
    (-38.4722,1040.8343);
  \path[draw=black,fill=black,line join=miter,line cap=butt,line width=0.160pt]
    (-30.9043,1031.1180) .. controls (-32.5996,1031.2815) and (-34.6209,1030.4043)
    .. (-35.8793,1032.4593) .. controls (-36.7214,1033.8833) and
    (-36.6579,1036.1831) .. (-35.9497,1037.6568) .. controls (-34.5528,1039.7144)
    and (-32.2998,1038.8339) .. (-30.4734,1038.9900) .. controls
    (-28.6290,1039.7808) and (-26.4661,1038.3713) .. (-26.3916,1035.6051) ..
    controls (-25.6623,1033.3515) and (-27.3921,1031.6568) .. (-28.9447,1032.3035)
    .. controls (-29.6452,1032.0494) and (-30.1909,1031.3277) ..
    (-30.9043,1031.1180) -- cycle;
  \path[draw=black,line join=miter,line cap=butt,line width=0.184pt]
    (-26.3916,1035.6051) .. controls (-20.1347,1048.4574) and (-45.9018,1039.0194)
    .. (-38.4722,1032.2798) .. controls (-27.5683,1022.3885) and
    (-7.9477,1026.0153) .. (7.1390,1025.5205) .. controls (22.8242,1025.6345) and
    (30.8530,1043.9199) .. (46.4222,1046.1576) .. controls (54.4934,1045.0199) and
    (67.2427,1038.6536) .. (74.8159,1034.6730) .. controls (81.1923,1029.5636) and
    (74.7898,1023.8999) .. (71.3265,1019.6223) .. controls (66.8782,1012.0347) and
    (71.3840,1003.0655) .. (73.6493,995.2122) .. controls (77.0974,983.2586) and
    (89.2618,974.8344) .. (90.4456,961.8822) .. controls (90.2830,952.8454) and
    (82.8369,944.5338) .. (82.3090,929.1187);
  \path[draw=black,line join=miter,line cap=round,<->=,miter limit=4.00,line
    width=0.222pt] (231.5278,1052.3622) -- (231.5278,1017.7792);
  \path[draw=black,line join=miter,line cap=butt,miter limit=4.00,line
    width=0.222pt] (-78.4722,1017.7792) -- (272.7351,1017.7792);
  \path[fill=black] (-80.364609,1038.3201) node[above right] (text3854) {$3m$};
  \path[fill=black] (79.75,923.86218) node[above right] (text3183) {$\vdots$};
  \path[draw=black,line join=miter,line cap=butt,line width=0.800pt]
    (53.2782,981.3622) -- (109.8617,981.3622);
  \path[fill=black] (115,984.86218) node[above right] (text3226) {$h$};
  \path[draw=black,line join=miter,line cap=butt,line width=0.800pt]
    (-16.5000,1007.5705) -- (-16.5000,1064.1539);
  \path[fill=black] (-12.5,1067.1539) node[above right] (text3226-8) {$\ell$};
  \path[draw=black,line join=miter,line cap=rect,->=,miter limit=4.00,line
    width=0.400pt] (-109.2710,1033.8087) -- (176.4565,1033.8087);
  \path[draw=black,line join=miter,line cap=rect,->=,miter limit=4.00,line
    width=0.400pt] (-30.9043,1087.1180) -- (-30.9043,899.3904);
  \path[fill=black,line join=miter,line cap=butt,line width=0.800pt]
    (-43.133514,893.6167) node[above right] (text4156) {$y$};
  \path[fill=black,line join=miter,line cap=butt,line width=0.800pt]
    (186.32265,1036.0988) node[above right] (text4160) {$x$};
  \path[draw=black,line join=miter,line cap=round,<->=,miter limit=4.00,even odd
    rule,line width=0.222pt] (231.5280,1017.7792) -- (231.5280,923.3152);
  \path[fill=black,line join=miter,line cap=butt,line width=0.800pt]
    (241.20605,1043.8977) node[above right] (text4166) {Horizontal part};
  \path[fill=black,line join=miter,line cap=butt,line width=0.800pt]
    (243.71216,961.85254) node[above right] (text4170) {Vertical part};
  \path[draw=black,line join=miter,line cap=round,<->=,miter limit=4.00,even odd
    rule,line width=0.400pt] (-58.4722,1017.7792) -- (-58.4722,1052.3622);

\end{tikzpicture}
\end{center}
\caption{An example (claimed) simulation of some $\calT_N$, $N\in \Z^+$, by $\calU_{\calT_N}= (U,\sigma_{\calT_N},1)$. The seed assembly of $\calU_{\calT_N}$ is shown in black on the left. The simulator $\calU_{\calT_N}$ is free to place tiles anywhere in the {\em simulation zone}, defined to be the union of the dark gray (tile-representing supertiles) and light gray (fuzz) regions, and an example valid path of tiles (that has a \hs prefix) is shown in those regions as a thin black curve. The scale factor is $m$ and $\calU_{\calT_N}$ places $N+2$ horizontal tiles,  hence the width of the dark gray region is $m(N+2)$. Our main result is proven by showing that any simulator that claims to produce a valid terminal assembly must also produce one with an {\em incorrect shape}---one that either (a)~places tiles outside of the simulation zone (e.g.\ because some path is infinitely pumpable to the right or can be modified to grow upwards at an incorrect location that is  outside of the vertical part), or (b) is finite (e.g.\ does not grow infinitely upwards). The seed supertile contains the seed (in black) and the origin~$(0,0)$. The lines $\ell$ (at x-coordinate $|U|(3m+1)+m+1$) and $h$ (at y-coordinate $10m$) are used in many of the proofs.}
 \label{fig:fuzz}
\end{figure}

The proof is then broken into two stages. 
First, in Section~\oldref{sec:onepath}, we consider any path  that can grow in the claimed simulator long enough so it places at least one tile on a horizontal line at some height $h$. An example such path is  shown in Figure~\ref{fig:fuzz}. 
For any such path, we let $P$ denote its shortest prefix that contains exactly one tile at height $h=10m$ and where that tile is $P$'s last tile. Any assemblable path of this form is called ``\hs'' (see Definition~\ref{def:hsuccpaths}). 
We show that any  \emph{$h$-successful} path $P$ can be modified in two different ways:
 (1) $P$ is modified so that it grows at an invalid position, either by
  (1.1) \emph{infinite pumping}, by which we mean $P$ can be modified to give another path $P'$ which grows to form an assembly that is infinitely long horizontally to the right, and hence is not a simulation of the ``flipped-L'' shaped $\mathcal{T}_N$, or
  modifying it so that (1.2) the vertical arm grows displaced to the left or to the right (and hence is grown in the wrong position) or (1.3)  grows something in the wrong position by finite pumping (repeating a path segment that gets blocked).
 (2) $P$ is blocked, by which we mean {\em another} assembly (a path $R$) can be grown that blocks~$P$ (forcing~$P$ to be of finite length) 
 and thus the simulator can not rely on the growth of path~$P$ to obtain a valid simulation. 
(1) and (2) together show that no {\em single path} can carry out a valid simulation, which is stated formally in Theorem~\oldref{thm:onepath}. This can be regarded as our first technical tool for handling long paths.

However this leaves open the possibility that many paths could simultaneously and nondeterministically grow and interact in a way that carries out a valid simulation. 
In particular, in~(2) above, if we first block a path $P^1$ using another path $R^1$, and then  attempt  to block another path $P^2$ using a path~$R^2$, then $R^2$ may itself get blocked by $R^1$,
hence $P^2$ could possibly ``escape'' to become $h$-successful and carry out the simulation. 
In the second part of the proof, in Section~\oldref{sec:manypaths}, we consider the ways that \hs paths $P$, and the paths $R$ we use to block them, can interact. Based on this, we provide an explicit growth order for paths, such that if we apply (2) to each such path in turn, we  guarantee that the claimed simulator  fails.
This can be regarded as our second technical tool:  a method to control the interactions of multiple long paths.

The proof of our second main result Theorem~\ref{thm:no_finite_TM} is very short and uses the following intuition.	
We note that, for the sake of contradiction, if for each  time-bounded Turing machine $M$ there is a noncooperative tile set~$V$ that simulates   $M$ using arbitrarily large 2D space, bounded by a rectangle, then $V$ could be easily modified to build a family of  assemblies that are of the same (scaled) shape as  systems~$\calT_{N}$, 
thus contradicting our result that there is no such tile set.  Hence we give a reduction from a Turing machine prediction problem to the problem of simulating the class of all $\calT_{N}$ systems, a new technique for proving negative results about computation in tile assembly.
   
\section{Pumping or blocking any sufficiently wide and tall path}
\label{sec:onepath}

Suppose, for the sake of contradiction, that there is a tile set $U$, such that for $N=10|U|$, there is a scale factor $m \in \mathbb{Z}^{+}$ and a seed $\sigma$ for which ${\mathcal U} = (U,\sigma,1)$ simulates ${\mathcal T}_N$ at scale factor~$m$.  
Then it is the case that there is a path $P$ such that the simulator produces the assembly $\asm{P}$ and  $\asm{P}$ grows up to meet a line at height $h=10m$ above the horizontal arm of the assembly (see Figure~\ref{fig:fuzz}).
In this section, we prove that we can use the existence of such a ``\hs'' path $P$ to force the simulator to produce another assembly $\alpha\in\prodasm{\calU}$ that either (I) illegally places tiles outside of the simulation zone or else (II) is finite and blocks $P$ from growing (i.e.~$\alpha \cup \asm{P} $ is not producible from the assembly $\alpha \in \prodasm{U}$). 
Case (I) contradicts that $\calU$ simulates $\calT$ and we are done with the proof in that case. Case (II) gives a way to block the arbitrary single path $P$, but may not prevent other paths from growing and this is handled later  in Section~\ref{sec:manypaths}.

Since  showing impossibility of simulating a single  tile assembly system ${\mathcal T}_N$ is sufficient to prove our result, we will set $N=10|U|$ in the rest of the paper, and call $\sigma$ the seed and $m$ the scale factor at which our claimed simulator ${\mathcal U}=(U,\sigma,1)$ simulates~${\mathcal T}_N = {\mathcal T}_{10|U|}$.

\subsection{Glue visibility:  definitions and basic results about $V_{P}^{+}$ and $V_{P}^{-}$}\label{secvisibility}
\label{sec:vis}
We begin this section with a definition that will be used in  every proof in the rest of the paper, and is illustrated in Figure~\ref{fig:visible}. This notion of visibility is {\em from the south}. 

\begin{definition}[Visible glue (from the south)]
Let $P$ be a path and let $i,j\in\{0,1,\ldots,|P|-2\}$. Let  $\glue{i}{i+1} = (g, (x,y), d)$ where $g$ is the interacting glue type between $P_i$ and $P_{i+1}$, $(x,y) = \midpoint{P_i}{P_{i+1}}$ and $d = \vect{P_iP_{i+1}}$\footnote{Note that since paths are defined to be simple, for any $(x,y)\in\Z^2$, there is at most one index $i$ on $P$ such that $(x,y)$ is the position of $\glueP{i}{i+1}$.}. We say that $\glue{i}{i+1}$ is \emph{visible relative to $P$} if (a) there is a vertical  ray $r$ in $\mathbb{R}^2$ that starts from
$\midpoint{P_i}{P_{i+1}}\in \mathbb{R}^2$
and goes infinitely to the south  and (b) for all $j\neq i$,  $r$~does not contain  $\midpoint{P_j}{P_{j+1}}\in \mathbb{R}^2$.
\end{definition}

We write $V_P$ for the set of (glue type, path index) pairs of $P$ that are visible relative to some path $P$. We define $V^+_P$ (\resp, $V^-_P$) to be the set of those (glue type, path index) pairs of $V$  that are on {\em east} (\resp, west) of output sides of tiles of $P$.  Since, on a path $P$, each tile has exactly one output side, it follows that $(V^+_P,V^-_P)$ is a partition of $V_P$.

For a path $P$ and a vertical line $l = \{ (x,y) \mid y \in R \}$ at some position $x \in \{ \frac{x'}{2} \mid x' \in \Z \}$  such that  $l \cap \ep \neq \emptyset$,  we write ``$P$'s visible glue on $l$''  to mean the unique glue $\gglue{P}{i}{i+1}$ such that $\midpoint{P_i}{ P_{i+1}} \in\R^2$ is on $l$ and for all $j$, such that $0\leq j \leq |P|-2$, $j\neq i$ it is not the case that  $\midpoint{P_j}{P_{j+1}}$ has a smaller $y$-coordinate on $l$ than $\midpoint{P_i}{P_{i+1}}$. Moreover ``the position of the visible glue of $P$ on $l$'' is the point $\midpoint{P_i}{P_{i+1}}$. For a  glue $\gglue{P}{i}{i+1}$ that is visible relative to $P$ we write ``the position of $\gglue{P}{i}{i+1}$'' to mean the point $\midpoint{P_i}{P_{i+1}} \in\R^2$.
For a glue $\gglue{P}{i}{i+1}$ that is visible relative to $P$ we write ``the visibility ray of $\gglue{P}{i}{i+1}$'' to mean the infinite ray starting  at the point $\midpoint{P_i}{P_{i+1}} \in\R^2$ and going vertically to the south.

It is important to note that ``visible glue'' is defined  relative to a particular path. Hence, we often say that a glue of $P$ is ``visible  \emph{relative to $P$}'', although when $P$ is clear from the context, we may simply say that a glue is ``visible''.
 Figure~\ref{fig:visible} gives examples of glues on a path that are and are not visible. Intuitively, note that although a visibility ray starts at the ``position'' of a visible glue on the path $P$, the ray is permitted to ``touch'' other free glues (Section~\ref{sec:defs-paths}) on tiles of~$P$ (recall free glues are by definition not on the path since they are not input/output glues along the path, hence they are not visible  nor can they prevent some other glue from being visible). See Figure~\ref{fig:visible} for examples.

\begin{figure}
\begin{center}
\includegraphics[angle=270,scale=0.65]{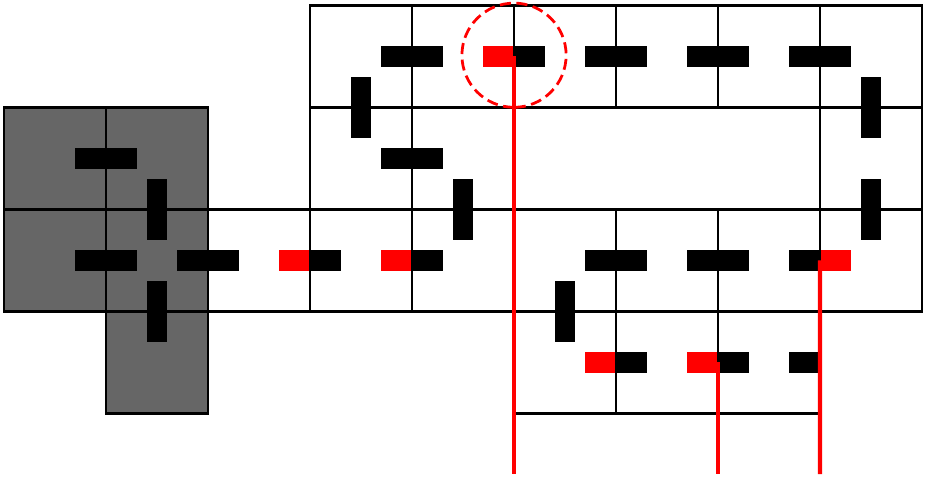}
\end{center}
\caption{A short path with six glues that are visible (from the south). Growth of this path begins from the seed which is shown in grey.  Glues here are coloured not by their type, but by their visibility (and direction).  Five red-black glues are in $V^{+}_{P}$ and one black-red glue (to the right) is in $V^{-}_{P}$ giving a total of six glues in $V_{P}$. None of the other glues (black-black, or black-) are  visible. Three red rays to the south testify to the visibility of three of the glues, and are called the ``visibility rays'' of these glues.}
\label{fig:visible}
\end{figure}

In the following lemma we show that for a path $P$ that has both  $V^+_P$ glues and  $V^-_P$ glues, the  $V^+_P$ glues are positioned to the right of its $V^-_P$ glues. The intuition behind the proof is as follows: suppose otherwise, then draw a finite length curve $c$ in $\R^2$ that runs from a $V_P^+$ glue along the positions of $P$ to a $V_P^-$ glue, then includes segments of the visibility rays  (to the south) of these two glues and finally includes a horizontal line that lies far below $P$ and runs between those two visibility rays. It turns out that $c$ is simple and closed and thus cuts the plane into an unbounded component and a bounded component $\mathcal{C}$. It turns out that $P$ must both go inside $\mathcal{C}$ for a time and then  leave $\mathcal{C}$, but since $c$ was defined using curves that $P$ never crosses we get a contradiction.  This rough intuition is more rigorously formalised in the proof. 
\begin{lemma}[$V^+_P$ glues are to the right of $V^-_P$ glues]
\label{lem:leftright}
Let $P \in  \prodpathsU$ be an assemblable path of any tile assembly system~$\calU=(U, \sigma, 1)$, and let $i$ and~$j$ be two indices such that $\glue{i}{i+1} \in V^+_P$, $\glue{j}{j+1} \in V^-_P$. If $P$ has at least one tile $P_k$ 
where $k > \max(i,j)$ and $\pos{P_k}$ is strictly to the right of, or strictly above, all tiles of $P_{\min(i,j),\min(i,j)+1,\ldots,\max(i,j)}$, then $x_{P_i}> x_{P_j}$. In other words, the glues of $V^+_P$ are all to the right of the glues of $V^-_P$. \end{lemma}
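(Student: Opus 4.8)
The plan is to argue by contradiction: suppose $x_{P_i} \le x_{P_j}$ (and note $x_{P_i} \ne x_{P_j}$ since a vertical line hits the midpoint of at most one path edge when the midpoints sit at the same $x$-coordinate — here the two visible glues have midpoints at half-integer $x$-coordinates and if equal would force $i=j$, contradicting that one lies in $V_P^+$ and the other in $V_P^-$). So assume $x_{P_i} < x_{P_j}$. Without loss of generality take $i < j$ (the case $j < i$ is symmetric, swapping the roles of the two visibility rays below and using that the path later escapes the interval $P_{\min,\ldots,\max}$ in either direction). I would then build an explicit closed curve $c$ as a concatenation (Definition~\ref{def:concat}, Observation~\ref{obs:ClosedSimpleCurve}) of four pieces: (1) the sub-curve of the canonical embedding $\frak{E}_{P}$ running from $\midpoint{P_i}{P_{i+1}}$ forward along $P$ to $\midpoint{P_j}{P_{j+1}}$ — more precisely from a point just past $P_{i}$ to a point just before $P_{j+1}$, so it lies strictly between the two ray footpoints in $x$; (2) a downward vertical segment from $\midpoint{P_j}{P_{j+1}}$ to some very low height $y = -M$; (3) a horizontal segment at height $-M$ running leftwards to $x = x_{P_i} + \tfrac12$; (4) an upward vertical segment from $(x_{P_i}+\tfrac12, -M)$ back up to $\midpoint{P_i}{P_{i+1}}$. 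Here $M$ is chosen larger than the absolute value of every $y$-coordinate appearing in $P_{0,1,\ldots,k}$, so the horizontal piece lies strictly below all of $P$.

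The key verification is that $c$ is a simple closed curve. Pieces (2) and (4) are initial segments of the visibility rays of $\glue{j}{j+1}$ and $\glue{i}{i+1}$ respectively; by the definition of "visible (from the south)" these rays meet no other edge-midpoint of $P$, and — crucially — the rays meet $P$'s embedding only at their own footpoints, because the embedding $\frak{E}_P$ passes through a point on the open ray only if some edge of $P$ crosses the ray, which would put an edge-midpoint on the ray or make $P$ self-intersect. (Here one uses that $P$ is simple, and that a unit edge crossing a vertical half-integer line has its midpoint on that line.) So pieces (2),(4) are disjoint from piece (1) except at the shared endpoints, they are disjoint from each other since $x_{P_i}+\tfrac12 < x_{P_j}+\tfrac12$, and piece (3) lies strictly below everything else; piece (1) is simple because $P$ is simple. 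Invoking Observation~\ref{obs:ClosedSimpleCurve}, $c$ is a finite closed simple curve, so by the Jordan curve theorem it bounds a region $\mathcal{C}$.

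Finally I would derive the contradiction from the tile $P_k$, $k > j = \max(i,j)$, which is strictly to the right of, or strictly above, every tile of $P_{i,i+1,\ldots,j}$. The point is that $\pos{P_k}$ lies outside $\mathcal{C}$: the closed region $\mathcal{C}$ is contained (in $x$) between $x_{P_i}+\tfrac12$ and $x_{P_j}+\tfrac12$ and (in $y$) between $-M$ and $\max_y P_{i,\ldots,j}$, so a point strictly right of or strictly above all of $P_{i,\ldots,j}$ escapes this box and hence escapes $\mathcal{C}$. On the other hand, the portion of the path immediately after $P_j$ — the first step $\vect{P_j P_{j+1}}$, which is the direction $d$ stored in the visible glue — leaves $\midpoint{P_j}{P_{j+1}}$ heading into $\mathcal{C}$ (this is the orientation bookkeeping: the visibility ray goes south, so just above the midpoint on the $P_{j+1}$ side we are on the bounded side of $c$; one checks the sign using that $\glue{j}{j+1}\in V_P^-$, i.e. it sits on the west output side, versus $\glue{i}{i+1}\in V_P^+$ on the east output side). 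So the sub-path $P_{j},P_{j+1},\ldots,P_k$ starts inside $\mathcal C$ and ends outside it, hence its embedding must cross the boundary curve $c$. But $P_{j+1},\ldots,P_k$ cannot meet piece (1) (that's a sub-path of $P$ and $P$ is simple), cannot meet pieces (2),(4) (visibility of the two glues), and cannot meet piece (3) (it lies below all of $P$). Contradiction, so $x_{P_i} > x_{P_j}$.

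The main obstacle I expect is the orientation/sign bookkeeping in the last paragraph: carefully pinning down that the path really does enter the bounded component $\mathcal C$ right after $P_j$ (and not the unbounded one), which is where the hypotheses $\glue{i}{i+1}\in V_P^+$ and $\glue{j}{j+1}\in V_P^-$ (as opposed to the reverse) are actually used, together with the choice of which endpoints of pieces (1)–(4) to use so that $c$ winds the "right way". The rest — simplicity of $c$, the Jordan curve application, and the box containment of $\mathcal C$ — is routine given the visibility definition and Observation~\ref{obs:ClosedSimpleCurve}.
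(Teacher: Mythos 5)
Your proposal is essentially the paper's proof. The four-piece curve $c$ you describe is the paper's six-piece concatenation, with your piece (1) collapsing the paper's three middle pieces $[\midpoint{P_i}{P_{i+1}},\pos{P_{i+1}}]$, $\frak{E}_{P_{i+1,\ldots,j}}$, $[\pos{P_j},\midpoint{P_j}{P_{j+1}}]$ into one; and the Jordan-curve contradiction (show $\pos{P_{j+1}}$ is inside $\mathcal{C}$, show $\pos{P_k}$ is outside, observe $P$ cannot cross $c$) is the same. The paper handles $i<j$ and $j<i$ as two explicit cases rather than via your WLOG, because the point that lands inside $\mathcal{C}$ changes ($P_{j+1}$ vs.\ $P_{i+1}$) and the curve is traversed in a different order, but the arguments are mirror images as you say.

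Two corrections in the parts you labelled routine. First, your claim that $\mathcal{C}$ is contained in $x\in[x_{P_i}+\tfrac12,\,x_{P_j}+\tfrac12]$ is not right: visibility only controls the vertical \emph{rays going south from the two midpoints}, not the whole columns, so the embedded sub-path $\frak{E}_{P_{i+1,\ldots,j}}$ can venture strictly to the right of $x_{P_j}$; the correct horizontal bound is $\max_{l\in\{i,\ldots,j\}} x_{P_l}$ (as you correctly used $\max_y$ for the vertical bound), and the conclusion that a strictly-rightward $P_k$ lies outside $\mathcal{C}$ still follows. Second, your preliminary dismissal of $x_{P_i}=x_{P_j}$ is a non sequitur: the two midpoints sit at $x_{P_i}+\tfrac12$ and $x_{P_j}-\tfrac12$, which are at \emph{distinct} $x$-coordinates when $x_{P_i}=x_{P_j}$, so the ``equal midpoint $x$ would force $i=j$'' argument never triggers. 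The paper is similarly terse at exactly this line, and the Jordan-curve machinery you both build only handles the strict-inequality case, so you should be aware that this step is not actually established by what you wrote.
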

\begin{proof}
  We have argued above that $(V_P^+, V_P^-)$ is a partition of $V_P$.
  Hence since $\glue{i}{i+1}\in V^+_P$, $\glue{j}{j+1}\in V^-_P$ it is the case that $i \neq j$ which in turns implies $x_{P_i}\neq x_{P_j}$.   Assume, for the sake of contradiction, that there are two integers $i$ and $j$ that satisfy the lemma hypotheses (in particular that $\glue{i}{i+1}\in V^+_P$, $\glue{j}{j+1}\in V^-_P$) but where $x_{P_i}<x_{P_j}$.

Since $\glue{i}{i+1}\in V^+_P$, there is a vertical ray $l_i$ that starts from the point $\midpoint{P_i}{P_{i+1}} \in \R^2$, goes infinitely to the south, and  does not contain any point of the canonical embedding $\frak{E}_P$ of $P$ in $\R^2$ besides $\midpoint{P_i}{P_{i+1}}$. 
Likewise, let $l_j$ be the vertical ray to the south starting from the point $\midpoint{P_j}{P_{j+1}}$, and observe that since $\glue{j}{j+1}\in V^-_P$, then $l_j$ does not contain any point of $\frak{E}_P$ besides $\midpoint{P_j}{P_{j+1}}$.  
We will use $l_i$ and $l_j$ to define the three line segments $\rev{s_i}$, $s_j$ and $s_{i,j}$. First let $y_0$ be a y-coordinate below all of $P$, for instance
$$y_0 = \min\{y_{P_0},y_{P_{1}}, \ldots, y_{P_{|P-1|}}\} - 10$$
We then define:
\begin{equation}
  \label{eq:si}
  s_i = [ \midpoint{P_i}{P_{i+1}} , (  X(\midpoint{P_i}{P_{i+1}} ) , y_0)] \subsetneq l_i
\end{equation}
where $X(x,y) = x$ and we note that $s_i \subsetneq l_i  $ since $X(\midpoint{P_i}{P_{i+1}} ) , y_0)$ is directly to the south of $\midpoint{P_i}{P_{i+1}}$.  Let $\rev{s_i}$ be the ``reverse direction'' of the line segment $s_i$, more precisely:
\begin{equation}
  \label{eq:revsi}
  \rev{s_i} = [  (  X(\midpoint{P_i}{P_{i+1}} ) , y_0), \midpoint{P_i}{P_{i+1}}]
\end{equation}
Also, define the line segment 
\begin{equation}
  \label{eq:sj}
  s_j = [ \midpoint{P_j}{P_{j+1}} , (  X(\midpoint{P_j}{P_{j+1}} ) , y_0)] \subsetneq l_j
\end{equation}
And the line segment 
\begin{equation}
  \label{eq:sji}
  s_{j , i} = [  (X(\midpoint{P_j}{P_{j+1}} ) , y_0), (  X(\midpoint{P_i}{P_{i+1}} ), y_0)]
\end{equation}

There are two (almost identical) cases, (a) and (b).
\paragraph{Claim (a): $i<j$.} 
We let $c$ be the {\em concatenation} (see Definition~\ref{def:concat})  of the following six curves: 
\begin{equation*}
  \begin{split}
    &  \rev{s_i} \\
    & [\midpoint{P_i}{P_{i+1}}, \pos{ P_{i+1}} ] \\
    &  \frak{E}_{P_{i+1, i+2, \ldots, j} } \\
    &  [\pos{ P_j}, \midpoint{P_{j}}{P_{j+1}}] \\
    & s_j\\
    &s_{j,i}
  \end{split}
\end{equation*}
By Observation~\ref{obs:ClosedSimpleCurve},  $c$ is a finite closed simple curve\footnote{\label{ft}To see this one needs to check that the components of $c$ satisfy the hypotheses of Observation~\ref{obs:ClosedSimpleCurve}. Less formally but more intuitively, it can be seen that $c$ is  of finite length because its components are, also $c$ is closed as each of the components are  curves and their endpoints are pairwise equal in such a way to satisfy closure, and finally $c$ is simple since the components are simple and their only intersection is at their endpoints  in the order they are given.}  and thus defines a bounded  connected component $\mathcal C$.  (See Figure~\ref{fig:rightof} for an example.)

We claim that $\pos{P_{j+1}}$  is inside $\mathcal{C}$. First, note that  $\pos{P_{j+1}}$ is to the left of $\pos{P_{j}}$ (because $\glueP{j}{j+1}\in V_P^-$), and therefore, $\pos{P_{j+1}}$ is to the left of $l_j$. But since $x_{P_i} < x_{P_j}$ by assumption,    $X(\midpoint{P_i,P_{i+1}}) \neq   X(\midpoint{P_j,P_{j+1}}  $ by visibility,  and $P_{j+1}$ is unit distance to the left of $P_j$, 
then $\pos{P_{j+1}}$ is in fact between $l_i$ and $l_j$ (i.e.\ to the right of $l_i$ and to the left of $l_j$). Secondly, $\pos{P_{j+1}}$  is above the horizontal line $s_{j,i}$. 
Consider the ray $r$ at x-coordinate $X( \pos {P_{j+1}}) +  0.25$ that comes from the south
and stops at position $ p = \left( X( \pos{ P_{j+1} }) + 0.25,  Y(\pos{ P_{j+1} }) \right) $. Observe that $r$ crosses $c$ at the segment $s_{j,i}$ exactly once, and crosses $c$ nowhere else, and that due to the visibility of $\glueP{j}{j+1}$ we get that~$r$  does not intersect $\frak{E}_{P_{i+1, i+2, \ldots, j} }$, and that by its definition  $r$ is positioned away from the other four components of $c$. 
 Furthermore, since $c$ does not cross the short line segment $[\pos{ P_{j+1}},  p]$ then starting at the point $p $, one can walk (westwards) along the segment $[\pos{ P_{j+1}},  p]$ to the  point $\pos{P_{j+1}}$, without crossing $c$. 
 Hence $\pos{P_{j+1}}$  is inside $\mathcal{C}$ as claimed. 

Since, from the lemma statement, $P$ has at least one tile $P_k$ after $P_j$ (i.e.\ $k>j$)  positioned to the right of, or above, all tiles of $P_{i,i+1,\ldots,j}$, then $P$ has tiles positioned outside of $\mathcal{C}$ after~$P_{j}$.
But since $P$ is a path, it does not cross itself. Therefore, $\frak E_{P_{j+1,j+2,\ldots,|P|-1}}$
must leave $\mathcal{C}$, thus crossing $c$ by crossing at least one of $ l_i$ or $l_j$ and contradicting that both~$\glue{i}{i+1}$ and $\glue{j}{j+1}$ are visible. Thus $x_{P_i} > x_{P_j}$.

\begin{figure}
  \begin{center}
    \includegraphics[scale=1.1]{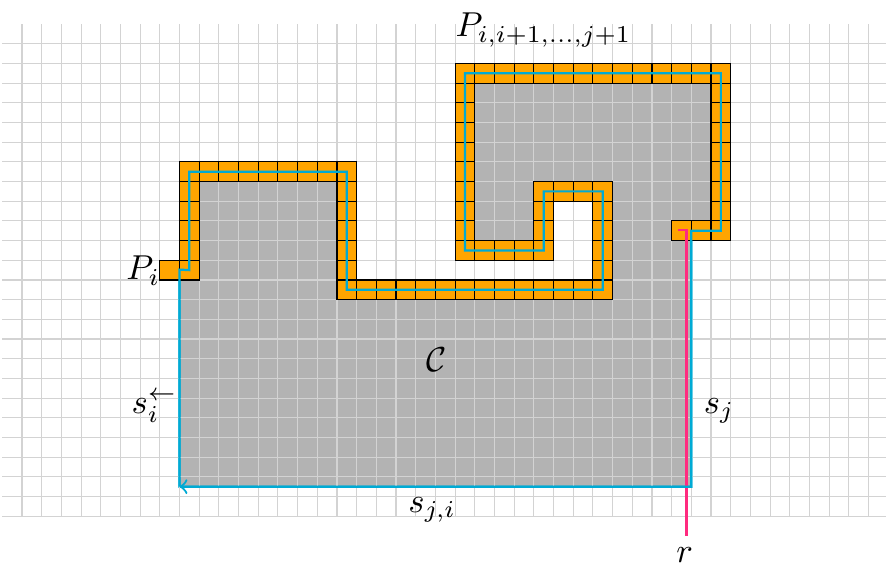}
  \end{center}
  \caption{An example illustrating the proof of Lemma~\ref{lem:leftright}, Case (a):  $i<j$. In this figure, $\glueP{i}{i+1}\in V_P^+$, $\glueP{j}{j+1}\in V_P^-$, and $\pos{P_i}$ is to the left of $\pos{P_j}$. The canonical embedding of the path  $P_{i+1, i+2,\ldots, j}$ in $\R^2$ and  five line segments are used to define a closed simple curve $c$ (light blue) and thus a  bounded connected component $\mathcal{C}$ of $\R^2$ (two of the line segments are unlabelled in this figure). Then, the  ray $r$ is used to verify that $\pos{P_{j+1}}$ is inside $\mathcal{C}$. Thus $P_{j+1,j+2,\ldots}$ ``grows inside'' $\mathcal{C}$, but must leave $\mathcal{C}$ at some point leading to a contradiction as $P$ can not cross any of the components that define $c$.}
  \label{fig:rightof}
\end{figure}

\paragraph{Case (b): $j<i$.}
We define the curve $c$ as the  concatenation (see Definition~\ref{def:concat})  of the following six curves:
\begin{equation*}
  \begin{split}
    & s_j^\leftarrow \\
    & [\midpoint{P_j}{P_{j+1}}, \pos{P_{j+1}}] \\
    & \frak{E}_{P_{j+1, j+2, \ldots, i}} \\
    & [\pos{P_i}, \midpoint{P_{i}}{P_{i+1}}]\\
    & s_i\\
    & s_{j,i}^\leftarrow
  \end{split}
\end{equation*}

By Observation~\ref{obs:ClosedSimpleCurve}, 
$c$ is a finite closed simple curve\footnoteref{ft}, and thus defines a bounded connected component~$\mathcal{C}$. (See an example in Figure~\ref{fig:rightof2}.)

By a similar\footnote{Specifically, consider the vertical ray $r$ with x-coordinate $X(\pos{P_{i+1})}-0.25$  that comes from the south
and  stops at position $ p = \left( X( \pos{ P_{i+1} }) - 0.25,  Y(\pos{ P_{i+1} }) \right) $. Observe that $r$ crosses $c$ exactly once at the segment $s_{j,i}^\leftarrow$,  that due to the visibility of $\glueP{i}{i+1}$ we get that $r$  does not intersect $\frak{E}_{P_{j+1, j+2, \ldots, i} }$, and that by its definition~$r$ is positioned away from the other four components of $c$. Furthermore, since $c$ does not cross the short line segment $[p, \pos{ P_{i+1}}]$ then starting at the point $p $ one can walk (eastwards) along the segment $[p, \pos{ P_{i+1}}]$ to the  point $\pos{P_{i+1}}$, without crossing $c$. Thus $\pos{P_{i+1}}$ is inside $\mathcal{C}$. See Figure~\ref{fig:rightof2} for an example of the argument.} argument as Case (a) (where $i<j$), $\pos{P_{i+1}}$ is inside $\mathcal{C}$.
Since $P$ has at least one tile $P_k$ after $P_i$ (i.e.\ $k > i$) to the right of or above $P_{j,j+1,\ldots,i}$, then $P_{i+1,i+2,\ldots,|P|-1}$ has tiles positioned outside $\mathcal{C}$.
But since $P$ is a path, it does not cross itself, nor does it place a tile below $s_{j,i}^\leftarrow$. Therefore, $\frak E_{P_{i+1,i+2,\ldots,|P|-1}}$ must leave $\mathcal{C}$, thus crossing $c$ by crossing at least one of $ l_i$ or $l_j$ and contradicting that both contradicting that both~$\glue{i}{i+1}$ and $\glue{j}{j+1}$ are visible. Thus $x_{P_i} > x_{P_j}$. 
\begin{figure}
  \begin{center}
    \includegraphics[scale=1.1]{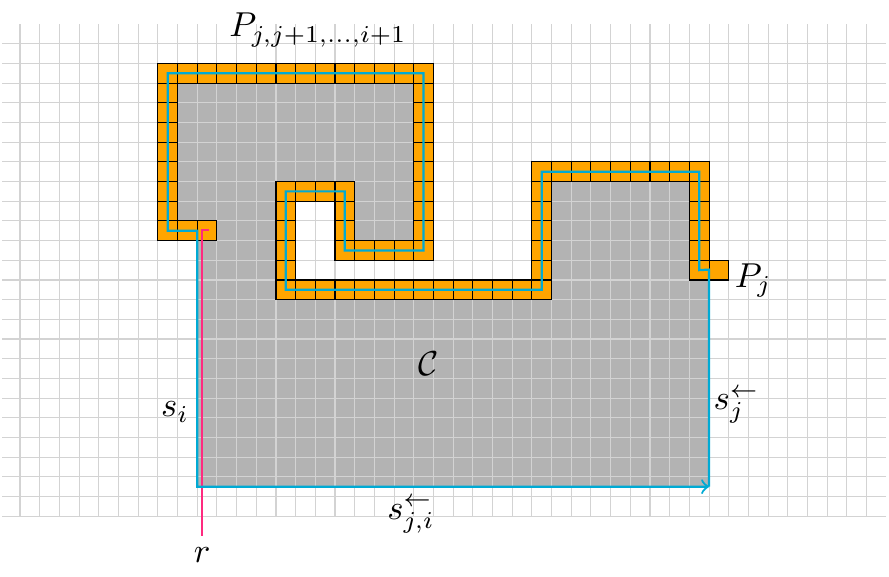}
  \end{center}
  \caption{An example illustrating the proof of Lemma~\ref{lem:leftright}, Case (b): $j<i$. In this figure, $\glueP{i}{i+1}\in V_P^+$, $\glueP{j}{j+1}\in V_P^-$, and $\pos{P_i}$ is to the left of $\pos{P_j}$ (i.e.\ $P_{j, j+1,\ldots}$ begins at the rightmost orange tile). The canonical embedding of the path $P_{j+1, j+2,\ldots, i}$ in $\R^2$  and five line segments are used to define a closed simple curve $c$ (light blue) and thus a  bounded connected component $\mathcal{C}$ of $\R^2$ (two of the line segments are unlabelled in this figure).  Then, the ray $r$ is used to verify that $\pos{P_{i+1}}$ is inside $\mathcal{C}$. Thus $P_{i+1,i+2,\ldots}$ ``grows inside'' $\mathcal{C}$, but must leave $\mathcal{C}$ at some point which leads to a contradiction as $P$ can not cross any of the components that define $c$.}
  \label{fig:rightof2}
\end{figure}
\end{proof}

Lemma~\ref{lem:leftright} was our first statement proven using visibility, and the following lemma~(\ref{lem:columns}) similarly exploits the technique of using rays given by visibility, an embedded path, and other line segments to enclose a connected component of the plane,  enabling  us to reason about how visible glues are organised along a path.
Together Lemmas~\ref{lem:leftright} and~\ref{lem:columns} and Corollary~\ref{cor:columns} give properties of how tiles of $V_P^+$ and $V_P^-$ are arranged in the plane. This is formalized in greater detail in Lemma~\ref{lem:visibility-setup}.

\begin{lemma}[$V^+_P$ glue order preserves path order]
\label{lem:columns}
  Let $P \in \prodpathsU$  be a path producible by any tile assembly system $\calU=(U,\sigma,1)$, and let $i,j$ be such that  $\glueP{i}{i+1} \in V_P^+$, $\glueP{j}{{j+1}} \in V_P^+$ and $x_{P_i}<x_{P_j}$. If $P$ has at least one tile $P_k$ after $P_i$ and $P_j$ (i.e.\ $i < k, j < k$), where $\pos{P_k}$ is to the right of, or above, all tiles of $P_{\min(i,j),\min(i,j)+1,\ldots,\max(i,j)}$, then $i<j$.
\end{lemma}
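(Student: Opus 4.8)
The plan is to argue by contradiction in exactly the same style as Lemma~\ref{lem:leftright}: assume $\glueP{i}{i+1}\in V_P^+$, $\glueP{j}{j+1}\in V_P^+$, $x_{P_i}<x_{P_j}$, yet $j<i$ (the path visits the right-hand visible glue first and the left-hand visible glue later). I would first record, from the definition of $V_P^+$, that both glues lie on east-output sides, so $\pos{P_{i+1}}$ is immediately to the right of $\pos{P_i}$ and $\pos{P_{j+1}}$ is immediately to the right of $\pos{P_j}$, and that $i\neq j$ hence $x_{P_i}\neq x_{P_j}$ as hypothesised. Let $l_i$ and $l_j$ be the southward visibility rays of the two glues, each meeting $\frak{E}_P$ only at its own starting midpoint. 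Pick $y_0$ strictly below all of $P$ (e.g. $y_0 = \min_k y_{P_k} - 10$) and form the line segments $s_i\subsetneq l_i$, $s_j\subsetneq l_j$ from the midpoints down to row $y_0$, and the horizontal segment $s_{j,i}$ joining their feet, exactly as in equations~\eqref{eq:si}--\eqref{eq:sji}.

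Since we are assuming $j<i$, the relevant ``inner'' subpath is $P_{j+1,j+2,\ldots,i}$, so I would define the closed curve $c$ as the concatenation of the six curves
\begin{equation*}
  s_j^\leftarrow,\quad [\midpoint{P_j}{P_{j+1}},\pos{P_{j+1}}],\quad \frak{E}_{P_{j+1,j+2,\ldots,i}},\quad [\pos{P_i},\midpoint{P_i}{P_{i+1}}],\quad s_i,\quad s_{j,i}^\leftarrow ,
\end{equation*}
and invoke Observation~\ref{obs:ClosedSimpleCurve} to see it is a finite closed simple curve bounding a connected component $\mathcal{C}$ (simplicity uses that the two rays meet $\frak{E}_P$ only at their midpoints and that the inner subpath stays above $y_0$). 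The key geometric step is to show that $\pos{P_{i+1}}$ lies \emph{inside} $\mathcal{C}$: here $P_{i+1}$ is the tile immediately to the \emph{right} of $P_i$, which sits on segment $l_i$ — the right-hand boundary of the region — so I would shoot a short probing ray upward (or the short horizontal segment $[\pos{P_i},\pos{P_{i+1}}]$ plus a downward ray just to the right of $l_i$) to count a single crossing of $c$ along $s_{j,i}^\leftarrow$ and no other crossings, using visibility of $\glueP{i}{i+1}$ to rule out crossings of the embedded inner subpath, and the choice of $y_0$ and the geometry of $l_i,l_j$ to rule out the remaining components. This mirrors the footnoted argument in Case~(b) of Lemma~\ref{lem:leftright} almost verbatim.

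Finally, the hypothesis supplies a tile $P_k$ with $k>i\ (>j)$ whose position is strictly to the right of, or strictly above, every tile of $P_{j,j+1,\ldots,i}$; since all of $\mathcal{C}$ lies weakly within the horizontal span $[x_{P_i},x_{P_j}]$ and weakly below $\max_{j\le\ell\le i} y_{P_\ell}$, that tile $P_k$ lies outside $\mathcal{C}$. Hence $\frak{E}_{P_{i+1,i+2,\ldots,|P|-1}}$ starts inside $\mathcal{C}$ and must exit it, so it crosses $c$; but $P$ is simple so it cannot cross $\frak{E}_{P_{j+1,\ldots,i}}$ or the two short connecting segments, and it stays above $s_{j,i}^\leftarrow$, so the only way out is across $l_i$ or $l_j$ — contradicting visibility of $\glueP{i}{i+1}$ or $\glueP{j}{j+1}$. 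Therefore $i<j$. The main obstacle is purely bookkeeping: verifying carefully that the probing ray for ``$\pos{P_{i+1}}$ is inside $\mathcal{C}$'' meets $c$ exactly once, since unlike Lemma~\ref{lem:leftright} both bounding rays now carry \emph{east}-facing (same-sign) glues, so I must be a little careful about which side of $l_i$ the tile $P_{i+1}$ falls on relative to the interior of $\mathcal{C}$; modulo this the whole argument is a transcription of the two cases already carried out in the proof of Lemma~\ref{lem:leftright}.
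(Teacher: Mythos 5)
Your plan is correct and follows the paper's proof essentially verbatim: assume $j<i$, form the same six-piece closed curve (the paper's $c$ is yours traversed in the opposite direction), show $\pos{P_{i+1}}$ is inside the bounded component via a southern probing ray at x-coordinate $X(\pos{P_{i+1}})-0.25$, and conclude that $P$ cannot exit without violating simplicity, the choice of $y_0$, or visibility. The ``main obstacle'' you flag is only a small orientation slip, not a genuine issue: since $x_{P_i}<x_{P_j}$, the ray $l_i$ is the \emph{left} boundary of $\mathcal{C}$ (not the right), $\pos{P_{i+1}}$ lies strictly to its right (not on it), and the probing ray ``just to the right of $l_i$'' is precisely the paper's construction.
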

\begin{proof}
  Define $s_i$ as in Equation~\eqref{eq:si}, $\rev{s_i}$ as in Equation~\eqref{eq:revsi}, $s_j$ as in Equation~\eqref{eq:sj} and $s_{j,i}$ as in Equation~\eqref{eq:sji}.

  First assume, for the sake of contradiction, that $j<i$.
  We let $c$ be the {\em concatenation} (see Definition~\ref{def:concat})  of the following six curves: \begin{equation*}
    \begin{split}
      &  \rev{s_i} \\
      & [\midpoint{P_i}{P_{i+1}}, \pos{P_{i}}] \\
      &  \frak{E}_{P_{j+1, j+2, \ldots, i} }^\leftarrow \\
      &  [\pos{P_{j+1}}, \midpoint{P_{j}}{P_{j+1}}] \\
      & s_j\\
      &s_{j,i}
    \end{split}
  \end{equation*}

  By Observation~\ref{obs:ClosedSimpleCurve},  $c$ is a simple closed curve, hence partitions $\R^2$ into two connected components, exactly one of which is bounded. Let $\mathcal{C}$ be that bounded connected component.
  By a similar argument\footnote{Here, $r$ comes from the south with  x-coordinate $X(\pos{P_{i+1}})-0.25$, crosses $s_{j,i}$ (thus entering $\mathcal{C}$), ends at y-coordinate $Y(\pos{P_{i+1}})$, and then we walk from that end point to $\pos{P_{i+1}}$ staying inside $\mathcal{C}$.}   using ray $r$ as in the proof of Lemma~\ref{lem:leftright},  
  $P_{i+1,i+2,\ldots,|P|-1}$ starts inside~$\mathcal{C}$.
   However, since $P_{i+1,i+2,\ldots,|P|-1}$ has at least one tile to the right of, or above $P_{j,j+1,\ldots,i}$, then $P_{i+1,i+2,\ldots,|P|-1}$ cannot be entirely inside $\mathcal{C}$.
  Therefore, $P$ needs to cross the border of $\mathcal{C}$, contradicting either the fact that $P$ is simple, that paths do not place tiles below their lowest tile, or that $\glueP{i}{i+1}$ or $\glueP{j}{j+1}$ are visible. \end{proof}

By flipping the use of ``+'' and ``-'', and ``left'' and ``right'', in the statement of the previous lemma, we immediately get the following corollary:
\begin{corollary}[$V^-_P$ glue order preserves path order]
  \label{cor:columns}
  Let $P \in \prodpathsU$  be a path producible by any tile assembly system $\calU=(U,\sigma,1)$,   and let $i,j$ be such that  $\glueP{i}{i+1} \in V_P^-$,  $\glueP{j}{{j+1}} \in V_P^-$ and $x_{P_i}>x_{P_j}$. If $P$ has at least one tile $P_k$ after $P_i$ and $P_j$ (i.e.\ $i<k$, $j<k$), where $\pos{P_k}$  is to the right of, or above all tiles of $P_{\min(i,j),\min(i,j)+1,\ldots,\max(i,j)}$, then $i<j$.
\end{corollary}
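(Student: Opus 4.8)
The plan is to obtain Corollary~\ref{cor:columns} from Lemma~\ref{lem:columns} by a left--right reflection of the plane, which is the content of the author's remark about ``flipping `+' and `-', and `left' and `right'\,''. Concretely, I would apply the isometry $\phi\colon(x,y)\mapsto(-x,y)$, sending the path $P$ to the sequence $\phi(P)$ obtained by replacing each position $\pos{P_i}=(x_i,y_i)$ by $(-x_i,y_i)$ and each tile type by its mirror image (same north and south glues, east and west glues interchanged). First I would check that $\phi(P)$ is again a path --- self-avoiding because $P$ is, with consecutive tiles still sharing a glued edge --- and that the proof of Lemma~\ref{lem:columns} never uses producibility of the path nor anything about the tiles beyond their positions and the glue-sharing relation, so that the lemma and its proof apply to $\phi(P)$ even though $\phi(P)$ need not lie in $\prodpathsU$. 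Then I would record the elementary invariances: $\phi$ carries a southward vertical ray to a southward vertical ray, hence preserves south-visibility; it carries the east output side of a tile to a west output side, hence interchanges $V_P^{+}$ and $V_P^{-}$ index for index; it leaves all path indices fixed, hence preserves the relation ``$i<j$''; and it turns $x_{P_i}>x_{P_j}$ into $x_{\phi(P)_i}<x_{\phi(P)_j}$.

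Feeding the hypotheses of Corollary~\ref{cor:columns} through $\phi$, the reflected path satisfies $\glueP{i}{i+1},\glueP{j}{j+1}\in V_{\phi(P)}^{+}$ and $x_{\phi(P)_i}<x_{\phi(P)_j}$, and still has a tile with index $k>i,j$. Assuming for contradiction that $j<i$ (permissible since $x_{P_i}>x_{P_j}$ forces $i\neq j$), I would then want to quote Lemma~\ref{lem:columns} for $\phi(P)$ to get $i<j$ and transport this back to $P$. The one point that is \emph{not} self-dual under $\phi$, and the one I would write out carefully, is the $P_k$ clause: ``$\pos{P_k}$ strictly to the right of, or strictly above, all tiles of $P_{\min(i,j),\ldots,\max(i,j)}$'' is carried by $\phi$ to ``$\pos{\phi(P)_k}$ strictly to the \emph{left} of, or strictly above, that subpath'', which is not literally the hypothesis of Lemma~\ref{lem:columns}. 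So instead of citing the lemma as a black box, I would observe that its proof uses the $P_k$ clause only to conclude that $\pos{P_k}$ lies outside the bounded component $\mathcal{C}$ enclosed by the closed simple curve $c$ built there (from $\rev{s_i}$, a short connector, the reversed subpath embedding, a short connector, $s_j$, and $s_{j,i}$); since $c$ is contained in the half-plane below which no tile of $P$ lies and its horizontal extent is bounded by the subpath together with the two vertical descending segments, a point strictly above, or strictly to the left of, the entire subpath lies outside $\mathcal{C}$ just as surely as one strictly above or to the right of it. Hence the ``left-or-above'' variant of Lemma~\ref{lem:columns} holds with the same proof, it gives $i<j$ for $\phi(P)$, and therefore $i<j$ for $P$.

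The hard part --- really the only part requiring attention --- will be exactly this last bookkeeping: re-examining the curve $c$ and the region $\mathcal{C}$ from the proof of Lemma~\ref{lem:columns} under the reflection and verifying, in the single configuration used there, that ``$P_k$ to the left of, or above, the whole subpath'' still forces $P_k$ outside $\mathcal{C}$, so that the canonical embedding of $P_{i+1,i+2,\ldots,|P|-1}$ is again forced to cross $c$, contradicting simplicity of $P$, or the fact that $P$ places no tile below its lowest tile, or the visibility of $\glueP{i}{i+1}$ or $\glueP{j}{j+1}$. Everything else in the argument --- the invariance of indices, the interchange of $V_P^{+}$ and $V_P^{-}$, and the preservation of south-visibility under $\phi$ --- is immediate, which is why the corollary can fairly be said to follow at once from Lemma~\ref{lem:columns} by the stated flip.
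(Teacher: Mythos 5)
Your proof is correct and follows the paper's approach: the paper's own proof is nothing more than the one-line remark that the corollary is obtained from Lemma~\ref{lem:columns} by flipping ``$+$/$-$'' and ``left/right'', i.e.\ by the very reflection $\phi$ you describe. What you add, and what the paper's one-liner silently glosses over, is the observation that the $P_k$ clause (``right of, or above'') is \emph{not} self-dual under $\phi$ — a literal flip would give ``left of, or above'', which is not the stated hypothesis of Lemma~\ref{lem:columns} — and your resolution is the right one: the clause is used only to place $P_k$ outside the bounded component $\mathcal{C}$, and since $c$ is confined laterally to within half a unit of the subpath's x-range (by the two descending visibility segments) and vertically between the horizontal segment at $y_0$ and the subpath's embedding, ``strictly left of'' the entire subpath works exactly as well as ``strictly right of''. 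One small stylistic note: you don't need the extra ``assume $j<i$ for contradiction'' framing when invoking the (modified) lemma, since the lemma already performs that contradiction internally; and you could shortcut the ``the proof never uses producibility'' check by observing that $\phi(P)$ is itself producible in the tile assembly system $(\phi(U),\phi(\sigma),1)$ over the mirror-image tile set, so Lemma~\ref{lem:columns} applies to it verbatim.
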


In the proof of Lemma~\ref{lem:onepath}, we will attempt to pump a path segment. We will make use of the following  lemma stating that the $V_P^+$ glues on the prefix remain visible even if that prefix is pumped. In other words {\em visibility survives pumping}.
\newcommand{\ola}{\ensuremath{\overline{A}}}

\begin{lemma}[Visibility survives pumping for $V_{P}^{+}$]
\label{lem:semiprecious}
Let $P \in \prodpathsU$  be a path producible by any tile assembly system $\calU=(U,\sigma,1)$, and $i$, $j$ be two integers such that 
$i < j$, $\glueP{i}{i+1} \in V_{P}^{+}$, $\glueP{j}{j+1} \in V_{P}^{+}$,
and $\type{\glueP{i}{i+1}} = \type{\glueP{j}{j+1}}$.
Let~$\olq$ be the pumping of $P$ between $i$ and $j$ (as defined in Definition~\ref{def:pumpingPbetweeniandj}), and let $Q$ be the maximal  prefix of $\olq$ that is an assemblable path.

Then  $V_{P_{0,1,\ldots,j}}^{+} \subseteq V_Q^{+}$. Intuitively, this means that all ``+'' glues visible relative to  $P_{0,1,\ldots,j}$ are also visible relative to  $Q$ (note that $Q$ contains $P_{0,1,\ldots,j}$ as a prefix).
\end{lemma}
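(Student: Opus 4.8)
The plan is to split $Q$ into the prefix $P_{0,1,\ldots,j}$ (which is always present) and a pumped ``tail'', and then show that nothing in the tail can be parked directly underneath a glue of $V^+_{P_{0,1,\ldots,j}}$ that was already visible from the south relative to $P_{0,1,\ldots,j}$. The Jordan curve theorem does the work, in the spirit of the proofs of Lemmas~\ref{lem:leftright} and~\ref{lem:columns}.

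First I would record the coarse shape of $Q$. Every prefix of a producible path is producible, so $P_{0,1,\ldots,j}$ is an assemblable path, and by Definition~\ref{def:pumpingPbetweeniandj} we have $\olq_\ell = P_\ell$ for $0\le\ell\le j$; hence $P_{0,1,\ldots,j}$ is a prefix of $\olq$, and so a prefix of $Q$. Thus every glue of $P_{0,1,\ldots,j}$ is a glue of $Q$ with the same type and the same (east) output side, so each element of $V^+_{P_{0,1,\ldots,j}}$ lies in $V^+_Q$ as soon as it is shown visible from the south relative to $Q$. I would also note here that since $\type{\glueP{i}{i+1}}=\type{\glueP{j}{j+1}}$ and both glues point east, the first pumped tile $\olq_{j+1}$ lands at $\pos{P_j}+(1,0)$ and binds to $P_j$, after which the tail of $Q$ traces translates of $P_{i+1,\ldots,j}$ by integer multiples of $\vect{P_iP_j}$ (this is the only place the glue-type hypothesis is used, and it is what makes the statement non-vacuous). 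Now fix $\glueP{k}{k+1}\in V^+_{P_{0,1,\ldots,j}}$ with $0\le k\le j-1$, write $A=\midpoint{P_k}{P_{k+1}}$, and let $r$ be its south-going visibility ray from $A$. Since $\glueP{k}{k+1}$ is visible relative to $P_{0,1,\ldots,j}$, no glue of $P_{0,1,\ldots,j}$ has its midpoint on $r$; so if $\glueP{k}{k+1}$ were \emph{not} visible relative to $Q$, the obstruction would have to be a tail glue $\gglue{\olq}{m}{m+1}$ with $m\ge j$ whose midpoint lies on $r$.

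Suppose, for contradiction, such an $m$ exists; take $m$ minimal, set $g'=\gglue{\olq}{m}{m+1}$ and $B=\midpoint{\olq_m}{\olq_{m+1}}\in r$. Let $c$ be the concatenation of the portion of $\embed{Q}$ from $A$ to $B$ with the line segment $[B,A]$ (the upward-traversed part of $r$ from $B$ to $A$). Visibility of $\glueP{k}{k+1}$ relative to $P_{0,1,\ldots,j}$ and minimality of $m$ imply that no glue midpoint of this $Q$-portion lies strictly between $B$ and $A$ on $r$, and since $\embed{Q}$ meets the half-integer vertical line carrying $r$ only at midpoints of horizontal glues, $[B,A]$ meets the $Q$-portion only at its endpoints $A$ and $B$; so by Observation~\ref{obs:ClosedSimpleCurve}, $c$ is a finite closed simple curve, bounding a region $\mathcal C$. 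The $Q$-portion of $c$ uses only tiles of $Q$ with index $>k$, so it is disjoint from $\embed{P_{0,1,\ldots,k}}$ (two tile-disjoint parts of a single self-avoiding $\Z^2$-path have non-crossing embeddings). I would then argue that $\pos{P_k}$ and the beginning of $Q$ (say $\pos{P_0}$, equivalently all of $\embed{P_{0,1,\ldots,k}}$) lie on opposite sides of $c$; since $\embed{P_{0,1,\ldots,k}}$ cannot cross the $Q$-portion of $c$, it would then have to cross the segment $[B,A]$, which puts a glue of $P_{0,1,\ldots,k}\subseteq P_{0,1,\ldots,j}$ with its midpoint on $r$ strictly below $A$ — contradicting the visibility of $\glueP{k}{k+1}$ relative to $P_{0,1,\ldots,j}$. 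Hence no such $m$ exists, $\glueP{k}{k+1}$ is visible relative to $Q$, and $V^+_{P_{0,1,\ldots,j}}\subseteq V^+_Q$.

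The delicate point is the ``opposite sides'' claim at the corner of $c$ at $A$, where the segment $[B,A]$ enters from the south and the $Q$-portion leaves to the east, collinear with the edge $P_kP_{k+1}$: one must check (as in the ``ray then walk'' step of Lemma~\ref{lem:leftright}, using a slightly perturbed test point near $A$) that $\pos{P_k}$ lies in $\mathcal C$ while the beginning of $Q$ lies outside it, and one has to contend with the possibility that the whole of $\embed{P_{0,1,\ldots,k}}$ is trapped inside $\mathcal C$ — which I would rule out either by the same perturbation analysis or, failing that, by running the identical argument on the tail $\olq_{m+1},\olq_{m+2},\ldots$ of $Q$ after $g'$ and using that $Q$ is the \emph{maximal} assemblable prefix (so the pumped copies never overlap $P_{0,1,\ldots,j}$). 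Everything else is bookkeeping on glue midpoints and step directions, exactly of the kind already carried out in Lemmas~\ref{lem:leftright}, \ref{lem:columns} and Corollary~\ref{cor:columns}.
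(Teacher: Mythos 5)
Your approach differs substantially from the paper's, and there is a genuine gap at exactly the step you flag as delicate. The paper does not treat visible glues one at a time: it builds a single region $C_j = \mathcal{Q}_A \cup \mathcal{Q}_B \cup \mathcal{C}$ from two quarter-planes (southwest of points placed well to the left of, and well below, $\sigma\cup\asm{P_{0,1,\ldots,j}}$) together with a bounded piece enclosed by $P_{\mathrm{left},\ldots,j}$, observes that every visibility ray $l_k$, $k<j$, lies inside $C_j$, and then shows $\frak{E}_{Q_{j+1,j+2,\ldots}}$ can never enter $C_j$. The crucial boundary piece is $l_j$, and ruling out a first crossing of $l_j$ at index $k_0$ uses the periodicity of the tail: the translate $Q_{j+1,\ldots,k_0-(j-i)}$ would already cross $l_i \subset C_j$, contradicting minimality of $k_0$. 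That periodicity step is the real content of the lemma — it is what distinguishes a pumped tail from an arbitrary continuation of $P_{0,\ldots,j}$, which certainly \emph{can} hide a previously visible glue. Your argument never invokes it: you use $\type{\glueP{i}{i+1}}=\type{\glueP{j}{j+1}}$ only to know the pumping is defined, and minimality of $m$, but you never use $\olq_{n+(j-i)}=\olq_n+\vect{P_iP_j}$. An argument that would equally well apply to a non-periodic continuation cannot be correct.

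Concretely, the ``opposite sides'' claim fails. Your curve $c$ is the arc of $\frak{E}_Q$ from $A=\midpoint{P_k}{P_{k+1}}$ to $B=\midpoint{\olq_m}{\olq_{m+1}}$ followed by $[B,A]$; nothing forces that arc to enclose $\pos{P_k}$. If $\olq_m$ lies east of $B$ and the arc wanders only (weakly) east of the column $x=X(A)$ before descending to $\olq_m$ — which is consistent with all your simplicity constraints, since $r$ runs strictly \emph{south} of $A$ — then $\mathcal{C}$ lies entirely east of $[B,A]$, while $\pos{P_k}$ sits half a unit \emph{west} of $A$ and is therefore \emph{outside} $\mathcal{C}$, together with $\pos{P_0}$; no contradiction follows. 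Which side of $[B,A]$ is enclosed is a global property of the arc, not something a local perturbation at the corner $A$ (or $B$) can decide, and ``running the identical argument on the tail after $g'$'' inherits the same ambiguity. What you need is an enclosure whose sidedness is forced — the paper achieves this by routing the boundary through points strictly to the left of and strictly below everything, so the bounded side is unambiguous — and you need a place to spend the periodicity hypothesis, as the paper does on the $l_j$ crossing. Without both, the proof does not close.
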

\begin{proof}
  {\bf Setup.}
  Let $l_i$ be the visibility ray of $\glueP{i}{i+1}$ and $l_j$ be the visibility ray of $\glueP{j}{j+1}$.
  Now let $P_{\text{left}}$ be a leftmost tile of $\sigma\cup \asm{P_{0,1,\ldots,j}}$, and let $l_{\mathrm{left}}$ be the horizontal ray to the west starting from $\pos{P_{\mathrm{left}}}$.
  Moreover, let $P_{\text{bottom}}$ be a lowest tile of $P_{0,1,\ldots,j}$ (i.e.\ with $y$-coordinate 
   $Y(\pos{P_{\text{bottom}}}) = \min\{y_{P_0},y_{P_{1}}, \ldots, y_{P_{|P-1|}}\} $).

  We now define two helper points of $\R^2$, far enough from $P_{0,1,\ldots,j}$:
  \begin{eqnarray*}
    A &=& P_{\mathrm{left}} + \left(\begin{array}{c} -10\\0\end{array}\right)\\
    B &=& \left(\begin{array}{c} X(\midpoint{P_j}{P_{j+1}})\\Y(\pos{P_{\text{bottom}}}) - 10\end{array}\right)
  \end{eqnarray*}
where $X(x,y) = x$, $Y(x,y) = y$.

  Intuitively, $A$ has the same y-coordinate as, and is 10 units to the left of $P_{\mathrm{left}}$, and $B$ has the same x-coordinate as the glue between $P_j$ and $P_{j+1}$, and is 10 units below the lowest point of $P_{0,1,\ldots,j}$.\footnote{The choice of 10 units is arbitrary here, we simply need to define a curve with  segments that are strictly to the left of $\sigma \cup \asm{P_{0,1,\ldots,j}}$ and below $P_{0,1,\ldots,j}$.}

  Define  $\mathcal Q_A = \{(x,y) \in\R^2| x \leq x_A\wedge y\leq y_A \textrm{ where } A=(x_A,y_A) \}$ (the quarter-plane below and to the left of $A$) and $\mathcal Q_B = \{(x,y)\in\R^2 | x \leq x_B\wedge y\leq y_B  \textrm{ where }  B=(x_B,y_B) \}$ (the quarter-plane below and to the left of $B$).
  Also, let $s_j = [ \pos{P_j}, \midpoint{P_j}{P_{j+1}} ]$ be the half-unit horizontal segment of $\R^2$ from $\pos{P_j}$ to $\midpoint{P_j}{P_{j+1}}$.

  Moreover, let $c$ be the curve that is the concatenation of  $[(x_A, y_B), A]$, $[A, \pos{  P_{\mathrm{left}} }] $, $\frak{E}_{P_{\mathrm{left}, \mathrm{left}+1,\ldots,j}}$, $s_j$, $[\midpoint{P_j}{P_{j+1}}, B]$,  $[B, (x_A, y_B)]$.
By Observation~\ref{obs:ClosedSimpleCurve},  $c$ is a simple closed curve, hence defines a  bounded connected component $\mathcal{C}$ of $\R^2$. 

  Now, let $C_j = \mathcal Q_A\cup\mathcal Q_B\cup \mathcal{C}$, which is also a single  
  connected component of $\R^2$, because $(x_A, y_B)$ is in all components of the union ($(x_A,y_B)\in\mathcal Q_a\cap\mathcal Q_B\cap \mathcal{C}$) and each component is connected.

  {\bf Proof argument.}
  We claim that for all $k< j$, $l_k\subset C_j$ since $l_k \subset \mathcal Q_B \cup \mathcal{C}$ which can be seen as follows: (i) $l_k^B = \{ (x,y) \mid (x,y) \in l_k \wedge y \leq B) \}$ is contained in $\mathcal Q_B$ since $l_k$ is to the left of $l_j$ (by the contrapositive of Lemma \ref{lem:columns}) and $l_k^B$ reaches infinitely far to the south as does  $\mathcal Q_B$; and (ii) by the definition of visibility the segment $ l_k \setminus l_k^B $ is contained in $\mathcal{C}_j$.

  Now suppose, for the sake of contradiction, that  $\frak{E}_{Q_{j+1,j+2,\ldots}} $, the canonical embedding of the path $Q_{j+1,j+2,\ldots}$ in $\R^2$, intersects $l_k$. Then $\frak{E}_{Q_{j+1,j+2,\ldots}}$ needs to enter $C_j$, because $\pos{Q_{j+1} }$ is outside $C_j$ (by the contrapositive of Lemma \ref{lem:columns}).
    This crossing can happen at only four different parts of the border of $C_j$:
  \begin{itemize}
  \item $l_{\mathrm{left}}$, but this is impossible since $\pos{P_{\mathrm{left}}}$ is a leftmost point of $\sigma\cup \asm{ P_{0,1,\ldots,j} }$ and $\vect{P_iP_j}$ (as a vector) has a strictly positive x-coordinate (by the contrapositive of Lemma \ref{lem:columns}).
  \item $P_{\mathrm{left}, \mathrm{left}+1, \ldots, j}$ or $s_j$, but this would contradict the fact that $Q$ is simple.
  \item $l_j$. We claim that this is impossible: assume, for the sake of contradiction, that it is not, and let $k_0>j$ be the smallest integer such that $Q_{j+1,j+2,\ldots,k_0}$ intersects $l_j$.
    Moreover, $k_0-(j-i) > j$ because $Q_{i,i+1,\ldots,j}$ does not intersect its own visibility rays (except at their endpoints). 
    Therefore, $Q_{j+1,j+2,\ldots,k_0-(j-i)}$ would also intersect $l_i$, hence $Q_{j+1,j+2,\ldots,k_0-(j-i)}$ also enters $C_j$, contradicting the assumption that $k_0$ is the smallest integer such that $Q_{j,j+1,\ldots,k_0}$ intersects $l_j$ (since in the previous bullet points we have shown that $Q_{j+1,j+2,\ldots,k_0-(j-i)}$ cannot cross other parts of the border of $C_j$).
  \end{itemize}

  Therefore, $\frak{E}_{Q_{j+1,j+2,\ldots}}$ does not enter $C_j$, which is a contradiction, which in turn contradicts our assumption that $\frak{E}_{Q_{j+1,j+2,\ldots}}$ intersects $l_k$. 
   Hence the canonical embedding of $Q$ does not cross any of the visibility rays defining $V_{P_{0,1,\ldots,j}}^{+}$,
     thus      $V_{P_{0,1,\ldots,j}}^{+} \subseteq V_Q^{+}$. 
\end{proof}

By flipping the use of ``$+$'' and ``$-$'' in the statement of the previous lemma, and ``left'' and ``right'' in the proof,  we get the following corollary:

\begin{corollary}[Visibility survives pumping for $V_{P}^{-}$]
\label{cor:semiprecious-minus}
Let $P \in \prodpathsU$  be a path producible by any tile assembly system $\calU=(U,\sigma,1)$, and $i$, $j$ be two integers such that
$i < j$, $\glueP{i}{i+1} \in V_{P}^{-}$ and $\glueP{j}{j+1} \in V_{P}^{-}$,
and $\type{\glueP{i}{i+1}} = \type{\glueP{j}{j+1}}$.
Let~$\olq$ be the pumping of $P$ between $i$ and $j$, and let $Q$ be the maximal prefix of $\olq$ that is an assemblable path.

Then  $V_{P_{0,1,\ldots,j}}^{-} \subseteq V_Q^{-}$. Intuitively, this means that all ``-'' glues visible relative to  $P_{0,1,\ldots,j}$ are also visible relative to  $Q$ (note that $Q$ contains $P_{0,1,\ldots,j}$ as a prefix).
\end{corollary}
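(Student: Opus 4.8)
The plan is to derive Corollary~\ref{cor:semiprecious-minus} from Lemma~\ref{lem:semiprecious} by a left--right reflection, so that no geometric argument has to be redone. Let $\rho:\R^2\to\R^2$ be the reflection $(x,y)\mapsto(-x,y)$. Given $\calU=(U,\sigma,1)$, I would form the ``mirror'' tile set $U'$ by swapping the east and west glue of every tile type of $U$ (and carrying the canonical ordering across), and set $\calU'=(U',\rho(\sigma),1)$. The first thing to check is that $\rho$ lifts to a bijection from assemblies over $U$ to assemblies over $U'$ that preserves $\tau$-stability and commutes with single-tile attachment; this is routine because $\rho$ sends the west (resp.\ east) side of a position to the east (resp.\ west) side of its image, which is exactly where the glue was moved. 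Consequently $\rho$ maps $\prodpathsU$ onto $\prodpaths{\calU'}$, so with $P':=\rho(P)$ we have $P'\in\prodpaths{\calU'}$.

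Next I would line up the four ingredients of the hypothesis and conclusion of Lemma~\ref{lem:semiprecious} under $\rho$. (i)~Visibility from the south only refers to vertical rays going south, which $\rho$ fixes setwise, so a glue of $P$ is in $V_P^-$ iff the corresponding glue of $P'$ is in $V_{P'}^+$; in particular $\glueP{i}{i+1}$ and $\glueP{j}{j+1}$ lie in $V_{P'}^+$ and still share a glue type. (ii)~Since $\rho$ is linear, $\vect{P'_iP'_j}=\rho(\vect{P_iP_j})$, and therefore the pumping of $P'$ between $i$ and $j$ (Definition~\ref{def:pumpingPbetweeniandj}) is exactly $\rho$ applied to the pumping $\olq$ of $P$ between $i$ and $j$. (iii)~A finite prefix of $\rho(\olq)$ is producible in $\calU'$ iff the matching prefix of $\olq$ is producible in $\calU$, so the maximal assemblable prefix of $\rho(\olq)$ is $\rho(Q)$, where $Q$ is the corresponding prefix for $\olq$. (iv)~$\rho$ sends $P_{0,1,\ldots,j}$ to $P'_{0,1,\ldots,j}$, hence $\rho(V_{P_{0,1,\ldots,j}}^-)=V_{P'_{0,1,\ldots,j}}^+$ and $\rho(V_Q^-)=V_{\rho(Q)}^+$. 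Applying Lemma~\ref{lem:semiprecious} to $P'$, $i$, $j$ in $\calU'$ then gives $V_{P'_{0,1,\ldots,j}}^+\subseteq V_{\rho(Q)}^+$, and transporting this back through $\rho$ yields $V_{P_{0,1,\ldots,j}}^-\subseteq V_Q^-$, as required.

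If one prefers to avoid the mirror-system bookkeeping, the alternative is to copy the proof of Lemma~\ref{lem:semiprecious} with ``$+$'' replaced by ``$-$'' and ``left'' by ``right'': take $A$ ten units to the right of a rightmost tile of $\sigma\cup\asm{P_{0,1,\ldots,j}}$, take $B$ below $P_{0,1,\ldots,j}$ at the $x$-coordinate of $\midpoint{P_j}{P_{j+1}}$, build the analogous closed simple curve and the union $C_j$ of the two lower-right quarter-planes with the bounded region it cuts off, and run the same case analysis, now invoking Corollary~\ref{cor:columns} wherever the original invoked Lemma~\ref{lem:columns}. Either way, the only real obstacle is the verification work rather than any new idea: one must make sure that ``assemblable path'' and the pumping construction are genuinely invariant under the reflection even though $U$ is not itself left--right symmetric --- the point being that $\calU'$ is a perfectly legitimate instance of the (universally quantified) hypothesis of Lemma~\ref{lem:semiprecious} --- and that the maximal-assemblable-prefix operation commutes with $\rho$.
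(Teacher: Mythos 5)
Your proposal is correct, and both of its halves line up with the paper. The paper itself states the corollary with essentially no proof beyond the remark that it follows from Lemma~\ref{lem:semiprecious} ``by flipping the use of `+' and `$-$' in the statement$\ldots$ and `left' and `right' in the proof''; your second alternative (copy the proof of Lemma~\ref{lem:semiprecious}, move $A$ to the right of the rightmost tile, use $B$ at the $x$-coordinate of $\midpoint{P_j}{P_{j+1}}$, build the reflected quarter-planes, and invoke Corollary~\ref{cor:columns} in place of Lemma~\ref{lem:columns}) is exactly the expansion of that remark. Your first alternative --- constructing the mirror system $\calU'$ by swapping east and west glues, checking that $\rho:(x,y)\mapsto(-x,y)$ lifts to a bijection that commutes with tile attachment and with the pumping construction, and that $V^-_P\leftrightarrow V^+_{\rho(P)}$ --- is a cleaner, more rigorous way to justify the symmetry that the paper invokes informally, at the cost of the bookkeeping you note (verifying that assemblability and maximal-assemblable-prefix commute with the reflection, and that the mirror tile type map together with the position reflection is the right operation). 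Either route is sound; the mirror-system argument has the advantage that it isolates the symmetry into one lemma-free transport step rather than re-running the curve-chasing, while the line-by-line rewrite keeps everything in the original tile system but duplicates the geometric work.
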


\subsection{Blocking any \hs path}
\label{sec:blocking-h-successful}
Keeping in mind that the seed assembly supertile  of $\calU$ includes the origin $(0,0) \in \Z^2$, for the rest of the paper fix a horizontal line at height $h=10 m$  above the origin.
\begin{definition}[The set of \hs paths of $\calU$]\label{def:hsuccpaths}
The set of  {\hs paths} $\mathbb{P}_\calU$  of $\calU$ is defined as: 
  \begin{equation*}
  \mathbb{P}_\calU = \{ P \mid P \in \prodpathsU \text{ and } P \text{ contains exactly one tile at height } h=10m \text{, its last tile}\}
    \end{equation*}
    where $\prodpathsU$ is the set of producible paths of $\calU$ (defined in Section~\ref{sec:defs-paths}).
\end{definition}    

Note that {\em any} claimed successful simulation by $\calU$ of $\calT_{n}$ (defined in Section~\ref{sec:T}) must exhibit at least one path that has a  \hs prefix $P$. When we write ``$P$ is a \hs path'' we mean $P\in\mathbb{P}_\calU$. The set of \hs paths $\mathbb{P}_\calU$ is finite because the tileset $U$ is finite and the area of the simulation zone below the horizontal line at height $h$ is finite. 

\newcommand{\nhs}{nowhere-\hs}
\newcommand{\Nhs}{Nowhere-\hs}
We define a  {\em ``\nhs path''} to be a path that has no tile at height $h$.  In other words, a \nhs path has no \hs prefixes.

\subsubsection{Visibility setup}
We begin with the following lemma, which has a straightforward proof and is used merely to define $i$, $j$ and $\ell$, which are used extensively in later proofs.
Recall that  $\calU$ is a tile assembly system with tile set $U$ simulating  $\calT_{10|U|}$  at scale factor $m$.  

\newcommand{\hsU}{\ensuremath{\mathbb{P}_\calU}}

\begin{lemma}
\label{lem:visibility-setup}
Let $P \in\hsU$ be a \hs path, and let $\ell$ be a vertical line in $\R^2$ with x-coordinate $|U|(3m+1)+m+1.5$. 
If the visible glue placed by $P$ on $\ell$ is a $V_P^+$ glue (\resp a $V_P^-$ glue), then there exist
$i< j \in \N$ that satisfy all of the following properties:
\begin{enumerate}
\item $\pos{P_{i}}$ is to the right (\resp to the left) of $\ell$ (i.e.\ $x_{P_{i}} > \ell$, \resp $x_{P_i} < \ell$)\label{a}
\item $\pos{P_{j}}$ is horizontal distance at least $3m$ from $\pos{P_i}$ (i.e. $|x_{P_i} - x_{P_j}| \geq 3m$)\label{b1}
\item $\pos{P_{j}}$ is to the right (\resp to the left) of $\pos{P_{i}}$\label{b2}
\item $\glueP{i}{i+1} \in V_P^+$ and $\glueP{j}{j+1} \in V_P^+$ (\resp $\glueP{i}{i+1}\in V_P^-$ and $\glueP{j}{j+1}\in V_P^-$)\label{c}
\item $\type{\glueP{i}{i+1}} = \type{\glueP{j}{j+1}}$\label{d}
\item $\pos{P_{i}}$ and $\pos{P_{j}}$ are within vertical distance $3m$ (i.e.\ $|y_{P_{j}} - y_{P_{i}} | \leq 3m$)\label{e}
\end{enumerate}
\end{lemma}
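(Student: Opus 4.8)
The plan is to prove the statement in the case that $P$'s visible glue on $\ell$ lies in $V_P^+$; the $V_P^-$ case is the exact left--right mirror image (read ``west'' for ``east'' below, and use Corollary~\ref{cor:columns} and Corollary~\ref{cor:semiprecious-minus} in place of their ``$+$'' counterparts), so I will only discuss $V_P^+$.

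First I would record where $P$ must live. Since $\calU$ simulates $\calT_{10|U|}$, the assembly $\asm{P}\cup\sigma$ maps cleanly onto a producible assembly of $\calT_{10|U|}$, namely a connected prefix of the ``flipped-L''. Because this prefix already contains the supertile at column~$0$ (the seed) and, since $P$ reaches height $h=10m$, a supertile in the vertical arm at supertile column $10|U|+1$, it must contain every supertile column $0,1,\ldots,10|U|+1$; hence $\asm{P}$ itself has tiles in every supertile column from roughly $1$ out to $10|U|$, so the $x$-coordinate of the canonical embedding $\ep$ of $P$ sweeps an interval containing $[\,2m,\;10m|U|\,]$, and in particular $[\ell,10m|U|]$. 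Therefore, for every half-integer vertical line strictly to the right of $\ell$ and strictly to the left of the vertical-arm region, $P$ makes an east--west move crossing that line and so has a unique visible glue on it. Recalling that every tile of $P$ other than the last has height $<h$ (true for the shortest \hs prefix of any path reaching height $h$), these ``middle'' visible glues involve only tiles in the horizontal part of the simulation zone, hence tiles whose $y$-coordinates lie in a band of height~$3m$.

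Next I would identify these glues more precisely. By Lemma~\ref{lem:leftright} --- whose hypothesis is met because the last tile of $P$ is strictly higher than every earlier tile --- no $V_P^-$ glue of $P$ can lie to the right of the visible glue on $\ell$; as that glue is in $V_P^+$, every visible glue of $P$ on a line strictly right of $\ell$ lies in $V_P^+$. For such a glue $\glueP{k}{k+1}$, the tile $P_k$ is its west endpoint and has output side facing east, so $\type{\glueP{k}{k+1}}$ is the east glue of $\type{P_k}$ and the glue shares the $y$-coordinate of $P_k$; and by Lemma~\ref{lem:columns} (again the hypothesis holds) path order on these $V_P^+$ glues agrees with $x$-coordinate order. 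Then I would run a pigeonhole over the visible $V_P^+$ glues lying strictly right of $\ell$ and strictly left of the vertical-arm region: any two of them automatically satisfy $|y_{P_i}-y_{P_j}|\le 3m$ since all their endpoints sit in the height-$3m$ band; and there are strictly more than $3m|U|$ of them, because $\ell$ sits only about $|U|(3m+1)+m$ columns right of column~$0$ whereas the vertical-arm region begins only past column $10m|U|$. Labelling each such glue by the tile type of its west endpoint --- at most $|U|$ labels --- some label class spans horizontal distance at least $3m$; taking $i<j$ (the inequality forced by Lemma~\ref{lem:columns}) to be the path indices of the leftmost and rightmost glues in that class yields $\glueP{i}{i+1},\glueP{j}{j+1}\in V_P^+$, both right of $\ell$, with $x_{P_i}<x_{P_j}$ and $x_{P_j}-x_{P_i}\ge 3m$, with $\type{P_i}=\type{P_j}$ (hence $\type{\glueP{i}{i+1}}=\type{\glueP{j}{j+1}}$), and with $|y_{P_i}-y_{P_j}|\le 3m$ --- which are exactly the six asserted properties.

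I expect the delicate step to be the counting in the last paragraph: one has to be precise about which columns near the vertical arm are ``contaminated'' by fuzz (and hence must be dropped from the height-band argument), and then verify that the resulting window of usable columns --- whose width is pinned down by the constant $|U|(3m+1)+m+1.5$ that defines $\ell$ and by the location $10m|U|$ of the vertical arm --- is genuinely larger than $3m$ times the number of available labels, so that the pigeonhole produces a pair at horizontal distance $\ge 3m$ rather than merely a repeated glue type. The other steps (the clean-mapping geometry and the two visibility lemmas) should be routine given the machinery already in place.
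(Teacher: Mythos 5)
Your proof is correct and follows essentially the same approach as the paper: deduce from the simulation-zone geometry that $P$ crosses enough half-integer vertical lines between $\ell$ and the vertical arm, use Lemma~\ref{lem:leftright} to conclude all those visible glues lie in $V_P^+$ (resp.\ $V_P^-$), pigeonhole over a window of width greater than $3m|U|$, and invoke Lemma~\ref{lem:columns} (resp.\ Corollary~\ref{cor:columns}) to order $i<j$. The only cosmetic difference is the pigeonhole label (you use the tile type of the glue's west endpoint where the paper uses the glue type of the output side; both are bounded by $|U|$ and yield Conclusion~\ref{d}).
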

\begin{proof}
Firstly, the path $P$ is of width (horizontal extent) $\geq 8|U|m+2$ (as we are simulating $\calT_{10|U|}$ at scale factor $m$, see Figure~\ref{fig:fuzz}).
Secondly, since $P$ is \hs, $P$ crosses $\ell$, and since $\ell$ is positioned at distance $\geq |U|(3m+1)+1$ to the right\footnote{Note that the $m\times m$ seed supertile region contains the point $(0,0) \in \mathbb{Z}^2$.} of the seed~$\sigma$ of $\calU$, there are at least $|U|(3m+1)+1$ visible glues to the left of $\ell$.

Moreover:
\begin{itemize}
\item If the visible glue placed by $P$ on $\ell$ is in $V_P^+$, then since $P$ has at least one tile in the rightmost $2m$ positions of the simulation zone, $P$ has at least $|U|(3m+1)+1$ visible glues to the right of $\ell$.
By Lemma \ref{lem:leftright} all of these glues are in $V_P^+$.
\item Else, since $\ell$ is at horizontal distance at least $|U|(3m+1)+1.5$ from the rightmost tile of $\sigma$, $P$ has at least $|U|(3m+1)+1$ visible glues to the left of $\ell$
By Lemma \ref{lem:leftright} all of these glues are in $V_P^-$.
\end{itemize}

Thus, if we look at the first $|U|(3m+1)+1$ visible glues of $P$ that are immediately to the right (\resp to the left) of $\ell$, by the pigeonhole principle, at least one of their glue types appears at least $3m+1$ times. Since each x-coordinate has exactly one visible tile, we can find two $V_P^+$ (\resp $V_P^-$) glues, $\glueP{i}{i+1}$ and $\glueP{j}{j+1}$ for some $i<j$, with the same type, that are at least horizontal distance $3m$ away from each other, which shows Conclusions~\ref{a},~\ref{b1},~\ref{c} and~\ref{d} of this lemma. Taking the contrapositive of Lemma~\ref{lem:columns} (in that lemma letting $P_k$ be the tile of $P$ at height $h$), we get that $x_{P_i } < x_{P_j}$ (\resp, of Corollary~\ref{cor:columns}, that $x_{P_j} < x_{P_i}$), which shows Conclusion~\ref{b2}.

Finally, since the region we chose to apply the pigeonhole argument is located immediately to the right of $\ell$ (\resp, left) and is of width merely $ |U|(3m+1)+1$  neither $\glueP{i}{i+1}$ nor $\glueP{j}{j+1}$ are in the ``vertical part'' of the simulation zone (Figure~\ref{fig:fuzz}), and since the ``horizontal part'' of the simulation zone is of height $3m$, this proves Conclusion~\ref{e}. 
\end{proof}

\subsubsection{Blocking any \hs path by growing a branch from it}

In this section we give Lemma~\ref{lem:onepath} which is the first main tool used in this paper. We also give Theorem~\ref{thm:onepath} whose short proof gives a method to block any \hs path. We begin with the definition of an \emph{enclosing branch}, which is a path $D$ branching from $P$, and enclosing a connected component of $\R^2$. The enclosing branch achieves this in one of two ways: (1) either by intersecting $\sigma\cup P$ (see Figure~\ref{fig:enclosing}(Left)) and hence the enclosure is bordered by $P$, the enclosing branch and possibly $\sigma$, or (2) by placing a new visible glue on $\ell$ (see Figure~\ref{fig:enclosing}(Right)) and hence the enclosure is bordered by $P$, the enclosing branch, a segment of $\ell$ and  possibly $\sigma$.

\begin{definition}\label{def:enclosingbranch}
  [Enclosing branch for a path]
  Let $P\in\hsU$ be a \hs path and for any~$\ell$  that satisfies the hypotheses of Lemma~\ref{lem:visibility-setup}: 
  Let $k \in \{0,1,\ldots , |P|-1\}$ be such that $P_{0,1,\ldots,k}$ includes $\namedGlue{P_\ell}{P_{\ell+1}}$, $P$'s visible glue  on $\ell$.  We call a path $D$ an \emph{enclosing branch} for $P$ at~$k$ if $P_{0,1,\ldots,k}D_{0,1,\ldots,|D|-2} \in\prodpathsU$ is an assemblable path, $\pos{P_{k+1}} = \pos{D_0}$,  $\glueP{k}{k+1} = \gluePD{k}{0}$ and is visible relative to  $P_{0,1,\ldots,k}D_{0,1,\ldots,|D|-2} \in\prodpathsU$, and at least one of the following is the case:
  \begin{enumerate}
  \item \label{def:enclosingbranch:DconflictsPrefixOfP} at $\pos{D_{|D|-1}}$ the path $D$ intersects  $\sigma\cup \, \asm{P_{0,1,\ldots,k}}$,\footnote{and therefore $D$ cannot grow completely from $\sigma \cup \, \asm{ P_{0,1,\ldots,k}}$ without intersecting $\sigma\cup \,\asm{P_{0,1,\ldots,k}}$} or
  
  \item \label{def:enclosingbranch:belowl} $\midpoint{D_{|D|-2}}{D_{|D|-1}} \in \R^2$ is on $\ell$ and strictly below the points in the set $\frak{E}_P \cap \ell$.\footnote{And therefore $D$ cannot grow completely from $\sigma \cup \, \asm{ P_{0,1,\ldots,k}}$ without placing a glue on $\ell$ that is lower than $P$'s visible glue on $\ell$.} 
  \end{enumerate}
\end{definition}

    \begin{figure}[ht]
      \begin{center}
        \includegraphics{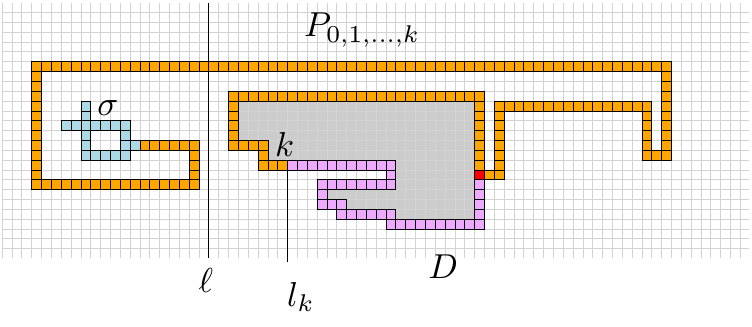}\hspace{14ex}
        \includegraphics{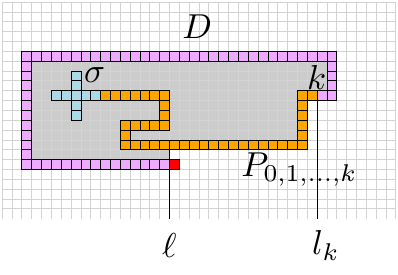}
      \end{center}
      \caption{Enclosing branches that enclose the part of the plane shaded grey. {\bf Left:} Example of an enclosing branch $D$ for $P$ at $i$ for   Definition~\ref{def:enclosingbranch}(\ref{def:enclosingbranch:DconflictsPrefixOfP}). $D$ is the pink and red path, the last tile of $D$, which intersects $P$, is drawn in red (hence $PD$ is not a path because its last tile intersects an earlier part of itself). 
      {\bf Right:} Example of an enclosing branch $D$ for $P$ at $i$ for  Definition~\ref{def:enclosingbranch}(\ref{def:enclosingbranch:belowl}). 
      $D$ is the pink and red path, the last tile of $D$ (immediately after $D$ places a new visible glue on $\ell$) is drawn in red. Note that Figure~\ref{fig:enclosingwithl} shows a concrete construction of such an enclosing branch.}
      \label{fig:enclosing}
    \end{figure}

Before stating Lemma~\ref{lem:onepath}, we give an intuitive, although imprecise, {\em summary} of its statement, of {\em how} we obtain such a statement and {\em why} we would want such a statement. 
Let $P$ be any \hs path, which implies that $P$ is long enough to repeat some glue type many times. This implies one of two things (which are the two conclusions of the lemma): 
  \begin{enumerate}
  \item There is a path, assemblable by $\calU$, that places tiles outside of the simulation zone.
    If the proof yields this conclusion, then such a path is found either by  repeating one part of $P$ enough times (this is point \ref{qinf} in the proof), or else by deriving a \hs path $P'$ from $P$, and ``shortcutting'' a part of $P'$, so as to translate a suffix of $P'$ by a large enough vector.
    In both cases, this contradicts that $\calU$ simulates $\calT_{10|U|}$, which gives the  proof of our main theorem.

  \item Else, we can grow a path of the form  $P_{0,1,\ldots,k}D_{0,1,\ldots |D|-2 }$, for some $k\in\{i,j \}$, which has the following properties:

    \begin{itemize}
    \item $\glue{k}{k+1}$, which is visible relative to $P$, is also visible relative to   $D$, and 
    \item any \hs path that turns to the right from $D_{0,1,\ldots}$ 
    must hide the visibility of $\glue{k}{k+1}$.
    \end{itemize}

    First we observe that $P$ must be blocked by such a  $P_{0,1,\ldots,k}D_{0,1,\ldots}$ (i.e.\ $P$ cannot grow from an assembly that already contains $P_{0,1,\ldots,k}D_{0,1,\ldots}$), because we can show  that $P$ turns to the right from $D_{0,1,\ldots,a}$\footnote{From the definition of enclosing branch $P$ places a tile at $\pos{D_0}$ therefore either $P$ branches to the right (see proof of Theorem~\ref{thm:onepath})  from $D$, or if it happens that  $\type{P_{k+1}} \neq \type{D_0} $ then $P$ is blocked by $D$.}, but this causes a contradiction as $P$ can not hide its own visible glue. This is one useful fact from Conclusion~\ref{onepath:conclusion:allbroken}. 

    The second useful fact (and the main reason for the particular form of Conclusion~\ref{onepath:conclusion:allbroken}), is that we have found a way to consume a finite resource: visible glues (note that the entire set of \hs paths is finite and each such path has a finite set of visible glues).
    Then, later in the proof of Theorem~\ref{thm:main}, we will use this ``consumption of visible glues''  property  to show that the ``supply of visible tiles'' must decrease with each new ``branch'' to the right starting from $P_{0,1,\ldots,k}D_{0,1,\ldots |D|-2}$; and the particular form of Conclusion~\ref{onepath:conclusion:allbroken} allows us to do this.
\end{enumerate}
The previous intuitive overview of the lemma statement is somewhat incomplete, but may serve as a guide. All the steps of the {\em proof} are given in Figure~\ref{fig:onepath}, with a running pictorial example, and short summaries of each step.

\begin{lemma}[Pumping or enclosing any \hs path]
  \label{lem:onepath}
  \label{lem:one-path}
  Let $P\in\hsU$ be a \hs path, let $i,j,\ell$ be as in Lemma~\ref{lem:visibility-setup} 
  with $P$'s visible glue on $\ell$ being in $V_P^+$ (\resp,  $V_P^-$).
  At least one of the following holds:
  \begin{enumerate}
  \item   There is an assemblable path (i.e.\ from $\prodpathsU$) 
  that places tiles outside of the simulation zone of~$\calU$, contradicting that $\calU$ simulates $\calT_{10|U|}$. 
    \label{onepath:conclusion:breakscale}

  \item (i) There is an enclosing branch $D = D_{0, 1, \ldots, |D|-1}$ for $P$ at some $k \in \{i,j \} $, 
  such that
  $P_{0,1,\ldots,k}D_{0,1,\ldots,|D|-2} \in \prodpathsU$ is an assemblable \nhs path that
  has the same visible glue on $\ell$ as $P$.

    (ii) Furthermore, for all paths $R$ such that for some $a\leq |D|-2$, $P_{0,1,\ldots,k}D_{0,1,\ldots,a}R$ is \hs and turns right (\resp, left) from $P_{0,1,\ldots,k}D_{0,1,\ldots,|D|-2} $,   
 at least one of the following is the case:
    \begin{itemize}
    \item $\glue{k}{k+1}$ is not visible relative to $P_{0,1,\ldots,k}D_{0,1,\ldots,a}R$, or 
    \item $R$ has a lower visible glue on $\ell$ than $P$.
    \end{itemize}

    \label{onepath:conclusion:allbroken}
\end{enumerate}
\end{lemma}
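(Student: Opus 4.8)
The plan is to attempt to \emph{pump} $P$ between the two indices $i<j$ supplied by Lemma~\ref{lem:visibility-setup} and to case‑split on what goes wrong. I argue the $V_P^+$ case; the $V_P^-$ case follows by the same left/right mirroring used to pass from Lemma~\ref{lem:columns} to Corollary~\ref{cor:columns} and from Lemma~\ref{lem:semiprecious} to Corollary~\ref{cor:semiprecious-minus}. Recall what Lemma~\ref{lem:visibility-setup} gives here: $\glueP{i}{i+1},\glueP{j}{j+1}\in V_P^+$ (so $P_{i+1}$ lies due east of $P_i$ and $P_{j+1}$ due east of $P_j$), $\type{\glueP{i}{i+1}}=\type{\glueP{j}{j+1}}$, and $\vec v:=\vect{P_iP_j}$ has $x$‑component at least $3m$ and $y$‑component of absolute value at most $3m$, with $P_i,P_{i+1},P_j,P_{j+1}$ all in the horizontal part of the simulation zone. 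Let $\olq$ be the pumping of $P$ between $i$ and $j$ (Definition~\ref{def:pumpingPbetweeniandj}) and let $Q$ be the maximal prefix of $\olq$ that is an assemblable path of $\calU$. Note that $Q$ contains $P_{0,1,\ldots,j}$ as a prefix (that prefix is a prefix of $P$, hence assemblable); that $\olq$ binds at every index --- the seam between consecutive copies uses a glue of type $\type{\glueP{i}{i+1}}=\type{\glueP{j}{j+1}}$ on an east side, by Lemma~\ref{lem:torture}; and that every pumped tile $\olq_{i+1},\olq_{i+2},\ldots$ has $x$‑coordinate at least $(\min_r x_{P_r})+3m\ge 2m$, hence lies strictly to the right of $\sigma$, which occupies only the $m\times m$ seed supertile containing the origin. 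If at any point $Q$ places a tile outside the simulation zone we are in Conclusion~\ref{onepath:conclusion:breakscale}, so assume from now on that it does not; then, since the pumped tiles march rightwards by at least $3m$ per period and stay in the zone, they stay below height $h=10m$ and in columns right of $\ell$, so every prefix of $Q$ is \nhs.

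If $Q=\olq$ is infinite then, by Lemma~\ref{lem:torture}, it contains for every $k$ the block $\asm{P_{i,i+1,\ldots,j-1}}$ translated by $k\vec v$, so $Q$ attains arbitrarily large $x$‑coordinates and must leave the simulation zone (whose horizontal part has bounded width, whose vertical part has width $2m$, with width‑$m$ fuzz), giving Conclusion~\ref{onepath:conclusion:breakscale}. So assume $Q$ is finite. Then the tile $\olq_{|Q|}$ binds to $\olq_{|Q|-1}$ but cannot be legally added, so $\pos{\olq_{|Q|}}$ coincides with the position of some earlier tile of $\asm{Q}$ (it cannot conflict with $\sigma$, by column‑disjointness). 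It cannot coincide with a pumped tile at an index $>j$: if $\pos{\olq_{|Q|}}=\pos{\olq_s}$ with $s>j$, then applying the period relation $\olq_{t+(j-i)}=\olq_t+\vec v$ of Lemma~\ref{lem:torture} to both $|Q|$ and $s$ shows $\olq_{|Q|-(j-i)}$ and $\olq_{s-(j-i)}$ already coincide at two distinct indices of the path $Q$, contradicting that $Q$ is simple. Hence $\pos{\olq_{|Q|}}=\pos{P_t}$ for some $t\le j$. Now case‑split on $t$. If $t\le i$, then $D:=\olq_{i+1,i+2,\ldots,|Q|}$ crashes into $\asm{P_{0,1,\ldots,i}}$ and, since $D_0=P_{i+1}$ attaches via $\glueP{i}{i+1}$, which stays visible relative to $Q$ by Lemma~\ref{lem:semiprecious} (as $\glueP{i}{i+1}\in V^+_{P_{0,1,\ldots,j}}\subseteq V^+_Q$), $D$ is an enclosing branch for $P$ at $k=i$ in the sense of Definition~\ref{def:enclosingbranch}(\ref{def:enclosingbranch:DconflictsPrefixOfP}), with $P_{0,1,\ldots,i}D_{0,1,\ldots,|D|-2}=Q_{0,1,\ldots,|Q|-1}$ assemblable, \nhs, and carrying the same visible glue on $\ell$ as $P$; this is Conclusion~\ref{onepath:conclusion:allbroken}(i). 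If $i<t\le j$, I instead try $k=j$ and $D:=\olq_{j+1,\ldots,|Q|}$, whose last tile crashes into $\asm{P_{i+1,\ldots,j}}\subseteq\asm{P_{0,1,\ldots,j}}$; the only obstruction is that some pumped glue could sit below $\midpoint{P_j}{P_{j+1}}$ and destroy visibility of $\glueP{j}{j+1}$ relative to $Q$, and in that event I fall back on a \emph{shortcut}: grow from $P$ a \hs path $P'$ containing a repeated visible glue whose splicing‑out translates $P'$'s unique height‑$h$ tile (hence the whole vertical arm) by a nonzero multiple of a vector of $x$‑magnitude at least $3m>2m$, placing a tile outside the vertical part of the simulation zone and again giving Conclusion~\ref{onepath:conclusion:breakscale}.

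For Conclusion~\ref{onepath:conclusion:allbroken}(ii), let $R$ be any path with $P_{0,1,\ldots,k}D_{0,1,\ldots,a}R$ \hs and turning right from $P_{0,1,\ldots,k}D_{0,1,\ldots,|D|-2}$. I close up the canonical embedding $\frak{E}_{P_{0,1,\ldots,k}D_{0,1,\ldots,|D|-2}}$ with the appropriate extra segment --- a piece of $\ell$ in the Definition~\ref{def:enclosingbranch}(\ref{def:enclosingbranch:belowl}) case, or a short path through $\sigma$ and the relevant prefix of $P$ in the Definition~\ref{def:enclosingbranch}(\ref{def:enclosingbranch:DconflictsPrefixOfP}) case --- into a simple closed curve $c$ (this is exactly why the enclosing branch was defined this way), so by the Jordan curve theorem $c$ bounds a region $\mathcal{C}$ whose boundary contains $\glueP{k}{k+1}$ together with its visibility ray. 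Since $R$ turns right from $D$ at index $a$, its first tile off $D$ is forced to the side of $c$ from which reaching height $h=10m$ (which lies outside $\mathcal{C}$) requires crossing $c$; being a path, $R$ can cross $c$ only along the added $\ell$‑segment --- forcing $R$ to place a visible glue on $\ell$ strictly below the points of $\frak{E}_P\cap\ell$, i.e.\ below $P$'s visible glue on $\ell$ --- or along the position/visibility ray of $\glueP{k}{k+1}$ --- forcing $R$ to pass directly below $\midpoint{P_k}{P_{k+1}}$, hiding the visibility of $\glueP{k}{k+1}$ relative to $P_{0,1,\ldots,k}D_{0,1,\ldots,a}R$. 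Either way one of the two stated alternatives holds.

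The step I expect to be the main obstacle is the finite‑crash analysis: pinning down exactly where $Q$ collides in $P_{0,1,\ldots,j}$, handling the sub‑case where a pumped glue spoils visibility of $\glueP{j}{j+1}$, and making the ``shortcut'' escape precise --- one must explicitly build a \hs path from $P$, isolate the repeated visible‑glue pair to splice out, and verify that the resulting spliced sequence is still an assemblable \hs path whose height‑$h$ tile has moved outside the width‑$2m$ vertical part of the simulation zone. Getting the geometry of the enclosing curve $c$ consistent --- that the right‑hand side of $P_{0,1,\ldots,k}D_{0,1,\ldots,|D|-2}$ is the relevant side and that $\glueP{k}{k+1}$'s visibility ray lies on $\partial\mathcal{C}$ --- is the other delicate point, but it is essentially bookkeeping with the embedding machinery of Section~\ref{secvisibility}.
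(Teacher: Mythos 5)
Your argument for Conclusion~\ref{onepath:conclusion:allbroken}(ii) has a gap that is, in fact, the crux of the entire lemma. You propose to close $\frak{E}_{P_{0,1,\ldots,k}D_{0,1,\ldots,|D|-2}}$ into a simple closed curve $c$, let $\mathcal C$ be the bounded component, and argue that any $R$ turning right from $D$ is forced to cross $c$ either along the added $\ell$-segment or along the visibility ray of $\glue{k}{k+1}$. But consider the orientation: the visibility of $\gluePD{k}{0}$ relative to $Q$ means (as noted in the proof of Theorem~\ref{thm:onepath}) that the \emph{left-hand} side of the walk $P_{0,\ldots,k}D_{0,\ldots,|D|-2}$ is the bounded component $\mathcal C$. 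A path $R$ that \emph{turns right} from $D$ therefore steps into the \emph{unbounded} component, and there is no topological obstruction whatsoever to $R$ reaching height $h$ without ever touching the visibility ray or placing a lower glue on $\ell$. In the paper's terminology, this is exactly the situation in which the set $\mathcal R$ (paths $Q_{0,\ldots,r}R$ turning right from $Q$ while keeping $\glueP{j}{j+1}$ and the $\ell$-glue visible) is nonempty --- the Jordan-curve argument cannot rule this out, and the lemma is not yet established.

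The paper handles the $\mathcal R \neq \emptyset$ case with a substantially different idea that your proposal lacks: pick the \emph{most right-priority} $Q_{0,\ldots,r}R\in\mathcal R$, translate it backward by $\vect{P_jP_i}$, and try to grow $Q_{0,\ldots,i}\bigl((Q_{j+1,\ldots,r}R)+\vect{P_jP_i}\bigr)$ from $\sigma$. If this grows fully, $R_{|R|-1}+\vect{P_jP_i}$ is outside the simulation zone (Conclusion~\ref{onepath:conclusion:breakscale}); otherwise the maximal growable prefix gives a new enclosing branch at $k=i$, around which one defines the analogous set $\mathcal S$. If $\mathcal S=\emptyset$, done with $k=i$. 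Otherwise one translates a member of $\mathcal S$ \emph{forward} by $\vect{P_iP_j}$ to get a third enclosing branch at $k=j$ for which any escaping right-turner would have to be strictly more right-priority than the chosen $R$, contradicting $R$'s maximality. It is this \emph{consumption argument} --- the interplay of right-priority, the two translations, and the finiteness of the set of $h$-successful paths --- that makes Conclusion~\ref{onepath:conclusion:allbroken}(ii) hold, not the Jordan curve theorem. Your sketch also leans on a vague ``shortcut'' in the case where a pumped glue spoils visibility of $\glueP{j}{j+1}$; but by Lemma~\ref{lem:semiprecious} that case cannot arise (visibility in $V^+$ survives pumping), so that branch is a red herring, while the real branching --- on whether $\mathcal R$ is empty --- is not present in your proposal at all.
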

\begin{sidewaysfigure}
    \begin{center}
      \input{shortcut-edited.tex}

      \caption{A summary of the major points in the proof of Lemma~\ref{lem:onepath}. The proof essentially performs a depth-first, left-first, search of this tree with each leaf ending in one of the lemma conclusions. Points in the tree and proof are numbered \ref{qinf}, \ref{qfin}, $\ldots$, \ref{conclusion} correspondingly.}
      \label{fig:onepath}
    \end{center}
\end{sidewaysfigure}
\begin{proof}
  A tree representing the proof is shown in Figure~\ref{fig:onepath}. The proof performs a depth-first, left-first, search of the tree. Points in the tree and proof are numbered \ref{qinf}, \ref{qfin}, $\ldots\,$, \ref{conclusion} correspondingly.

  \begin{figure}[H]
    \begin{center}
      \includegraphics{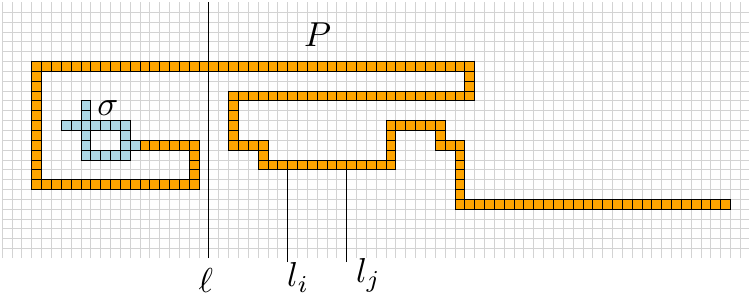}\hspace{3ex}
      \includegraphics{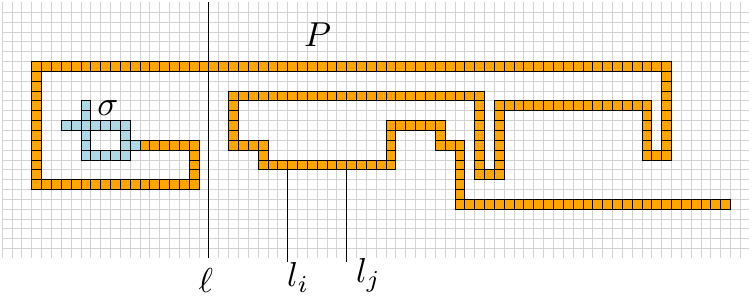}
    \end{center}
    \caption{Two example \hs paths. For brevity, their suffixes that grow to height $h$ are not shown.  
    Here, and in all figures in the paper, $l_i$ and $l_j$ denote the two visibility rays of the glues $\glueP{i}{i+1}$ and $\glueP{i}{i+1}$ respectively. 
    {\bf Left}: A path $P$ where the segment  $P_{i+1,i+2,\ldots,j}$ can be (infinitely) pumped (see Figure~\ref{fig:onepath-inf}).
    {\bf Right}: A path $P$ where the segment  $P_{i+1,i+2,\ldots,j}$ can not be (infinitely) pumped (see Figure~\ref{fig: onepath}).}
  \end{figure}

  Intuitively, let $\olq$ be the sequence composed of $P_{0,1,\ldots,i}$ followed by the ``infinite pumping'' of the segment $P_{i+1,i+2,\ldots,j}$. Formally,  let $\olq$ be the pumping of $P$ between $i$ and $j$ (Definition~\ref{def:pumpingPbetweeniandj}), 
  and let $Q \in \prodpathsU$ be the longest  prefix of $\olq$ that (a) is a path (i.e.\ non-self intersecting), (b) does not intersect $\sigma$,  and (c) is assemblable from the seed $\sigma$ of $\calU$.
  $Q$ is either infinite or  finite.
  \begin{enumerate}[label=P\arabic*]
  \item \label{qinf} If $Q$ is infinite (as in the example in Figure~\ref{fig:onepath-inf}):
      Then since $\vect{P_iP_j}$ has a nonzero horizontal component (by Lemma~\ref{lem:visibility-setup} the horizontal distance between  $\glueP{i}{i+1}$ and $\glueP{j}{j+1}$  is $\geq 3m$), this infinite assembly reaches infinitely to the right for glue $g$ in the lemma statement being in $V_{P}^{+}$ (\resp, to the left for $g$ being in $V_{P}^{-}$), and thus places tiles outside of the simulation zone, giving Conclusion~\ref{onepath:conclusion:breakscale}.

      \begin{figure}[h]
        \begin{center}
          \includegraphics{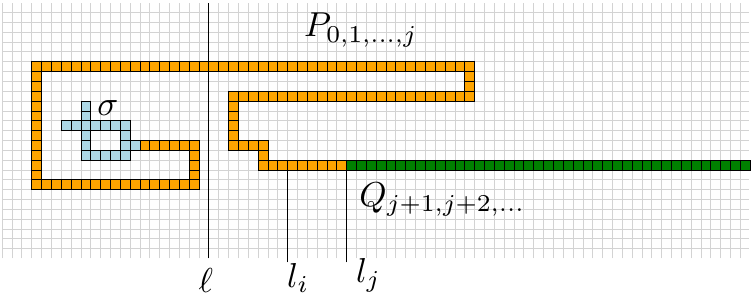}
        \end{center}
        \caption{Case \ref{qinf}:  
        $Q$ is an infinite path because the sequence $\olq$ does not intersect $P_{0,1,\ldots,j}$ nor $\sigma$. Note that $Q$ is defined so that it shares a prefix with $P$:  i.e.\   $P_{0,1,\ldots,j} = Q_{0,1,\ldots,j}$.\label{fig:onepath-inf}}
      \end{figure}
    \item \label{norrp}
   
   If some prefix $Q'$ of $Q$ is \hs we claim that we get Conclusion~\ref{onepath:conclusion:breakscale} of the lemma statement. 
 To see this, note that $P_{0,1,\ldots,j}$ is  \nhs. Therefore, if $Q'$ were \hs, $Q$ 
 would place a tile $Q_s = Q_{|Q'|-1}$ (where $|Q'| \in \N$ is the length of $Q'$) on the horizontal line $y=h$, which means that $Q$ would also have a tile at position $\pos{Q_s - \vect{P_i P_j}}$ (i.e.\ positioned one iteration of the ``pumping'' earlier, see Figure~\ref{fig:onepath-out-of-sim-zone}). 
 But  $\pos{Q_s - \vect{P_i P_j}}$ is outside of the simulation zone: $\pos{Q_s}$ is in the simulation zone on line~$h$, the ``vertical part'' of the simulation zone (that intersects line~$h$ and for distance $7m$ below~$h$)  is of width $3m$,  $\vect{P_i P_j}$ has horizontal length $\geq 3m$ and vertical length $\leq 3m$ (Lemma~\ref{lem:visibility-setup}). 
 This immediately gives Conclusion~\ref{onepath:conclusion:breakscale}.

      \begin{figure}[H]
        \begin{center}
          \includegraphics{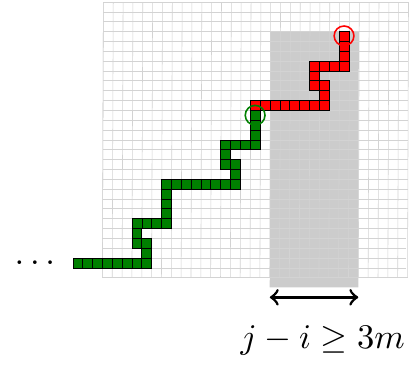}
        \end{center}
        \caption{Case \ref{norrp}: $Q$ (in green and red in the picture) is \hs. In this case, the last tile $Q_{|Q|-1}$ of $Q$ (circled in red) is at height $h$. Since $|x_{P_j}-x_{P_i}|\geq 3m$, and because $Q_{i,i+1,\ldots}$ is periodic, $Q$ also contains tile $Q_{|Q|-1} - \vect{P_jP_i}$ (in red in the picture), which is outside of the simulation zone (in gray).}
        \label{fig:onepath-out-of-sim-zone}
      \end{figure}

    \item \label{qfin}
      Else $Q$ is finite, and  \nhs. 
      We assert the claim that $\olq_{|Q|}$ intersects $\sigma\cup \asm{P_{0,1,\ldots,j}}$, which we will next prove.\footnote{Note that we already know  that $\olq_{|Q|}$ intersects $\sigma$ or $\olq_{0,1,\ldots,|Q|-1}$.
      Intuitively, if $\olq_{j+1,j+2\ldots}$ intersects itself, then since $\olq_{i,i+1,\ldots}$ is periodic and $i<j$, we can find another self-intersection $j-i$ indices earlier along $\olq$. Applying this argument repeatedly will show an intersection with $P_{i, i+1, \ldots, j}$.}

\newcommand{\kmin}{a}

      If $\olq_{|Q|}$ intersects $\sigma$, we are immediately done. Else, $\olq$ intersects itself. Let  $\kmin < |Q|$ be the {\em smallest} integer such that $ \pos{\olq_\kmin} = \pos{\olq_{|Q|}} $. We will show that $a\leq j$. 
      Assume, for the sake of contradiction that $\kmin > j$. 

      Note that $|Q|-(j-i) \geq \kmin-(j-i)\geq i+1$.
      Therefore, 
      we can apply Lemma~\ref{lem:torture} to show that
      $Q_{\kmin-(j-i)} = Q_\kmin - \vect{P_iP_j}$ and
      $Q_{|Q|-(j-i)} = \olq_{|Q|} - \vect{P_iP_j}$. 
      But  then since $\pos{Q_\kmin}  = \pos{\olq_\kmin} = \pos{\olq_{|Q|}}$, and since $a>j$,\footnote{intuitively, we are in the $\geq 2$nd iteration of pumping} we get that
      $\pos{Q_{\kmin-(j-i)}} = \pos{Q_{\kmin}} - \vect{P_iP_j} = \pos{\olq_{|Q|}} - \vect{P_iP_j} = \pos{Q_{|Q|-(j-i)}}$.
      Therefore, $Q$ intersects itself which is a contradiction. 
      Hence $\kmin\leq j$ and thus
      $\olq_{|Q|}$ intersects $\sigma\,\cup\, \asm{P_{0,1,\ldots,j}}$ as claimed.

      An example of this case is shown in Figure~\ref{fig: onepath}.
      \begin{figure}[H]
        \begin{center}
          \includegraphics{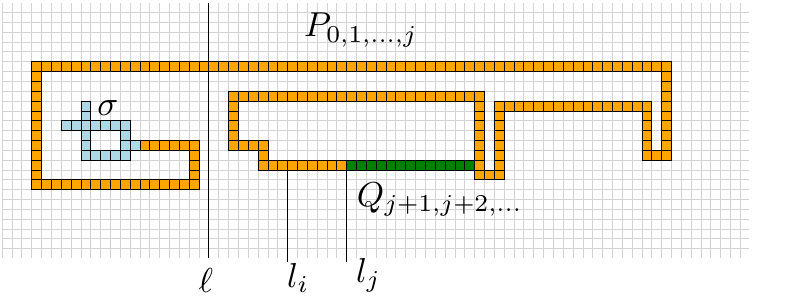}
        \end{center}
        \caption{Case \ref{qfin}: $\olq$ conflicts with $P_{0,1,\ldots,j}$. The figure shows $\sigma$ and the path $Q$. Note that $Q$ is defined so that it shares a prefix with $P$:  i.e.\   $P_{0,1,\ldots,j} = Q_{0,1,\ldots,j}$. }
        \label{fig: onepath}
      \end{figure}

    \item \label{qnoths}
      Let $\mathcal R$ be the set of all \hs paths of the form $Q_{0,1,\ldots,r}R \in \prodpathsU$, for any $r$ such that $j < r \leq |Q|-1$ and any path $R$, such that all of the following hold:
      \begin{enumerate}[label=\textnormal{(\roman*)}]
      \item $Q_{0,1,\ldots,r}R$ turns right (\resp, left) from $Q$,\label{qnoths:turnright}
      \item \label{qnoths:jVis} $\glueP{j}{j+1}=\glueQ{j}{j+1}$ is visible  relative to $Q_{0,1,\ldots,r}R$,
      \item \label{qnoths:ell} $Q_{0,1,\ldots,r}R$ has the same  visible glue as $P$ on $\ell$.
      \end{enumerate}

      {\bf If $\mathcal R$ is empty\footnote{which includes in particular the case where no \hs path can branch from $Q$}}, then we claim that we get Conclusion~\ref{onepath:conclusion:allbroken} of the lemma statement, with $k=j$ and $D=\olq_{j+1,j+2,\ldots,|Q|}$. Indeed, $D$ is an enclosing branch for $P$ at $j$ with the {\em same visible glue} on $\ell$ as~$P$.
      Moreover, by Lemma~\ref{lem:semiprecious} (\resp, Corollary~\ref{cor:semiprecious-minus}), $Q$ does not hide the visibility of $\glueP{i}{i+1}$ or $\glueP{j}{j+1}$, hence $D$ satisfies Conclusion~\ref{onepath:conclusion:allbroken}(i).
      Then, Conclusion~\ref{onepath:conclusion:allbroken}(ii) is immediately satisfied since $\mathcal{R}$ is empty (intuitively, $\mathcal R$ is the set of paths that do not meet that conclusion).

  \item\label{push}
    {\bf Else, $\mathcal R$ is not empty}.
    Until the end of this proof, let $r$ and $R$ be such that $Q_{0,1,\ldots,r}R$ is the most right-priority (\resp, left-priority) of $\mathcal R$. Note that since all paths of $\mathcal R$ are \hs, the last tile of  $R_{|R|-1}$ is  the only tile of $R$ at height $h$. Such a path is shown in Figure~\ref{fig:onepathR}.
    Notice that, because $R$ is the most right-priority path of $\mathcal R$, $R$ does not turn left from $Q$ (before turning right from $Q$), or else we could find a \hs path turning right earlier than $R$.

    \begin{figure}[H]
      \begin{center}
        \includegraphics{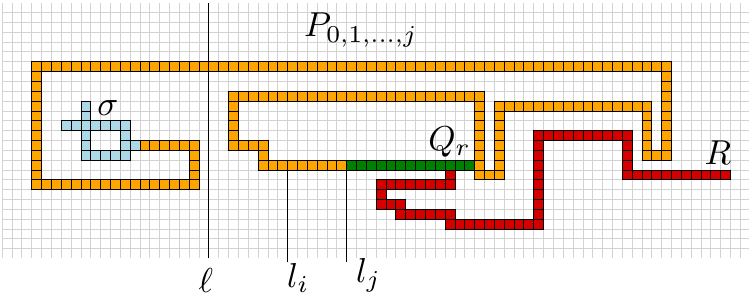}
      \end{center}
      \caption{Case~\ref{push}: there is at least one path in $\mathcal R$. Let $Q_{0,1,\ldots,r}R \in \mathcal{R}$ be the most right-priority one.}
      \label{fig:onepathR}
    \end{figure}
    We grow $\sigma\cup \asm{P_{0,1,\ldots,i}} = \sigma\cup \asm{Q_{0,1,\ldots,i}}$, and then the maximal assemblable prefix of the following ``translated path'': $(Q_{j+1,j+2,\ldots,r}R)+\vect{P_jP_i}$. In other words, starting from $\sigma$, we grow the following path: $Q_{0,1,\ldots,i}((Q_{j+1,j+2,\ldots,r}R)+\vect{P_jP_i})$ (Figure~\ref{fig:onepath-pull-cangrow} below highlights this ``backwards translation'').
  \item\label{rrpgrows}
      If all of $(Q_{j+1,j+2,\ldots,r}R)+\vect{P_jP_i}$ is assemblable from $\sigma\cup \asm{ P_{0,1,\ldots,i} }$, as in the example in Figure~\ref{fig:onepath-pull-cangrow}, then we claim that $\pos{ R_{|R|-1} +\vect{P_jP_i}} $ is outside the simulation zone, yielding  Conclusion~\ref{onepath:conclusion:breakscale}. To see this
      first note that since $R$ is \hs the tile $R_{|R|-1}$ (i.e.\ untranslated) is on  the horizontal line $y=h$ and thus in the ``vertical part'' of the simulation zone (see Figure~\ref{fig:fuzz}) which is of width $3m$.  
      But, by Lemma~\ref{lem:visibility-setup}, $\vect{P_jP_i}$ has  horizontal length  $\geq 3m $  and  vertical length $\leq 3m$. This means that $\pos{ R_{|R|-1} +\vect{P_jP_i}} $ is outside of the simulation zone, giving  Conclusion~\ref{onepath:conclusion:breakscale}. 

      \begin{figure}[H]
        \begin{center}
          \includegraphics{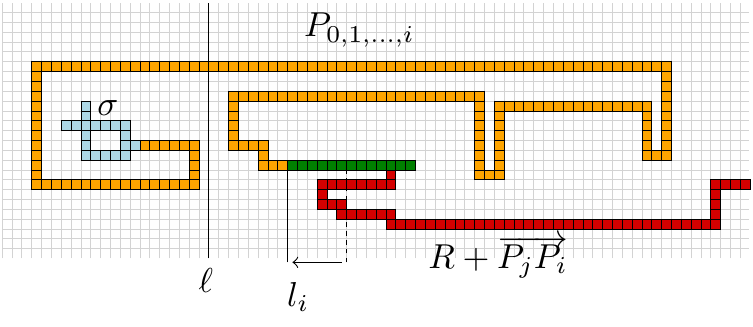}
        \end{center}
        \caption{Case~\ref{rrpgrows}: $(Q_{j+1,j+2,\ldots,r}R)+\vect{P_jP_i}$ is assemblable from $\sigma\,\cup\, \asm{ P_{0,1,\ldots,i} }$.}
        \label{fig:onepath-pull-cangrow}
      \end{figure}

    \item \label{rrpfails}
      Else not all of $(Q_{j+1,j+2,\ldots,r}R)+\vect{P_jP_i}$ is assemblable from $\sigma\,\cup\, \asm{ P_{0,1,\ldots,i} }$.
      Let $R'$ be the longest prefix of $R$ such that $R'+\vect{P_jP_i}$ does not intersect  $\sigma\cup \asm{ P_{0,1,\ldots,i} }$ and $R'+\vect{P_jP_i}$ does not have any visible glue on $\ell$ below the visible glue of $P$ on $\ell$. 
      (See Figure~\ref{fig:onepath-pull} for an example where the longest assemblable prefix of $R+\vect{P_jP_i}$ does not place any new visible glue on $\ell$, and Figure~\ref{fig:enclosingwithl} for an example where it does.) 
      
      In this case, we make the more specific claim that $R+\vect{P_jP_i}$ conflicts with $\sigma\cup \asm{ P_{0,1,\ldots,i} }$: indeed, by its definition in \ref{push} above, $R$ does not conflict with $Q_{0,1,\ldots,r}$, and hence does not conflict with
      $Q_{j+1,j+2,\ldots,r}$, 
      hence $R+\vect{P_jP_i}$ does not conflict with $Q_{j+1,j+2,\ldots,r}+\vect{P_jP_i} = Q_{i+1} Q_{i+2} \ldots Q_{r-(j-i)}$.  
      The only part of the assembly $\sigma\cup\asm{ P_{0,1,\ldots,i}} \cup (\asm{Q_{j+1,j+2,\ldots,r}  +\vect{P_jP_i} } )$ that $R+\vect{P_jP_i}$ can conflict with is therefore $\sigma\cup \asm{ P_{0,1,\ldots,i}}$.

      Observe that $(Q_{j+1,j+2,\ldots,r}R_{0,1,\ldots,|R'|})+\vect{P_jP_i}$ is an enclosing branch for $P$ at $i$ (Definition~\ref{def:enclosingbranch}\footnote{Note that both of the cases (1) and (2) of Definition~\ref{def:enclosingbranch} can happen here.}).

      \begin{figure}[H]
        \begin{center}
          \includegraphics{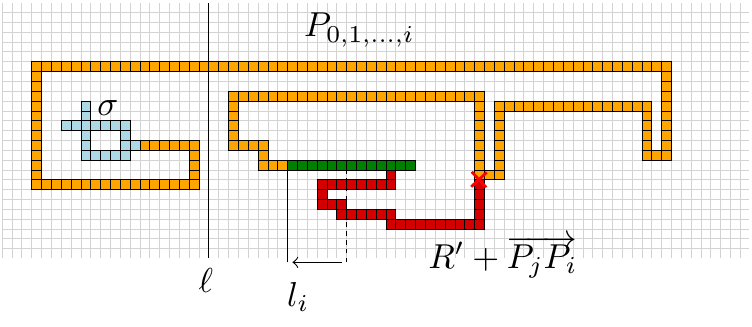}
        \end{center}
        \caption{Case~\ref{rrpfails}: Not all of $(Q_{j+1,j+2,\ldots,r}R)+\vect{P_jP_i}$ is assemblable from $\sigma\,\cup\, \asm{ P_{0,1,\ldots,i} }$.
        }
        \label{fig:onepath-pull}
      \end{figure}

      \begin{figure}[h] 
        \begin{center}
          \centerline{
            \includegraphics[valign=b]{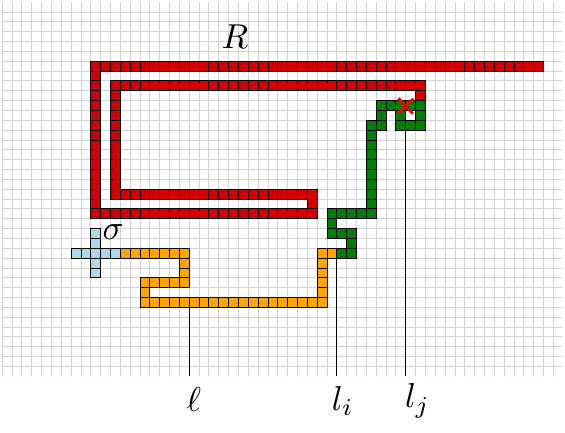}\hspace{3ex}
            \includegraphics[valign=b]{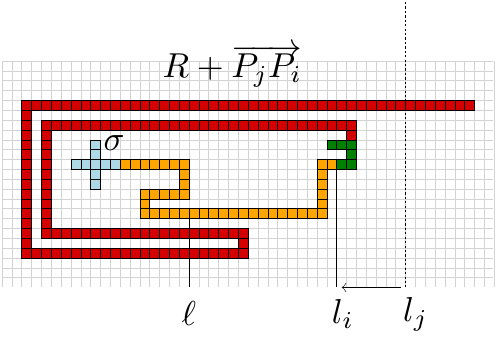}\hspace{3ex}
            \includegraphics[valign=b]{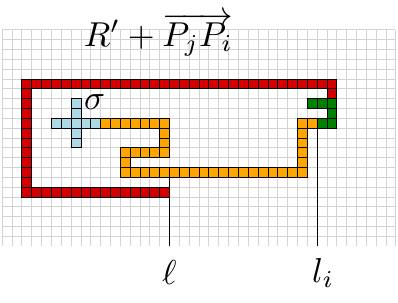} }
        \end{center}
        \vspace{-1.0\baselineskip}
        \caption{{\bf Left}: an example path $R$ from \ref{rrpfails} (the red cross is an intersection of $Q$ with itself). {\bf Centre}: Example  for \ref{rrpfails},  i.e.\ $R+\vect{P_jP_i}$ places a glue on $\ell$ below the visible glue of $P$ on $\ell$.
          {\bf Right}:  $R'+\vect{P_jP_i}$  is the largest prefix of $R+\vect{P_jP_i}$ that does not place a visible glue on $\ell$ below the visible glue of $P$ on $\ell$.}\label{fig:enclosingwithl}
      \end{figure}

    \item \label{norestart}
      We now consider the set $\mathcal S$ of \hs paths of the form $P_{0,1,\ldots,i}XS$
      for some prefix $X$ of $(Q_{j+1,j+2,\ldots,r}R') +\vect{P_jP_i}$ and some path $S$, such that all of the following hold:
      \begin{itemize}
        \item $P_{0,1,\ldots,i}XS$ turns right (\resp, left) from $P_{0,1,\ldots,i}((Q_{j+1,j+2,\ldots,r}R') +\vect{P_jP_i})$,
        \item $\glueP{i}{i+1}$ is visible relative to $P_{0,1,\ldots,i}XS$, and 
        \item the visible glue of $P$ on $\ell$ is visible relative to $P_{0,1,\ldots,i}X S$.
      \end{itemize}

      There are two cases:
      \begin{enumerate}

      \item \label{norestart1} If $\mathcal S$ is empty this gives Conclusion~\ref{onepath:conclusion:allbroken} with $D= Q_{j+1,j+2,\ldots,r}R_{0,1,\ldots,|R'|} +\vect{P_jP_i}$ and $k=i$. In particular, as noted in case~\ref{rrpfails}, $D$ is an enclosing branch for $P$ at $i$, which satisfies Conclusion~\ref{onepath:conclusion:allbroken}(i).

        Moreover, the fact that $\mathcal S$ is empty immediately shows Conclusion \ref{onepath:conclusion:allbroken}(ii) (intuitively, $\mathcal S$ is the set of paths that do not meet that conclusion).

      \item Else, there is at least one path $P_{0,1,\ldots,i}X S$  in $\mathcal S$.
        See Figure~\ref{fig:onepath-pull-restart} for two examples.
        \label{restart}

        \begin{figure}[ht]
          \begin{center}
            \includegraphics{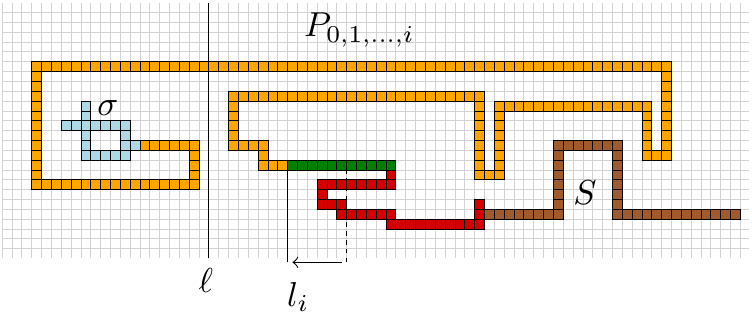}
            \hspace*{0.1\textwidth}
            \includegraphics{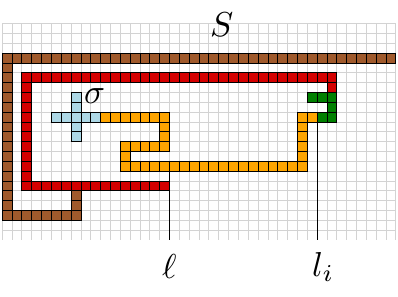}
          \end{center}
          \caption{Two different example paths $S$ that illustrate Case~\ref{restart}: At least one \hs path $XS$ can turn right from $(Q_{j+1,j+2,\ldots,r}R')+\vect{P_jP_i}$ without hiding the visibility of $\glueP{i}{i+1}$ nor of the visible glue of $P$ on $\ell$.}
          \label{fig:onepath-pull-restart}
        \end{figure}
    \end{enumerate}

  \item \label{conclusion}
    The only unresolved case after step~\ref{norestart} is therefore case~\ref{restart}, which we now reason about with the goal of obtaining Conclusion~\ref{onepath:conclusion:allbroken}.

    We next  grow the longest assemblable prefix of the ``forward translated'' segment $XS+\vect{P_i P_j} = (XS)+\vect{P_i P_j}$  and use this to  show  that in all remaining cases we get Conclusion~\ref{onepath:conclusion:allbroken}. 
     See Figure~\ref{fig:onepath-push} for an example.

    For notation, let  $X' = X + \vect{P_iP_j}$ and $S' = S + \vect{P_iP_j}$. 
    Suppose that the visible glue of $P$ on $\ell$ is in $V_P^+$ (respectively, in $V_P^-$).

    From \ref{restart},  $S$ does not hide the visibility of $ \namedGlue{P_i}{P_{i+1}}$.
    Therefore,  $S'$ does not place a glue directly below
    $\midpoint{P_j}{P_{j+1}}$
    either. 
    Notice that $P_{0,1,\ldots,j}X'$ is assemblable (as it is a prefix of $P_{0,1,\ldots,j} Q_{j+1 ,\ldots,r}  R $), and let $s$ be the largest integer such that
    $P_{0,1,\ldots,j}X'S'_{0,1,\ldots,s-1}$ is assemblable and has the same visible glue as $P$ on $\ell$.

    If $s = |S|$, then the last point of $S'$ is outside of the simulation zone, because the last tile of~$S$ (i.e.\ untranslated) is at height $h$. This yields Conclusion~\ref{onepath:conclusion:breakscale}.

    Else, $s < |S|$.
    Moreover, we picked $S$ in case~\ref{restart} so that $XS$ turns right (\resp, left) at least once from $(Q_{j+1,j+2,\ldots,r}R')+\vect{P_jP_i}$, therefore $S'$ also turns right (\resp, left) at least once from $ Q_{j+1,j+2,\ldots,r}R'  $.

    \begin{figure}[ht]
      \begin{center}
        \includegraphics{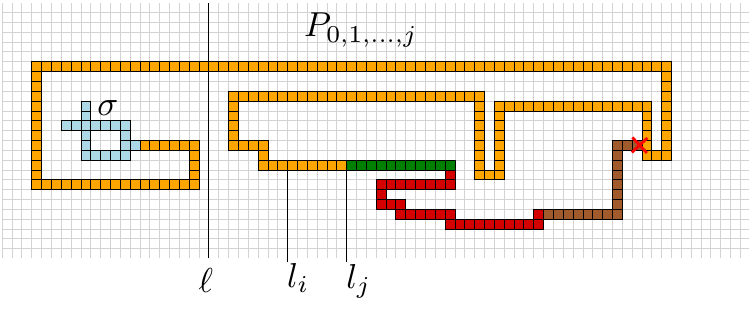}
      \end{center}
      \caption{Case~\ref{conclusion}: We have just constructed an enclosing branch for $P$ at $j$. In the drawing, that enclosing branch is made of the path following the green, red and then brown tiles. In this example, $X$ is the concatenation of the green ($Q_{j,j+1,\ldots,r}$) and red (a prefix of $R'$) paths and $S$ is in brown (but in general  $S$ can start from any of the green or red tiles).}
      \label{fig:onepath-push}
    \end{figure}

    Let $D = X'S'_{0,1,\ldots,s}$, and let  $Z$ be any path such that $P_{0,1,\ldots,j}D_{0,1,\ldots,a}Z$ is \hs and turns right (\resp, left) from $P_{0,1,\ldots,j}D_{0,1,\ldots,|D|-2}$, for some $a\geq 0$.
    Note that $P_{0,1,\ldots,j}D_{0,1,\ldots,a}Z$ cannot be in $\mathcal R$, because then $P_{0,1,\ldots,j}D_{0,1,\ldots,a}Z$ would more right-priority (\resp, left-priority) than $Q_{0,1,\ldots,r} R $, $r>j$ (contradicting our choice of $R$ in~\ref{push}) since:  $P_{0,1,\ldots,j}D_{0,1,\ldots,a}Z$ either turns right (\resp, left) from $Q$ earlier than $R$, or turns right (\resp, left) from $R$, or turn right (\resp, left) from $S'$ (which turns right (\resp, left) from $Q_{j+1, j+2 \ldots r} R'$). 

    Therefore,\footnote{I.e. we have a path $P_{0,1,\ldots,j}D_{0,1,\ldots,a}Z$ of the form $Q_{0,1,\ldots r} R $ (see~\ref{qnoths}), and that turns right from $Q$ (satisfies \ref{qnoths}\ref{qnoths:turnright}), yet is not in $\mathcal{R}$ hence $P_{0,1,\ldots,j}D_{0,1,\ldots,a}Z$ violates Conditions \ref{qnoths}\ref{qnoths:jVis} and \ref{qnoths}\ref{qnoths:ell}.}  $Z$ must either have its visible glue on $\ell$ lower than that of $P$, or hide the visibility of $\glueP{j}{j+1}$. This is precisely Conclusion~\ref{onepath:conclusion:allbroken} with $k=j$. Notice that~$D$ is in fact an enclosing branch for $P$ at $j$ because:  neither $R'$ nor $S'$ hide the visibility of $\glueP{j}{j+1}$, and by Lemma~\ref{lem:semiprecious} (\resp Corollary~\ref{cor:semiprecious-minus}), neither does $Q_{j,j+1,\ldots}$, and finally that $D$ is composed of subpaths from these paths.

  \end{enumerate}
\end{proof}

The following theorem (\ref{thm:onepath}) essentially states that for any path $P$, we can grow an assembly containing no \hs path, conflicting with $P$. 
 The proof is almost a direct consequence of Lemma~\ref{lem:onepath}.  Note that we can think of Theorem~\ref{thm:onepath} as a weaker  version of our main result (Theorem~\ref{thm:main}). That main result (Theorem~\ref{thm:main}) builds a single assembly containing no \hs path and that conflicts with {\em all} possible \hs paths.

\begin{theorem}
  Let $P\in\hsU$ be a \hs path. Then either there is a producible assembly $\alpha\in\prodasm{\mathcal{U}}$ with tiles outside of the simulation zone, or else there is an assemblable \nhs path of the form 
  $P_{0,1,\ldots,k}D_{0,1,\ldots,|D|-2}$ that conflicts with~$P$, and thus $\sigma\cup\asm{P_{0,1,\ldots,k}D_{0,1,\ldots,|D|-2}}$ prevents $P$ from growing to be \hs.
  Moreover, $D$ is constructed as in Lemma~\ref{lem:onepath}. 
\label{thm:onepath}
\end{theorem}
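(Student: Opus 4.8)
The plan is to derive Theorem~\ref{thm:onepath} directly from Lemma~\ref{lem:onepath}. I would apply Lemma~\ref{lem:onepath} to the given \hs path $P$, with $i,j$ and the line $\ell$ as furnished by Lemma~\ref{lem:visibility-setup}. If the lemma returns Conclusion~\ref{onepath:conclusion:breakscale}, there is already a producible assembly of $\calU$ with tiles outside the simulation zone and the theorem holds. Otherwise Conclusion~\ref{onepath:conclusion:allbroken} gives an enclosing branch $D$ for $P$ at some $k\in\{i,j\}$; set $\widehat P=P_{0,1,\ldots,k}D_{0,1,\ldots,|D|-2}$, which by Conclusion~\ref{onepath:conclusion:allbroken}(i) is an assemblable \nhs path in $\prodpathsU$ with the same visible glue on $\ell$ as $P$. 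The remaining goal is to show $\widehat P$ \emph{conflicts} with $P$ at a position that $P$ must occupy in order to reach height $h$; this gives at once that $\sigma\cup\asm{\widehat P}$ prevents $P$ from becoming \hs, since then $\asm{P}\cup\asm{\widehat P}\cup\sigma$ is not a valid assembly.

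To locate the conflict I would split on $\type{P_{k+1}}$. By Definition~\ref{def:enclosingbranch} we have $\pos{D_0}=\pos{P_{k+1}}$. If $\type{P_{k+1}}\neq\type{D_0}$, then $\widehat P$ and $P$ place different tile types at $\pos{P_{k+1}}$, and since $P_{k+1}$ is an internal tile of $P$ (as $k\in\{i,j\}$ and, by Lemma~\ref{lem:visibility-setup}, the glues $\glue{i}{i+1},\glue{j}{j+1}$ occur well before $P$'s last tile), we are done. So assume $\type{P_{k+1}}=\type{D_0}$, i.e. $P_{k+1}=D_0$, and let $a\le|D|-2$ be largest with $P_{k+1+t}=D_t$ for all $t\le a$, so $P$ and $\widehat P$ share the prefix $P_{0,1,\ldots,k}D_{0,1,\ldots,a}$ and diverge after it. Here I would invoke the geometry: $\widehat P$, together with a segment of the visibility ray of $\glue{k}{k+1}$, a segment of $\ell$, and possibly $\sigma$, encloses a bounded component $\mathcal C$ of $\R^2$, lying on the left of $\widehat P$ when $P$'s visible glue on $\ell$ is in $V_P^+$ and on the right when it is in $V_P^-$. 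The path $P$ cannot cross $\frak{E}_{P_{0,1,\ldots,k}D_{0,1,\ldots,a}}$ (part of $P$, which is simple), nor the visibility ray of $\glue{k}{k+1}$ (it certifies visibility relative to $P$), nor $\ell$ below $P$'s visible glue (which equals that of $\widehat P$ by Conclusion~\ref{onepath:conclusion:allbroken}(i)), nor pass through $\sigma$ with a conflicting tile; hence if $P$ turned into $\mathcal C$ it could only escape through $\frak{E}_{D_{a+1,\ldots,|D|-2}}$, and it would do so either with a differing tile type — again a conflict with $\widehat P$ on $P$'s route, so we would be done — or else remain in the bounded set $\mathcal C$ forever, contradicting that $P$ is \hs. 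Therefore, after disposing of the cases that already produce a conflict, $P$ has the form $P_{0,1,\ldots,k}D_{0,1,\ldots,a}R$ and turns right (\resp, left) from $\widehat P$, so Lemma~\ref{lem:onepath}(\ref{onepath:conclusion:allbroken})(ii) applies to $R$.

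Finally, Lemma~\ref{lem:onepath}(\ref{onepath:conclusion:allbroken})(ii) forces one of two things, each of which I would rule out: either $\glue{k}{k+1}$ is not visible relative to $P$ — impossible, since $k\in\{i,j\}$ and by Lemma~\ref{lem:visibility-setup} $\glue{k}{k+1}\in V_P^+$ (\resp, $V_P^-$)$\,\subseteq V_P$, hence is visible relative to $P$ — or $R$ has a lower visible glue on $\ell$ than $P$ — impossible, since $R$ is a suffix of $P$, so every edge of $R$ meeting $\ell$ is already an edge of $P$ meeting $\ell$, and hence $R$'s lowest crossing of $\ell$ is not strictly below $P$'s. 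This contradiction shows the sub-case $\type{P_{k+1}}=\type{D_0}$ (with no intervening conflict) cannot occur, so in every case $\widehat P$ conflicts with $P$ at a position on $P$'s path to height $h$, and $D$ is the object produced by Lemma~\ref{lem:onepath}, as required; the $V_P^-$ case is symmetric throughout. I expect the main obstacle to be the geometric claim in the middle paragraph — that when $P_{k+1}=D_0$ the path $P$ either conflicts with $\widehat P$ or peels off it to the right (\resp, left) — which is a Jordan-curve argument of the kind already used for Lemmas~\ref{lem:leftright}, \ref{lem:columns} and~\ref{lem:semiprecious}, exploiting the defining properties of enclosing branches (Definition~\ref{def:enclosingbranch}) and the visibility of $\glue{k}{k+1}$ relative to both $P$ and $\widehat P$; the rest is routine once Lemma~\ref{lem:onepath} is in hand.
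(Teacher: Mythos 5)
Your proposal is correct and follows essentially the same route as the paper's own proof: apply Lemma~\ref{lem:onepath}, dispose of Conclusion~\ref{onepath:conclusion:breakscale} immediately, and otherwise take the enclosing branch $\widehat P=P_{0,1,\ldots,k}D_{0,1,\ldots,|D|-2}$, use a Jordan-curve argument on the region it encloses to force $P$ to turn right (\resp, left) from $\widehat P$ if they don't conflict, and then derive a contradiction from Conclusion~\ref{onepath:conclusion:allbroken}(ii) via the visibility of $\glue{k}{k+1}$ and of $P$'s visible glue on $\ell$. The only (harmless) difference is that you sketch the bounding curve as "$\widehat P$ plus a segment of the visibility ray, a segment of $\ell$, and possibly $\sigma$" in one breath, whereas the paper splits into the two cases of Definition~\ref{def:enclosingbranch} (closing via a grid path $\overline P$ through $\sigma\cup\asm{P_{0,\ldots,k}}$ in case~(1), and via segments along $\ell$ in case~(2)) before invoking visibility to identify the inside.
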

\begin{proof}
  We apply Lemma~\ref{lem:onepath}. If we get Conclusion~\ref{onepath:conclusion:breakscale}, we are immediately done (we get $\alpha$ in the statement). Else, let $i,j,\ell$ be as defined in Lemma~\ref{lem:visibility-setup},  let $k\in\{i,j\}$ and let  $D$ be the enclosing branch  constructed in Conclusion~\ref{onepath:conclusion:allbroken} of Lemma~\ref{lem:onepath}, assuming $P$ places a $V_P^+$ (\resp, $V_P^-$) glue on $\ell$.
  We begin by defining a connected component $\mathcal C$. 
  There are two cases: 
  \begin{itemize}
  \item If $D$  intersects $\sigma\cup \asm{ P_{0,1,\ldots,k}}$ let
  $\overline{P}$ be any path from $\pos{D_{|D|-1}}$ (the first such intersection) to $\pos{P_k}$ in the grid graph of $\sigma\cup \asm{ P_{0,1,\ldots,k}}$. 
 Let $\mathcal C$ be the bounded connected component of $\mathbb{R}^2$ enclosed by the concatenation of $\frak{E}_{\overline{P}}$ and $\frak{E}_{D}$
   (the canonical embedding of the paths $\overline{P}$ and $D$, respectively).

  \item Else, $D=D_{0,1,\ldots,|D|-1}$ places a glue, denoted $\namedGlue{D_{|D|-2}}{D_{|D|-1}}$, on $\ell$ below the visible glue of $P$ on $\ell$, and $D$ does not intersect  $\sigma\cup \asm{ P_{0,1,\ldots,k}}$. In this case, we let $c$ be the concatenation of the following four curves  (where $\glue{\ell}{\ell+1}$ is the visible glue of $P$ on $\ell$):  
     \begin{equation*}
       \begin{split}
     &     [ \midpoint{P_\ell}{P_{\ell+1}} , \pos{P_{\ell+1}} ] \\
    & \frak{E}_{P_{\ell +1,\ldots,k}D_{0,1,\ldots,|D|-2}} \\
  & [ \pos{D_{|D|-2}} ,  \midpoint{D_{|D|-2}}{D_{|D|-1}}  ] \\
  & [  \midpoint{D_{|D|-2}}{D_{|D|-1}}  ,  \midpoint{P_\ell}{P_{\ell+1}}] 
       \end{split}
       \end{equation*}
          By Observation~\ref{obs:ClosedSimpleCurve}, 
     $c$ is a finite closed simple curve and thus defines a bounded connected component $\mathcal{C}$. 
  \end{itemize}
In both cases, by the fact that $D$ is an enclosing branch, $\gluePD{k}{0}$ is visible relative to $P_{0, 1 ,\ldots,k}D_{0,1,\ldots,|D|-2}$, which implies that the left-hand side (\resp, right-hand side)  of $\gluePD{k}{0}$ is {\em inside} $\mathcal C$ (when walking in the direction from $P_k$ to $D_0$).

Also,  in both cases,  $P_{0,1,\ldots,k} D_{0,1,\ldots,|D|-2}$ is not \hs (by Lemma~\ref{lem:onepath}).  
Therefore, if~$P$ can still grow to be \hs after $P_{0,1,\ldots,k}D_{0,1,\ldots,{|D|-2}}$ is grown, then $P_{k,k+1,...}$ turns right (\resp, left) from $P_{k}D_{0,1,\ldots,{|D|-2}}$, and thus $R$ from the statement of  Lemma~\ref{lem:onepath} is a suffix of $P$. But this implies that $P$  places a lower visible glue than $\glue{\ell}{\ell+1}$ (on $\ell$)  and/or $\namedGlue{P_k}{D_0}$ (on the visibility ray $l_k$), which  contradicts the visibility of  $\glue{\ell}{\ell+1}$ and/or $\namedGlue{P_k}{P_{k+1}} = \namedGlue{P_k}{D_0}$ relative to~$P$.  

  Therefore, $P$ conflicts with $P_{0,1,\ldots,k}D_{0,1,\ldots,{|D|-2}}$ and thus $P$ cannot grow to be \hs from the assembly $\sigma \cup \asm {P_{0,1,\ldots,k}D_{0,1,\ldots,{|D|-2}}}$.
\end{proof}

\section{Blocking all paths}\label{sec:manypaths}

We restate our main theorem here: 
\begin{reptheorem}{thm:main}
\thmMain
\end{reptheorem}

This result is  an immediate corollary of Theorem~\ref{thm:shapes} below.  
 Intuitively, Theorem~\ref{thm:shapes} states that there is no tile set that, at temperature 1,  produces (or simulates) the ``shapes'' of all $\calT_{N}$ systems,\footnote{The class of ``flipped-L'' tile assembly systems $\{ \calT_{N} \mid N \in \mathbb{Z}^{+} \}$ were defined earlier in Section~\ref{sec:T}.}
 even if the simulator is allowed to use spatial rescaling. Thus Definition~\ref{def:equiv-shape} is violated  which immediately implies (via Observation~\ref{obs:prodshapes}) that there is no tile set that, at temperature 1, simulates the productions of all $\calT_{N}$ systems (thus contradicting Definition~\ref{def:equiv-prod}, ``equivalent productions''), which in turn contradicts Definition~\ref{def:iu-specific-temp} (``intrinsicially universal'', at temperature 1), giving Theorem~\ref{thm:main}.  

\begin{theorem}
\label{thm:shapes}
There is no tileset $U$, scale factor $m \in \mathbb{Z}^{+}$, seed $\sigma$ and $m$-block supertile representation function $R_m$ such that for all $N \geq 10 |U|$,  
$\dom{R^{*}_m(\sigma)} = \dom{\sigma_{N}}$ and $\{\dom{R^{*}_m(\alpha)} \mid \alpha\in\termasm{\mathcal U}\} = \{\dom{\beta} \mid \beta\in\termasm{{\mathcal T}_{N}}\}$ where  ${\mathcal U} = (U,\sigma,1)$ and $\mathcal{T}_{N} = (T_{N},\sigma_{N},1)$.
\end{theorem}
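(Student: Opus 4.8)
The plan is to argue by contradiction: suppose such a tileset $U$, scale factor $m$, seed $\sigma$ and representation function $R_m$ exist. Fix $N = 10|U|$, so that $\calU = (U,\sigma,1)$ simulates $\calT_N$ at scale $m$ in the sense of equivalent terminal shapes. Since $\calT_N$ is directed and its unique terminal assembly is the infinite ``flipped-L'', every terminal assembly $\alpha \in \termasm{\calU}$ must have $\dom{R^*_m(\alpha)}$ equal to that flipped-L shape; in particular every terminal $\alpha$ must contain tiles arbitrarily far to the north, and all its tiles must lie within the simulation zone pictured in Figure~\ref{fig:fuzz} (the horizontal arm of width $m(N+2)$ and height $3m$ below line $h$, together with the vertical arm of width $3m$, plus fuzz). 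Because any such $\alpha$ grows infinitely north inside the vertical arm, it must contain at least one tile on the line $y = h = 10m$, and hence (by the Observation relating tiles to producible paths, and taking a shortest such prefix) $\calU$ produces at least one $h$-successful path; so $\hsU \neq \emptyset$.

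The heart of the argument is to build, in stages, a single producible assembly $\gamma \in \prodasm{\calU}$ that is $nowhere$-$h$-successful and that conflicts with (blocks) $every$ $h$-successful path simultaneously — then $\gamma$ extends to a terminal assembly with no tile at height $h$, contradicting equivalence of shapes (or, at some stage, we instead produce a path leaving the simulation zone, directly contradicting the shape equivalence via Conclusion~\ref{onepath:conclusion:breakscale} of Lemma~\ref{lem:onepath}). Concretely: $\hsU$ is finite (finite tileset, finite area below $h$). The plan is to define a total order on $\hsU$ — this is exactly the ``second tool'' the paper advertises — using the data attached to each $h$-successful path $P$ by Lemma~\ref{lem:visibility-setup} and Lemma~\ref{lem:onepath}: the visible glue $\glueP{\ell}{\ell+1}$ that $P$ places on the line $\ell$ (ordered by its $y$-coordinate, lower being ``smaller''), and, as a secondary key, right-priority versus left-priority among paths sharing the same visible glue on $\ell$. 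Process the paths from ``largest'' to ``smallest'' in this order; for the current path $P$, apply Theorem~\ref{thm:onepath} to grow an enclosing branch $P_{0,\ldots,k}D_{0,\ldots,|D|-2}$ that is nowhere-$h$-successful and conflicts with $P$, and add it to the growing assembly $\gamma$.

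The main obstacle — and the reason the ordering must be chosen so carefully — is the interaction problem flagged in Section~\ref{sec:intuition}: when we add the blocking branch $D^{(t)}$ for path $P^{(t)}$, it might itself be blocked by a branch $D^{(s)}$ added earlier, allowing $P^{(t)}$ to ``escape'' and still become $h$-successful. The resolution is the precise form of Conclusion~\ref{onepath:conclusion:allbroken}(ii) of Lemma~\ref{lem:onepath}: any $h$-successful path that turns right (resp.\ left) off the branch $D$ must either hide the visibility of $\glueP{k}{k+1}$ or place a $strictly$ $lower$ visible glue on $\ell$ than $P$ did. So if $P^{(t)}$ is forced, after our blocking, to deviate from $D^{(t)}$, the deviated path $R$ is $h$-successful with a strictly lower visible glue on $\ell$ (or a hidden $\glueP{k}{k+1}$, which the concatenated-curve / Jordan-curve argument of Theorem~\ref{thm:onepath} already rules out given the enclosed component $\mathcal{C}$) — hence $R$ is $strictly$ $smaller$ in our order than $P^{(t)}$. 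Thus every ``escape'' strictly decreases a quantity (the visible-glue height on $\ell$, over a finite set of possibilities), so the process terminates: after finitely many stages we have an assembly $\gamma$ such that no $h$-successful path can grow from it. One must also check the branches we add are mutually consistent (they are all producible from $\sigma$ and we only ever $add$ tiles, taking unions of producible assemblies along valid assembly sequences) and that $\gamma$ itself remains nowhere-$h$-successful (each $D^{(t)}$ is nowhere-$h$-successful by construction, and any new $h$-successful path through $\gamma$ would be one of the paths we already handled or a strictly-smaller escape, both excluded). Finally, extend $\gamma$ arbitrarily to a terminal assembly $\alpha \in \termasm{\calU}$: since $\alpha$ has no tile at height $h$, $\dom{R^*_m(\alpha)}$ cannot be the flipped-L shape of $\calT_N$ (which reaches height $h$ and beyond), contradicting the assumed shape equivalence. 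This contradiction proves Theorem~\ref{thm:shapes}, and Theorem~\ref{thm:main} follows as noted in the excerpt.
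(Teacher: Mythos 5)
Your high-level plan matches the paper's: fix $N=10|U|$, show $\hsU\neq\emptyset$, define a total ``path order'' on $\hsU$ anchored on the height of each path's visible glue on $\ell$ (highest first, with right-/left-priority tiebreaking), and iteratively add enclosing branches to build a producible, nowhere-$h$-successful assembly that blocks every $h$-successful path. That framing is correct. But the proposal has a genuine gap exactly where the paper does its hard work, and I don't think the argument you sketch survives scrutiny at that point.

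The gap is in how you handle a blocking branch that gets cut short. You write that the branches are ``mutually consistent (they are all producible from $\sigma$ and we only ever add tiles, taking unions of producible assemblies along valid assembly sequences).'' That is not true at temperature~1: two producible path assemblies can conflict, so their union is in general not producible, and the paper explicitly flags this as the remaining difficulty. The paper's fix is to define $F^{n+1}=\asmprefix{\alpha_n}{E^{n+1}}$, i.e.\ to grow only the \emph{maximal prefix} of the new enclosing branch that can attach to the current assembly $\alpha_n$, and then to verify the induction hypothesis even when $F^{n+1}$ is a strict prefix of $E^{n+1}$. You don't mention the maximal-prefix device, and your claim that the union is producible glosses over the very obstacle the construction is designed to defeat.

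The second, related, gap is the mechanism you invoke to close the induction. You argue that an ``escape'' path $R$ turning off $D^{(t)}$ must have a \emph{strictly lower} visible glue on $\ell$ (or hide $\glueP{k}{k+1}$), and that this ``strictly decreases a quantity, so the process terminates.'' Termination is not the issue ($\hsU$ is finite, so the iteration always halts in $|\hsU|$ steps); correctness is. More importantly, a lower visible glue means $R$ is \emph{later} in the processing order, so it has not yet been blocked when you encounter it, and invoking Conclusion~\ref{onepath:conclusion:allbroken}(ii) about paths turning right off $D^{(t)}$ does not by itself tell you anything about whether $\alpha_{n+1}$ blocks $P^{n+1}$ when $D^{n+1}$ was truncated by an earlier branch $F^k$. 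The paper's actual inductive step goes in the opposite direction: when $D^{n+1}$ is cut short by $F^k$, one splices a prefix of $F^k$ with a suffix of $P^{n+1}$ to get a new $h$-successful path $Q$ whose visible glue on $\ell$ is \emph{strictly higher} (or ties with better priority), so that $Q\prec P^{n+1}$ and the induction hypothesis applies to $Q$; since $\asm{F^k}\sqsubseteq\alpha_n$, the conflict of $Q$ with $\alpha_n$ must be in the $P^{n+1}$-suffix of $Q$, and hence $P^{n+1}$ itself conflicts with $\alpha_n$. Making this work cleanly requires the three-case analysis on how the visible glues of $P^{n+1}$ and $F^k$ on $\ell$ relate (same position with opposite $V^+/V^-$ orientation; different positions; same position with same orientation, which needs the nano-embedding trick to decide which side is enclosed). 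None of this is present in your proposal, and the ``escapes have lower visible glue'' observation cannot substitute for it.
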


\begin{proof}
\newcommand\n{10}

Assume, for the sake of contradiction, that there is a tileset $U$ such that for $N=10|U|$, there is an integer $m$, a seed assembly $\sigma\in\asm{\cal U}$, and an $m$-block representation function $R_{m}$ such that the terminal assemblies of $\calU = (U,\sigma,1)$ map cleanly to the terminal assemblies of $\calT_{10|U|}$ under $R_m$, where $\calT_{10|U|}$ is the flipped-L tile assembly system defined in Definition~\ref{def:tn}.

We will show that~$\calU$ also produces terminal assemblies mapping to non-terminal or non-producible assemblies of $\calT_{10|U|}$ under $R_m$. More specifically, we will show that either some of the assemblies of~$\calU$ map cleanly to \emph{non-producible} assemblies of $\calT_{10|U|}$ under $R_m$, or else we will construct one \emph{producible} assembly $\alpha\in\prodasm{\calU}$ conflicting with all \hs paths of~$\calU$. This will then conclude the proof  since $\alpha$ grows into a \emph{terminal} assembly, i.e.\ $\alpha \rightarrow^{\mathcal{U}} \alpha' $ where $\alpha'\in \termasm{\calU}$, that does not map cleanly to a \emph{terminal} assembly of $\termasm{\calT_{10|U|}}$ under $R_m$ (since all tiles of $\alpha'$ are below the horizontal line at height $h$).

\paragraph{Blocking \hs paths individually} For the remainder of the proof, let  $\ell$ be a vertical (glue) line at x-coordinate $|U|(3m+1)+m+1.5$ (in other words, at distance $\geq |U|(3m+1)+1$ to the right of the rightmost tile of~$\sigma$), as defined by Lemma~\ref{lem:visibility-setup}.
We  apply Lemma~\ref{lem:onepath} on each \hs path $P$, individually\footnote{By ``individually'' we mean that we are currently merely looking at the case where we grow each path separately: of course it may be the case that not all of these paths can be simultaneously grown as they may conflict with each other---the main point of this proof is to handled this.}. For each such~$P$,  Lemma~\ref{lem:onepath} has one of two
conclusions, numbered Conclusion~\ref{onepath:conclusion:breakscale} and Conclusion~\ref{onepath:conclusion:allbroken}. If we get Conclusion~\ref{onepath:conclusion:breakscale} for any of the \hs paths, we can conclude the proof immediately, because that conclusion shows that it is possible to grow a path from $\sigma$ that places tiles outside of the simulation zone of $\calU$, contradicting that $\calU$ simulates $\calT_{10|U|}$,
and hence assemblies of $\calU$ do not simulate the shape of $\calT_{10|U|}$ and we are done with the proof of Theorem~\ref{thm:shapes}.

Therefore, in the rest of this proof, we assume that for \emph{all \hs paths} of~$\calU$ we get Conclusion~\ref{onepath:conclusion:allbroken}
of Lemma~\ref{lem:onepath}. That conclusion gives, for each \hs path $P$, a \nhs enclosing branch $D$ for~$P$ at some integer $k_P$.

If it were the case that the entire set of these enclosing branches could grow together in the same assembly, we would immediately be done: indeed, the union of the seed with all of these enclosing branches (and their prefixes from $P$) would be an assembly conflicting with all \hs paths of $\calU$ (implying in particular that this union does not contain any \hs path).

The rest of the proof deals with the situation where this is not the case, i.e.\ at least one (and possibly very many) enclosing branches $D$ from Lemma~\ref{lem:onepath} conflict with other paths or with other enclosing branches, and thus not all enclosing branches $D$ can grow completely together in the same assembly.

\paragraph{Path order}
We will build an assembly that does not reach height $h$ and that blocks all of the paths from the set of \hs paths  $\hsU$ of $\calU$. Recall that the set of \hs paths of $\calU$ is finite.  In order to block them all, we will tackle \hs paths in a specific order, called the ``path order,'' defined as follows. Let $\prec$ be {\em the path order} relation on the set  $\hsU$ of \hs paths of $\calU$ where for $P,Q\in \hsU$ with $P\neq Q$ we say that $P\prec Q$ if and only if at least one of (A) or (B) holds:
\begin{enumerate}[label=(\alph*),leftmargin=*,align=left]
\item[(A)] the visible glue of $P$ on $\ell$ is strictly higher than the visible glue of $Q$ on $\ell$, or
\item[(B)] the visible glues $\glueP{p}{p+1}$ and $\glueQ{q}{q+1}$ of $P$ and $Q$ on $\ell$ are at the same position\footnote{I.e.\ $\midpoint{{P_p}}{{P_{p+1}}} = \midpoint{{Q_q}}{{Q_{q+1}}} $.} and one of the following holds:  
  \begin{itemize}
  \item  $ \glueP{p}{p+1} \in V_P^+$ and $ \glueQ{q}{q+1} \in V_Q^-$, or 
  \item  $\glueP{p}{p+1} \in V_P^+$, $\glueQ{q}{q+1} \in V_Q^+$, and $Q_{q,q+1,\ldots,|Q|-1}$ is the right-priority path of $P_{p,p+1,\ldots,|P|-1}$ and $Q_{q,q+1,\ldots,|Q|-1}$, or 
  \item   $\glueP{p}{p+1} \in  V_P^-$, $\glueQ{q}{q+1} \in V_Q^-$, and $Q_{q,q+1,\ldots,|Q|-1}$ is the left-priority path $P_{p,p+1,\ldots,|P|-1}$ and $Q_{q,q+1,\ldots,|P|-1}$, or 
  \item  Else, notice that $P$ and $Q$ share their suffix from their visible glue on $\ell$ onwards until their last tile at height $h$ (because none of these suffixes is the right-priority or left-priority one).
    Then $P$ is the right priority path of $P$ and $Q$ if $P$ and $Q$ share two consecutive tiles $P_{a}P_{a+1}= Q_{b}Q_{b+1}$ before\footnote{By ``before'' we mean with respect to the order of tiles along the path $Q$, and along the path $P$.} disagreeing (note that $P\neq Q$), and if they do not share such a pair then $P$ is the lexicographically first path of $P$ and $Q$ if we describe both using some canonical encoding of $P$ and $Q$ as two binary strings.
  \end{itemize}
\end{enumerate}
Here is an intuitive description of the path order: we first consider paths by the height of their visible glue on $\ell$ (highest first), and then if  both visible glues on $\ell$ are in the same direction, we first consider the most right-priority of $P$ and $Q$ after they cross $\ell$ if these glues are in $V_P^+$ and $V_Q^+$, or the most left-priority if these glues are in $V_P^-$ and $V_Q^-$, and if the glues are at the same position with different +/- orientations, the one with a ``+'' visible glue on $\ell$ comes first. Finally if $P$ and $Q$ happen to agree (are equal) on their suffix from their visible glue on $\ell$  onwards, then we (arbitrarily) choose the right priority path (note that in this latter case all of the differences between $P$ and $Q$ must be before their respective visible glues on $\ell$).

Note that the relation $\prec$ is a total order on the set  $\hsU$ of \hs paths, since the last case of the definition of $\prec$ covers all remaining cases using right-priority, and right-priority is itself a total order. Also, recall that the set of \hs paths is a finite number (see Section~\ref{sec:blocking-h-successful}), and let $H$ be that number.  

Thus let  $P^0\prec P^1 \prec P^2\prec \ldots \prec P^{H-1}$ be the list of all \hs paths according to path  order (so that no path has a higher  visible glue on $\ell$ than $P^0$).

\paragraph{Enclosing branch $D^n$.}
For each path $P^{n}$, applying Lemma~\ref{lem:onepath} gives an index~$k_n$ and an ``enclosing branch $D^n$ for $P^n$ at $k_n$'' such that $P^n_{0,1,\ldots,k_n}D^n_{0,1,\ldots, |D|-2}$ conflicts with $P^n$ (by Theorem~\ref{thm:onepath}).
Let  $$E^n \defeq P_{0,1,\ldots,k_n}D^n_{0,1,\ldots, |D|-2}$$

\paragraph{The ``path blocking'' assembly $\alpha_n$.}
Let the notation $\asmprefix{\alpha}{P}$ denote the path that is the longest assemblable prefix of $P$ that can be grown from the assembly $\alpha$.\footnote{Observe that if $\alpha$ is producible by some tile assembly system then for all paths $P$ it is (trivially) the case that  $\alpha \cup \asm{\asmprefix{\alpha}{P}}$ is an assembly producible by that same tile assembly system.}

We define an assembly $\alpha_n$ which has a special form (composed of $\sigma$ and assemblable prefixes of~$E^k$ grown in path, i.e. $k$, order), to be used in our induction hypothesis:
$$\alpha_n = \sigma \cup \left( \bigcup_{k=0}^n \, \asm{F^{k}}\right)$$ where $F^0 = \asmprefix{\sigma}{E^0} = E^0_{0,1,\ldots,|E^0|-1}$ and   for all $k \geq 1 $, $ F^{k} = \asmprefix{\alpha_{k-1}}{E^k}$.

\paragraph{Claim:  for all $n\geq 0$, $\alpha_n \in \prodasm{\calU}$.} 
To see this claim note that: 
\begin{itemize}
\item First, $\alpha_0$ is producible: indeed, $F^0 = E^0 = P^0_{0,1,\ldots,k_0}D^0_{0,1,\ldots, |D|-2}$ is a producible path of $\mathcal U$, by Lemma~\ref{lem:onepath}. Therefore, $\alpha_0 = \sigma\cup \asm{F^0}$ is producible.
\item Then, assuming $\alpha_n$ is producible, i.e. $\alpha_n\in\prodasm{U}$, remember that $\asmprefix{\alpha_n}{E^{n+1}}$ is the maximal prefix of $E^{n+1}$ that can grow from $\alpha_n$. Therefore, $\alpha_n\rightarrow^\calU \left( \alpha_n\cup \asm{\asmprefix{\alpha_n}{F^{n+1}}} \right)= \alpha_{n+1}$, and therefore $\alpha_{n+1}\in\prodasm{\mathcal U}$.
\end{itemize}
Hence $\alpha_n \in \prodasm{\calU}$ as claimed.

\vspace{\baselineskip}
To conclude the proof we will consider the assembly\footnote{Recall that $H$ is the number of $h$-successful paths of $\calU$.} $\alpha_{H-1} \in \prodasm{\calU}$ which we claim has the (as yet unproven)   property that all producible \hs paths conflict with it. Then allowing tiles to attach to $\alpha_{H-1}$ will eventually yield\footnote{Growth can only happen within the finite area simulation zone below height $h$ so must eventually stop.}  a terminal assembly $\alpha \in \termasm{\calU}$ with no tiles at height $h$ and thus no tiles above height $h$, which contradicts that $\calU$ simulates (the shape of) $\calT_{10|U|}$.  We will use induction to show that 
all producible \hs paths conflict with  $\alpha_{H-1}$.

\paragraph{Induction hypothesis:}
All of the paths $P^{0},P^{1},\ldots,P^{n}$  conflict with  $\alpha_n$.
\vspace{\baselineskip}

Some intuition and implications of our induction hypothesis: The induction hypothesis implies that for $\forall k \leq n, \asmprefix{\alpha_n}{P^k}$  is not  \hs. To see this note that since for all $ k$ the last tile of $P^k$ is its only tile at height $h$, and the induction hypothesis implies that $P^k$  can not grow from $\alpha_n$ to be \hs. 
Also, no tile of $\alpha_n$ reaches height $h$ (because $E^k$ is constructed via Lemma~\ref{lem:onepath}) which implies that none of the~$F^k$ (of which  $\alpha_n$ is composed)   are \hs.
Finally,  since there are a finite number $H$ of \hs paths  the induction exhausts those $H$ paths in $H$ steps, and thus yields an assembly $\alpha_{H-1}$ which is a finite union of finite (path) assemblies, and thus $\alpha_{H-1}$ is a finite producible assembly that blocks all \hs paths.

\paragraph{Initial step of induction ($P^{0}$ and $\alpha_{0}$).} At the initial step of the induction, we apply Lemma~\ref{lem:onepath} to~$P^0$, to obtain an enclosing branch $P^0_{0,1,\ldots,k_0}D^0_{0,1,\ldots,|D^0|-1}$. 
This proves our induction hypothesis for the initial step:  
  by Theorem~\ref{thm:onepath}, $P^0$ conflicts with  $\alpha_0$ (i.e.\ $P^0$ cannot grow to be \hs from $\alpha_0$), 
 and we have already defined $\alpha_0=\sigma\cup \asm{F^0}$ where $F^0 = \asmprefix{\sigma}{E^0} = E^0_{0,1,\ldots,|E^0|-2}  = P^0_{0,1,\ldots,k_0}D^0_{0,1,\ldots,|D^0|-2}$.

\paragraph{Inductive step ($P^{n+1}$ and $\alpha_{n+1}$).} The remainder of the proof is concerned with the inductive step. For any $n\geq 0$ suppose the induction hypothesis holds,\footnote{Recall that $\alpha_n$ contains only the seed $\sigma$ and assembled paths $F^{0}, F^{1}, \ldots ,F^{n}$ (i.e.\ $\alpha_n = \sigma \cup \left( \bigcup_{k=0}^n \, \asm{F^{k}}\right)$) that are respective prefixes  of $E^{0}, E^{1}, \ldots ,E^{n}$, 
none of which are \hs.} i.e.\ 
 all of  first $n+1$ paths $P^{0},P^{1},\ldots,P^{n}$ conflict with  $\alpha_n$.

We recall that $F^{n+1} = \asmprefix{\alpha_n}{E^{n+1}}$ is  the maximal prefix of $E^{n+1}$ that can grow from $\alpha_{n}$, and that $\alpha_{n+1} = \alpha_n\cup \asm{F^{n+1}}$.
If $P^{n+1}$ conflicts with $\alpha_n$,  then we are immediately done with the induction step for $\alpha_{n+1}$, because this proves that $P^{n+1}$ cannot grow from $\alpha_{n+1}$. Hence from now we will assume that $P^{n+1}$ does not conflict with $\alpha_n$.  

If $P^{n+1}$ conflicts with $F^{n+1}$, then we are immediately done with the induction step for $\alpha_{n+1}$, because this proves that $P^{n+1}$ cannot grow from $\alpha_{n+1}$. 
Otherwise $P^{n+1}$ does not conflict with $F^{n+1}$. This implies that $F^{n+1}$ is a {\em strict} prefix of $E^{n+1}$ (otherwise we would contradict  Theorem~\ref{thm:onepath})  and therefore 
$E^{n+1}$ (and in particular $D^{n+1}$) conflicts with~$F^k$ for some $k\in \{0,1,\ldots,n \}$.
We will reason about this $F^k$. 

\newcommand{\mainthmcaseone}{\pn and $F^k$ share the position of their visible glue on $\ell$, one of these glues is a $V^+$ glue, the other one is a $V^-$ glue}
\newcommand{\mainthmcasetwo}{\pn and $F^k$ do not share the position of their visible glue on $\ell$ (which includes the case where $F^k$ does not reach $\ell$)}
\newcommand{\mainthmcasethree}{\pn and $F^k$ share the position of their visible glue on $\ell$, and either both are $V^+$ glues, or both are $V^-$ glues}
We next split the inductive step into three cases\footnote{For the sake of proof simplicity, we present them in the order in which we handle these cases.} each of which will be concluded independently:
\begin{enumerate}[leftmargin=*,align=left]
\item[\em (Case 1)] \mainthmcaseone.
\item[\em (Case 2)] \mainthmcasetwo.
\item[\em (Case 3)] \mainthmcasethree.
\end{enumerate}

In all three cases, let $F^k$'s visible glue on $\ell$ be denoted $\namedGlue{F^k_f}{F^k_{f+1}}$ and let \pn's  visible glue on   on $\ell$ be denoted \pnviz.

\paragraph{Case 1: \mainthmcaseone.}

Exactly one of $\namedGlue{F^k_f}{F^k_{f+1}}$ and \pnviz is in $V^+$ and the other is in $V^-$. (See Figure~\ref{fig:ex1} for an example.)
At the beginning of the inductive step, we assumed that \pn does not conflict with $\alpha_n$, hence  $\pn$ and $F^k$ agree on all points where they intersect. Moreover, since $\pn$ and $F^k$ share their visible glue on~$\ell$ (i.e.\ $\midpnviz = \midpoint{F^k_f}{F^k_{f+1}}$, i.e. their visible glues on $\ell$ are at the same position) we know they agree on at least two tiles each, specifically $F^k_f = \pn_{p+1}$ and $F^k_{f+1}=\pn_{p}$. 
Now let $b$ be the largest integer such that there is an integer $a\leq f$ where ${F^{k}_a} = P^{n+1}_b$.
Since $\midpnviz = \midpoint{F^k_f}{F^k_{f+1}}$ we know that $p+1\leq b$, also $b \leq |\pn|-2$  since $\pn$ is \hs and \fk is not.
Consider the sequence $Q = F^{k}_{0,1,\ldots,a}  P^{n+1}_{b+1,b+2,\ldots , |P|-1}$. First note that the positions of $Q$ form a connected sequence of positions in $\Z^2$:  this follows from the fact that the positions of $F^{k}_{0,1,\ldots,a}$ are connected, the positions of  $ P^{n+1}_{b+1,b+2,\ldots,|P|-1}$ are connected, and that ${F^{k}_a}  = P^{n+1}_b$. Also, we claim that $Q$ is simple:  to see this,  note that (i) $F^{k}_{0,1,\ldots,a}$ is simple, (ii) $P^{n+1}_{b+1,b+2,\ldots, |P|-1}$ is simple, and finally that (iiix) $P^{n+1}_{b+1,b+2,\ldots, |P|-1}$ does not intersect $F^{k}_{0,1,\ldots,a}$ (by definition of $b$). Since $Q$ has a connected simple set of positions, $Q$ is a path.  Furthermore it is the case that $Q \in \prodpaths{\calU} $, which follows immediately from the following facts:  $Q$ is a path, $ F^{k}_{0,1,\ldots,a} \in \prodpaths{\calU}$, $P^{n+1} \in \prodpaths{\calU}$ and ${F^{k}_a}  = P^{n+1}_b$.

 \begin{figure}
   \begin{center}
     \includegraphics{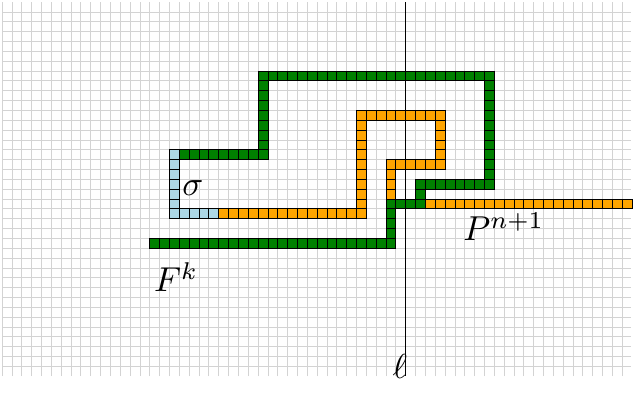}\hspace{4ex}
     \includegraphics{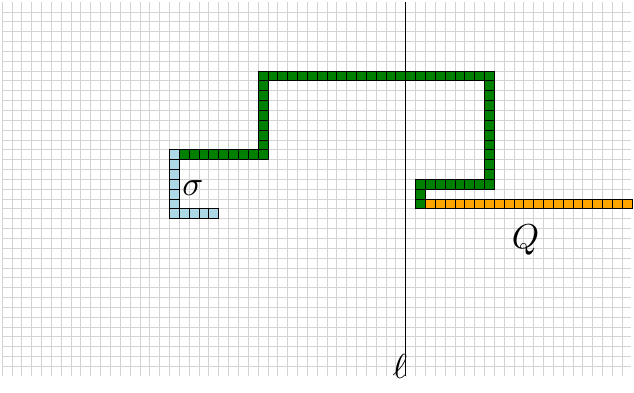}
   \end{center}
   \caption{{\bf Left:} An example of Case 1, where $F^k$ places a $V^-$ glue on $\ell$. {\bf Right:} The path $Q = F^{k}_{0,1,\ldots,a}  P^{n+1}_{b+1,b+2,\ldots} $ which has the property that it comes earlier in path order than  $P^{n+1}$, i.e. $Q \prec P^{n+1}$, since the visible glue of $Q$ on $\ell$ is higher than the visible glue of $P^{n+1}$ on $\ell$.}
   \label{fig:ex1}
 \end{figure}
 
Next we claim that $Q \prec P^{n+1}$. 
First, we know that, since $b \geq p+1$ all of the glues that $\pn_{b+1,b+2,\ldots}$ places on $\ell$ are at height strictly higher than the height of $\pnviz$ on~$\ell$ which is $\pn$'s visible (i.e.\ lowest) glue on~$\ell$.
Second, since $a \leq f$ we know that all of the glues that $F^k_{0,1,\ldots,f}$ places on~$\ell$ are at height strictly higher than the height of $\namedGlue{F^k_f}{F^k_{f+1}}$ on~$\ell$ which is $F^k$'s visible glue on~$\ell$ which is at the same height as $\pn$'s visible glue on~$\ell$. 
Since $Q = F^{k}_{0,1,\ldots,a}  P^{n+1}_{b+1,b+2,\ldots}$, then~$Q$'s visible glue on~$\ell$
is strictly higher than the visible glue of $P^{n+1}$ on $\ell$. 
Thus $Q \prec P^{n+1}$.

Since \pn is \hs, and since \pn and $Q$ share a nonempty suffix $\pn_{b+1,b+2,\ldots, |\pn|-1}$,  this implies that $Q$ is also \hs. Moreover, no strict prefix of $Q$ is \hs, because $F^k$ is not \hs and by the definition of \emph{\hs} no strict prefix of \pn is \hs. But since $Q \prec P^{n+1}$, this means that $Q$ 
satisfies the induction hypothesis,  
meaning that $Q$ conflicts with $\alpha_n$. Since $\asm{F^{k}_{0,1,\ldots,a}}$ is a subassembly of $\alpha_n$ then the prefix $F^{k}_{0,1,\ldots,a}$ of $Q$  does not conflict with \an, which in turn  implies that the suffix $\pn_{b+1,b+2,\ldots, |\pn|-1}$ of $Q$ conflicts with $\alpha_n$, which  implies that $\pn$ conflicts with~$\alpha_n$, 
satisfying the induction hypothesis.\footnote{Recall that we have already defined  $F^{n+1} = \asmprefix{\an}{E^{n+1}}$ and $\alpha_{n+1} = \an \cup \asm{F^{n+1} }$.}

\paragraph{Case 2: \mainthmcasetwo.}
Moreover, the glue placed by $\pn$ on~$\ell$ is a $V^+_{\pn}$ glue (respectively a $V^-_{\pn}$ glue).
We assumed that \pn does not conflict with~$\alpha_n$,  
hence \pn does not conflict with $F^k$. 
We first show that $\pn_{p+1,p+2,\ldots,|\pn|-1}$ intersects and agrees with $F^k$, and then use an argument similar to Case~1 above:
\begin{itemize}
\item
  Assume, for the sake of contradiction, that $\pn_{p+1,p+2,\ldots,|\pn|-1}$ does not intersect $F^k$. Let $a\geq p$ be the smallest\footnote{There is at least one such integer since we know that $E^{n+1}$  conflicts with $ F^k$, and we know that this conflict happens after (in $E^{n+1}$ order) the visible glue ($\pnviz$) of $\pn$ and $E^{n+1}$.} integer such that $\pos{E^{n+1}_a} = \pos{F^k_b}$ for some $b$.
  We are going to define a closed connected component in which \pn starts to grow. First note that $F^k$ is connected, connected to $\sigma$, and $E^{n+1}$ is connected to $\sigma$. Therefore, $\sigma\cup \asm{F^k} \cup\asm{ \pn_{0,1,\ldots,p}}$ contains at least one  path from $F^k_b$ to $\pn_p$ (note that $\pn_{0,1,\ldots,p}$ is a prefix of $E^{n+1}$). Let $\mathcal P$ be any shortest such path.

Let then $c$ be the closed curve defined by the concatenation of ${c}^k = \frak{E}_{\mathcal P}$ and ${c}^{n+1} = \frak{E}_{E^{n+1}_{p,p+1,\ldots,a}}$. Curve $c$ is simple because ${c}^{n+1}$ and $c^k$ only intersect at their endpoints because $a$ was chosen to be the smallest integer ($\geq p$) such that $\pos{E^{n+1}_a}=\pos{F^k_b}$ and because $\mathcal P$ is a shortest path.

Therefore, by the Jordan Curve Theorem, $c$ encloses a single bounded connected component $\mathcal{C}$ of $\R^2$. (This connected component is shown in gray in the example in Figure~\ref{fig:case2}.)

\begin{figure}[h]
   \begin{center}
     \includegraphics{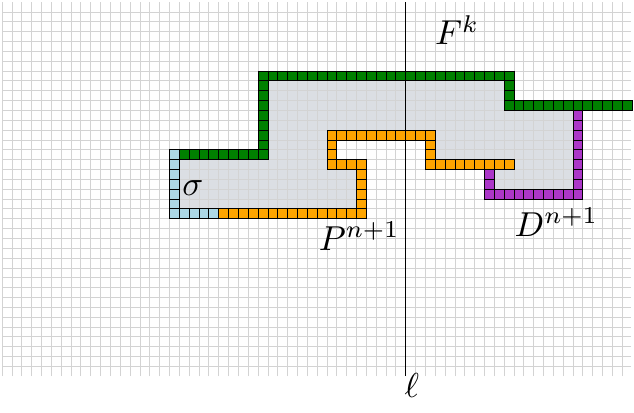}
   \end{center}
   \caption{An example of Case 2, where $P^{n+1}$ and $F^k$ both place a $V^+$ glue on $\ell$, at different heights.}
   \label{fig:case2}
\end{figure}

Now,  $\midpnviz$, the position of the visible glue $\pnviz$ of $\pn$ on $\ell$, is on curve~$c$. Moreover, since no other point of $c$ intersects $\ell$ at the height of, or below $\midpnviz$, then $\midpnviz$ is the unique lowest intersection of $c$ and $\ell$.

Then, since $\pnviz$ is in $V^+_{\pn}$ (respectively in $V^-_{\pn}$), the left-hand side (respectively right-hand side) of $c$ is inside $\mathcal{C}$.  
Therefore, since $\pn$ is \hs and places tiles (with positions) on $c$, $\pn_{p,p+1,\ldots, |P|-1}$ needs to turn from $c$ (because all points of $c$ are below height $h$).
However, by Lemma~\ref{lem:onepath}, $\pn$ cannot turn right (\resp, left) from  $D^{n+1}$, hence from $E^{n+1}$; if it did  \pn would hide at least one of its own visible glues, which is impossible. Therefore, $\pn_{p,p+1,\ldots ,|\pn|-1}$ must turn from, and thus intersect, other parts of $c$, i.e. $\sigma$ or $F^k$,  which is a contradiction. 
Hence $\pn_{p,p+1,\ldots ,|\pn|-1}$ intersects $F^k$. 

\item We have shown that $\pn_{p+1,p+2,\ldots,|\pn|-1}$ intersects (and agrees with) $F^k$ at least once. In fact all such intersections are agreements because $\pn$ does not conflict with $F^k$. Let $a\geq p+1$ be the largest integer such that $\pn_a = F^k_b$ for some integer $b$. We claim that $Q = F^k_{0,1,\ldots,b}\pn_{a+1,a+2,\ldots}$ is a path: indeed, $\pn_{a+1,a+2,\ldots}$ does not intersect $F^k_{0,1,\ldots,b}$ by the definition of~$a$, and $Q$ is connected. Moreover, $Q \in \prodpaths{\calU}$. Furthermore, $Q\prec \pn$, because since $a>p$ and \pn is simple, and since the visible glue of $F^k$ on $\ell$ is not shared with that of \pn, all glues of $Q$ on $\ell$ are strictly higher than $\pnviz$. Therefore, by the induction hypothesis, $Q$ conflicts with $\alpha_{n}$, and hence $\alpha_{n+1}$, which means that $\pn_{a+1,a+2,\ldots}$ conflicts with $\alpha_{n+1}$ (since $F^k_{0,1,\ldots,b}$ does not conflict with $\alpha_{n+1}$).

\end{itemize}

\paragraph{Case 3:  \mainthmcasethree.}
In this case, because $\pk\prec \pn$, and $\pn$ and $F^k$, and hence $\pk$, share their visible glue at the same height on $\ell$, we know by the definition of $\prec$ that either:
\begin{itemize}
\item $\pn_{p,p+1,\ldots}$ is more right-priority (\resp, left-priority) than $\pk_{f,f+1,\ldots}$ if $\gglue{\pn}{p}{p+1}$, $\gglue{\pk}{f}{f+1}$  are both  $V^+$ glues (\resp,  $V^-$ glues), where $\gglue{\pn}{p}{p+1}$ is the visible glue of $\pn$ on $\ell$, and $\gglue{\pk}{f}{f+1}$ is the visible glue of $\pk$ on $\ell$.
\item $\pn_{p,p+1,\ldots} = \pk_{f,f+1,\ldots}$
\end{itemize}

However, in the second case, since $\alpha_n$ conflicts with $\pk$ (by the induction hypothesis, since $k\leq n$), and $F^k$ places the visible glue of \pk on $\ell$, then \an conflicts with  $\pk_{f,f+1,\ldots}$, hence $\an$ also conflicts with $\pn_{p,p+1,\ldots}$, and we are done with Case 3 by simply letting $\alpha_{n+1}=\alpha_n \cup \asm{\asmprefix{\an}{E^{n+1}}}$.

To conclude this proof, we will therefore handle the first case, i.e. the case where $\pn_{p,p+1,\ldots}$ is more right-priority (\resp, left-priority) than $\pk_{f,f+1,\ldots}$.

We assumed that \pn does not conflict with $\alpha_n$, hence in particular $\pn_{p,p+1,\ldots}$ does not conflict with $\gk_{f,f+1,\ldots}$, and  does not conflict with $\pk_{f,f+1,\ldots}$.
Let $q\geq p$ and $g\geq f$ be the smallest integers such that $\pn_q \neq \gk_g$.
An example of this situation is shown in Figure~\ref{fig:case3}. 

\begin{figure}
   \begin{center}
     \includegraphics{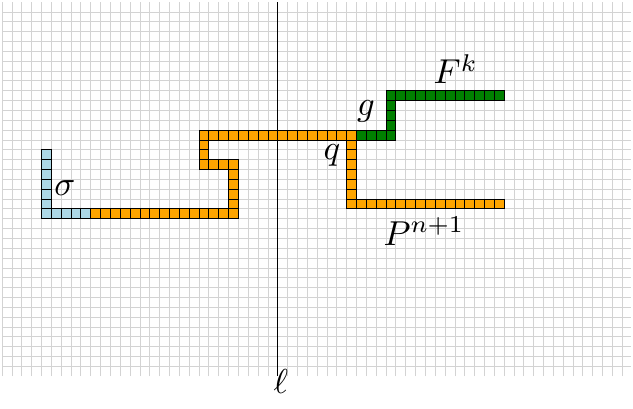}
   \end{center}
   \caption{An example of Case 3, where $P^{n+1}$ and $F^k$ (and by implication $P^k$) both place a $V^+$ glue on $\ell$, at the same height.}
   \label{fig:case3}
\end{figure}

The argument follows along the same lines as Case~2 (building a closed connected component in which \pn starts to grow), but requires a new technique to identify the inside and outside of that connected component.

\begin{itemize}

\item Assume, for the sake of contradiction, that $\pn_{q,q+1,\ldots}$ does not intersect \gk.

We  now describe a closed connected component inside which a suffix of \pn starts to grow.  We first introduce a new variant of embedding of paths into $\R^2$, which we call the \emph{nano-embedding of a path $P$}, denoted $\frak N_P$. This is illustrated in Figure~\ref{fig:nanoembed} and defined as follows.
  For a path $P$ consider its canonical embedding $\frak E_P$. Then, 
  define  $\frak N_P$ to be the curve in $\R^2$ where all of the points of $\frak N_P$ are at $L_\infty$ distance exactly $0.25$ from their closest point on~$\frak E_P$, and are positioned on the right (\resp, left) hand side of  $\frak E_P$ as we walk  along  $\frak E_P$ from $\frak E_P(0)$ to  $\frak E_P(1)$. 
  For tiles on $P$ with input side being their west side, we show in Figure~\ref{fig:nanoembed}(top) all three cases of nano-embeddings. 
  Other cases where the input side is north, east or south are rotations of these three cases. Special cases for start and end tiles of a path are illustrated in Figure~\ref{fig:nanoembed}. 
  
  \begin{figure}[p]
    \begin{center}
      \begin{tikzpicture}[scale=2]
        \draw(0,0)rectangle(1,1);
        \draw[blue,very thick,->](-0.0,0.5)--(0.5,0.5)--(0.5,1.);
        \draw(-0.0,0.5)node[anchor=east]{$\frak E_P$};
        \draw[red,very thick, ->](-0.0,0.25)--(0.75,0.25)--(0.75,1.);
        \draw(-0.0,0.25)node[anchor=east]{$\frak N_P$};
        \draw(0.5,-0.4)node[anchor=south]{input west, output north};
      \end{tikzpicture}\hspace{8ex}
      \begin{tikzpicture}[scale=2]
        \draw(0,0)rectangle(1,1);
        \draw[blue,very thick,->](-0.0,0.5)--(1.,0.5);
        \draw(-0.0,0.5)node[anchor=east]{$\frak E_P$};
        \draw[red,very thick, ->](-0.0,0.25)--(1.,0.25);
        \draw(-0.0,0.25)node[anchor=east]{$\frak N_P$};
        \draw(0.5,-0.4)node[anchor=south]{input west, output east};
      \end{tikzpicture}\hspace{8ex}
      \begin{tikzpicture}[scale=2]
        \draw(0,0)rectangle(1,1);
        \draw[blue,very thick,->](-0.0,0.5)--(0.5,0.5)--(0.5,-0.);
        \draw(-0.0,0.5)node[anchor=east]{$\frak E_P$};
        \draw[red,very  thick,->](-0.0,0.25)--(0.25,0.25)--(0.25,-0.);
        \draw(-0.0,0.25)node[anchor=east]{$\frak N_P$};
        \draw(0.5,-0.4)node[anchor=south]{input west, output south};
      \end{tikzpicture}
    \end{center}
    \begin{center}
      \begin{tikzpicture}[scale=1.2]
        \draw[fill=none,draw=none](-0.25,-0.25) rectangle(1.25,1.25);
   
        \draw[fill=red,opacity=0.5](-1,0)rectangle(0,1);   
        \draw[fill=red,opacity=0.5](0,0)rectangle(1,1);
        \draw[fill=red,opacity=0.5](1,0)rectangle(2,1);
        \draw[fill=lightgray](2,0)rectangle(3,1);
        \draw[fill=lightgray](3,0)rectangle(4,1);
        \draw[fill=lightgray](4,0)rectangle(5,1);

        \draw[fill=red](1,-1)rectangle(2,0);
        \draw[fill=red](1,-2)rectangle(2,-1);
        \draw[fill=red](2,-2)rectangle(3,-1);
        \draw[fill=red](3,-2)rectangle(4,-1);
        \draw[fill=red](4,-2)rectangle(5,-1);

        \draw[fill=red](4,-1)rectangle(5,0);
        
        \draw(-0.5,-0.2)node[]{start};

        \draw(-0.5,0.65)node[]{\footnotesize$P_0$};
        \draw(-0.5,0.35)node[]{\footnotesize$Q_0$};

        \draw(0.5,0.65)node[]{\footnotesize$P_1$};
        \draw(0.5,0.35)node[]{\footnotesize$Q_1$};

        \draw(1.5,0.65)node[]{\footnotesize$P_2$};
        \draw(1.5,0.35)node[]{\footnotesize$Q_2$};

        \draw(2.5,0.5)node[]{\footnotesize$P_3$};
        \draw(3.5,0.5)node[]{\footnotesize$P_4$};
        \draw(4.5,0.5)node[]{\footnotesize$P_5$};
        \draw(1.5,-0.5)node[]{\footnotesize$Q_3$};
        \draw(1.5,-1.5)node[]{\footnotesize$Q_4$};
        \draw(2.5,-1.5)node[]{\footnotesize$Q_5$};
        \draw(3.5,-1.5)node[]{\footnotesize$Q_6$};
        \draw(4.5,-1.5)node[]{\footnotesize$Q_7$};
        \draw(4.5,-0.5)node[]{\footnotesize$Q_8$};

      \end{tikzpicture} \hspace{7ex}
      \begin{tikzpicture}[scale=1.2]
        \draw[fill=none,draw=none](-0.25,-0.25) rectangle(1.25,1.25);
   
        \draw(-1,0)rectangle(0,1);   
        \draw(0,0)rectangle(1,1);
        \draw(1,0)rectangle(2,1);
        \draw(2,0)rectangle(3,1);
        \draw(3,0)rectangle(4,1);
        \draw(4,0)rectangle(5,1);

        \draw(1,-1)rectangle(2,0);
        \draw(1,-2)rectangle(2,-1);
        \draw(2,-2)rectangle(3,-1);
        \draw(3,-2)rectangle(4,-1);
        \draw(4,-2)rectangle(5,-1);

        \draw(4,-1)rectangle(5,0);
 
        \draw[blue,very thick,<-](-0.5,0.5)--(4.5,0.5);
        
        \draw[red,very thick, ->](-0.5,0.25)--(1.25,0.25)--(1.25,-1.75)--(4.75,-1.75)--(4.75,0);
        \draw[magenta,very  thick, ->](-0.5,0.5)--(-0.5,0.25);
        \draw[green,very  thick, ->](4.75,0)--(4.5,0.5);

        \draw(-0.76,0.4)node[]{$ s_m$};
        \draw(4.8,0.2)node[]{$ s_g$};
        \draw(1.5,0.72)node[]{$\frak E_P^\leftarrow$};
        \draw(1.58,-0.5)node[]{$\frak N_Q$};
      \end{tikzpicture}\hspace{8ex}
    \end{center}
    \caption{Nano-embedding and canonical embedding of a path in $\R^2$. Top:  Three tiles on some path $P$ that have their input side as their west side, and their output sides as north, east and south respectively, as indicated. 
    In each of the three cases, the 
    canonical embedding $\frak{E}_P$ of $P$ is shown in blue, and the nano-embedding $\frak{N}_P$ is shown in red. Each point in $\frak{N}_P$ is $L_\infty$ distance exactly 0.25 from its closest point in $\frak{E}_P$. 
    Rotating these diagrams by $90^\circ$, $180^\circ$ and $270^\circ$ give the other 9 cases needed to define the nano-embedding of any tile on a path, except for the first and last tile. 
    Bottom left: two paths that start at the common tile $P_0 = Q_0$; path $P$ is shown in pink and grey, $Q$ is shown in pink and red. 
    Bottom right: example showing how we combine the canonical embedding $\frak{E}_P$ of the path $P$ and the nano-embedding $\frak{N}_Q$ of the path $Q$ to make a simple closed curve $c$ in $\R^2$. 
    The start tile on the left  is a special case (in our construction it is always the case that start tile of a nano-embedding has its output side on the east). For the start tile a short vertical magenta segment $s_m$ is drawn 
    so that it ends at the start point of $\frak{N}_Q$. We follow $\frak{N}_Q$ until it ends (``in'' the last tile of $Q$), then draw a short green segment~$s_g$ to the position (``center'') of  the last tile of $P$. From there the reverse of~$\frak{E}_P$, denoted~$\frak{E}_P^\leftarrow$, traces backwards through the positions of tiles of $P$ to the start point of the magenta segment.    
The resulting curve $c$ is the concatenation of the curves $s_m, \frak{N}_Q, s_g, \frak{E}_P^\leftarrow$ and  is a simple closed curve in $\R^2$.   }
    \label{fig:nanoembed}
  \end{figure}
  \begin{figure}[p]
    \begin{center}
       \includegraphics[trim={0.2cm 0.6cm 0.5cm 0.2cm},clip=true]{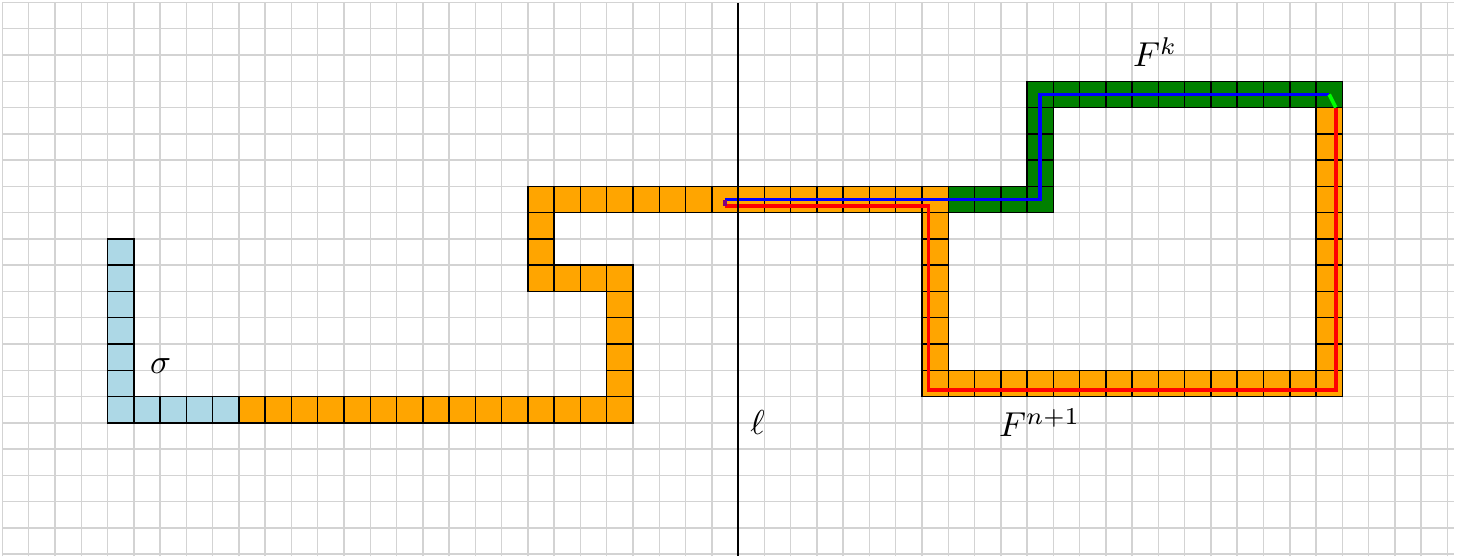}
    \end{center}
    \caption{An example of case 3.1, showing the paths 
    $F^{n+1}$ and $F^k$. The nano-embedding of $F^{n+1}$ is shown as the red curve, and the canonical embedding of $F^k$ is shown as the blue curve. Together with the two small (length $\leq 1$) magenta and green segments, these four curves form a simple closed curve~$c$ in $\R^2$.}
    \label{fig:case3a}
  \end{figure}
  Since $E^{n+1}$ intersects $F^k$ (because in particular, $E^{n+1}$ conflicts with $F^k$), let $b > q$ be the smallest integer such that $\pos{E^{n+1}_b} = \pos{F^k_d}$ for some integer $d$.

  We now define a simple closed curve $c$ inside which a suffix of $P^{n+1}$ starts to grow: let $c$ be the concatenation of $\frak N_{F^{n+1}_{p,p+1,\ldots,b-1}}$, then a length $<1$ line segment from the final~point~of~$\frak N_{F^{n+1}_{p,p+1,\ldots,b-1}}$ to $\pos{F^k_d}$, then $\frak E_{F^k_{f,f+1,\ldots,d}}^\leftarrow$, and finally a line segment of length 0.25 from $\pos{F^k_f}$ to  $\frak N_{F^{n+1}_{p,p+1,\ldots,b-1}}\!(0)$, which is the first point of $\frak N_{F^{n+1}_{p,p+1,\ldots,b-1}}$.
  (Figure~\ref{fig:case3a} shows an example $c$.)

  We claim that $c$ is simple: indeed, since
  $F^k$ only turns left from $F^{n+1}$, and since $\frak N_{F^{n+1}}$ stays immediately to the right of $F^{n+1}$, the four curves used to construct $c$ intersect each other only at the last and first endpoints of each pair of consecutive curves.
  Notice that $c$ is also closed. Therefore, by the Jordan Curve Theorem, $c$ encloses a bounded connected component $\mathcal C$ of $\R^2$. Moreover, at the visible glue of $P^{n+1}$ and $P^k$ on $\ell$, the nano-embedding $\frak N_{\pn}$ of $P^{n+1}$ is below $\frak E_{P^{n+1}}$, and since $P^{n+1}$ places a $V_{\pn}^+$ (\resp, $V_{\pn}^-$) glue on $\ell$, the left-hand (\resp, right-hand) side of $c$ is the inside of $\mathcal{C}$.

  Finally, since $\pn$ does not turn right from $F^{n+1}$ (otherwise, by Lemma~\ref{lem:onepath}   $\pn$ would hide the visibility of at least one of its own glues, which is impossible), a suffix of $\pn$ starts inside $\mathcal{C}$ or on $c$. But since no point of $\mathcal{C}$ is at or above height $h$, and $\pn$ is \hs, the last tile of ${\pn}$  is  positioned outside of $\mathcal C$, which can happen in only two different ways: either $\pn$ turns right from  $F^{n+1}_{p,p+1,\ldots,b-1}$  (contradicting Lemma~\ref{lem:onepath}),  or else $\frak E_{\pn_{q,q+1,\ldots}}$ intersects  $\frak E_{F^k_{g,g+1,\ldots,d}}^\leftarrow$ which is also a contradiction.

\item Therefore, $\pn_{q,q+1,\ldots}$ intersects $F^k$. Moreover, that intersection is necessarily an agreement.
  Let therefore $q'$ be the largest integer such that $ \pn_{q'} = F^k_{g'} $ for some $g'\geq 0$ (notice that  $q'\geq q$), and let $Q = F^k_{0,1,\ldots,g'}\pn_{q'+1,q'+2,\ldots,|\pn|-1}$. 
  Note that  $Q$ is  connected, 
  and that $\pn_{q'+1, q'+2, \ldots}$ does not intersect $F^k_{0,1,\ldots,g'}$ (because of our choice of $q'$), and that the tiles $F^k_{g'}$ and $\pn_{q'+1}$ bind (since $ \pn_{q'} = F^k_{g'} $). 
Therefore, $Q$ is an assemblable path in $\calU$.

  \begin{itemize}

  \item If $g'\geq f$, then $Q$ and \pn have the same visible glue on $\ell$ (this visible glue is on $F^k_{0,1,\ldots,g'}$), and the first difference between $Q_{f,f+1,\ldots}$ and $\pn_{p,p+1,\ldots}$ is a right turn of $\pn_{p,p+1,\ldots}$ from $Q_{f,f+1,\ldots}$, meaning that $Q_{f,f+1,\ldots}$ is less right-priority than $\pn_{p,p+1,\ldots}$. Therefore, $Q\prec \pn$.

  \item Else, $g'<f$, and hence the visible glue of $Q$ on $\ell$ is not the same as the visible glue of \pn on $\ell$. Therefore, the visible glue of $Q$ on $\ell$ is strictly higher than that of \pn. This means that $Q\prec \pn$.
  \end{itemize}

  In both cases, $Q\prec \pn$,  
   hence $Q$ conflicts with $\alpha_n$ by the induction hypothesis. 
  Recall that $Q = F^k_{0,1,\ldots,g'}\pn_{q'+1,q'+2,\ldots,|\pn|-1}$, and that $\asm{F^k} \sqsubseteq \alpha_n $, 
   and therefore $\pn_{q'+1,q'+2,\ldots,|\pn|-1}$ conflicts with $\alpha_n$. 
  
  Hence $\pn$ conflicts with $\alpha_{n}$ which 
  proves the induction hypothesis for $\alpha_{n+1}$.
\end{itemize}

\end{proof}

\section{Noncooperative tile assembly: Impossibility of bounded Turing machine simulation}\label{sec:no_finite_TM} 

We begin by restating Theorem~\ref{thm:no_finite_TM}. Note that the ``bounding function'' $B_M$ in the statement is an arbitrary upperbound on the space usage  of the Turing machine $M$ as we wish to allow any claimed temperature 1 simulator of Turing machines to be arbitrarily 2D-space-inefficient in it's attempt to do so. 
\begin{reptheorem}{thm:no_finite_TM}
\thmTM
\end{reptheorem}
\begin{proof} 
Intuitively, the proof proceeds  by supposing for the sake of contradiction that there is such a tileset~$V$ and then  modifying  $V$ to get another tile set that can be instantiated as an infinite set of tile assembly systems each one of which 
produces terminal assemblies that have the same scaled (simulated) shape as some  system  $\mathcal{T}_N$ defined in Section~\ref{sec:T}. But this violates Theorem~\ref{thm:shapes}, giving a contradiction. We argue this as follows.

 Let  $M$ be a Turing machine with input alphabet $\{ 1\}^\ast$ that accepts all of its inputs $x\in\{1\}^n$  using space~$s(n)$ and time $t(n)$.  So suppose for the sake of contradiction that there is a tileset~$V$ that simulates $M$ on all inputs using some ``bounding function'' $B_M$ as described in the theorem statement.

We will modify the tile set $V$. 
Since the tile type $H$ is on the rightmost vertical  column of every {\em terminal} assembly $\alpha$, $H$'s east glue type   $g_E$ is either (a)  of strength $0$, or else (b) of strength~$1$ and matches no west glue in the tile set~$V$.  On tile type $H$ we replace $g_E$ with a new glue type~$g_{E}'$ that is of strength~1 and where~$g_{E}'$  appears on no other tile type of~$V$. We also add two new tile types $t_1,t_2$ to $V$: 
the west side of $t_1$ has the glue type $g_{E}'$ and so binds to the east side of $H$, and the south side of  $t_2$ binds to the north side of $t_1$, and the south side of $t_2$ binds to the north side of itself. When $H$ appears in some assembly $\alpha$ it is always possible to bind a tile of type~$t_1$ to the east side of $H$ in $\alpha$ (by hypothesis $H$ is placed in the rightmost column of $\alpha$ hence there is always sufficient (unit) space to the right of $H$ to place $t_1$). There may be a number of places where tiles of type $t_1$ can bind (each to the east of a tile of type~$H$), nevertheless every terminal assembly will have an infinite vertical line of $t_2$ tiles growing to the north of some instance of $t_1$ (i.e.\ since $H$ is in the rightmost column, there can be nothing to the north of a tile of type $t_1$, other than possibly another tile of type $t_1$, and hence there is nothing to stop some tile of type $t_1$ growing an the infinite vertical line of $t_2$ tiles  to its north). 

For each $x$, $|x|=n$, using the modified tileset $V$, the system $\mathcal{V}_x=(V,\sigma_{M,x},1)$ builds an assembly that has (roughly) the same rescaled shape as $\calT_{B_M(n)/B_M(n) t(n)} = \calT_{t(n)}$ (defined in Section~\ref{sec:T}), but with some spatial rescaling (by a factor of $B_M(n)$). Let~$x$ be any input such that, with  $|x| = n$, $t(n) \geq 10|V|$  (e.g.\ choosing $|x| \geq 10|V|$ does the trick). Finally, setting  $m=B_M(n)$, let $R_{m}$ be the $m$-block supertile representation function that is undefined on empty $m$-blocks and maps nonempty $m$-blocks to the  tile $\sigma\in T$.\footnote{$m$-block supertile representation functions were defined in Section~\ref{sec:simulation_def}.    Since the proof of  Theorem~\ref{thm:no_finite_TM}   reasons merely about the {\em shapes} of terminal assemblies, we do not even require that $R_m$ sometimes maps different nonempty $m$-blocks to different tile types of $T$. In other words, having $R_m$ map nonempty $m$-blocks to some tile type of $T$ (here $\sigma\in T$) is sufficient to reason about the shape of assemblies under $R_m$.}
 Then  
$\dom{R^{*}_{m}(\sigma_{\mathcal{V}_x})} = \dom{\sigma_{\mathcal{T}_n}}$ and $\{\dom{R^{*}_{m}(\alpha)} \mid \alpha\in\termasm{\mathcal{V}_x}\} = \{\dom{\beta} \mid \beta\in\termasm{{\mathcal{T}}_{n}}\}$ for $n\geq 10 |V|$. 
Hence our modified $V$ violates violates Theorem~\ref{thm:shapes} where in the theorem statement we set 
$U=V$, 
$m=B_M$, 
$\sigma = \sigma_{M,x}$ and 
$R_m = R_{B_M}$. 
\end{proof}

\section*{Acknowledgements}  

We thank Damien Regnault, Matthew Patitz, Trent Rogers and Andrew Winslow for important technical feedback and the following people for interesting discussions that helped to improve our thinking: 
Nicolas Schabanel, Dave Doty, Robert Schweller and Jacob Hendricks.  We also thank for Nicolas Schabanel for hosting both authors at LIAFA (Paris 7, France) for the months of September 2014 and May 2015.
A special thanks to our wives Elisa and Beverley. 

\appendix
  \section{Additional simulation definitions}\label{sec:appendix:addSimDefs}
   This appendix continues the definitions in Section~\ref{sec:simulation_def} and were used in previous work on intrinsic universality~\cite{USA, IUSA, Meunier-2014, OneTile, 2HAMIU, Fekete2014}. 

\begin{definition}
\label{def:equiv-prod} We say that $\mathcal{S}$ and $\mathcal{T}$ have
\emph{equivalent productions} (under $R$), and we write $\mathcal{S}
\Leftrightarrow \mathcal{T}$ if the following conditions hold:
\begin{enumerate}
\item $\left\{R^*(\alpha') | \alpha' \in \prodasm{\mathcal{S}}\right\} =
  \prodasm{\mathcal{T}}$.
\item $\left\{R^*(\alpha') | \alpha' \in \termasm{\mathcal{S}}\right\} = \termasm{\mathcal{T}}$.
\item For all $\alpha'\in \prodasm{\mathcal{S}}$, $\alpha'$ maps cleanly to $R^*(\alpha')$.
\end{enumerate}
\end{definition}

\begin{observation}\label{obs:prodshapes}
If $\mathcal{S}$ and $\mathcal{T}$ that have equivalent productions (they satisfy Definition~\ref{def:equiv-prod}) then  they have equivalent shapes  (they satisfy Definition~\ref{def:equiv-shape}).
\end{observation}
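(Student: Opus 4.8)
The plan is to derive Definition~\ref{def:equiv-shape} directly from condition~(2) of Definition~\ref{def:equiv-prod} by applying the domain operator $\dom{\cdot}$ elementwise to both sides of a set equality; conditions~(1) and~(3) of Definition~\ref{def:equiv-prod} will not be needed. Throughout, I would use that both definitions are stated relative to the same $m$-block representation function $R$, so the assembly representation function $R^{*}$ appearing in both is literally the same function, and hence there is no mismatch to reconcile between the two statements.

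First I would record the hypothesis: since $\mathcal{S}$ and $\mathcal{T}$ have equivalent productions under $R$, condition~(2) of Definition~\ref{def:equiv-prod} gives the equality of sets of assemblies
\[
\{\, R^{*}(\alpha') \mid \alpha'\in\termasm{\mathcal{S}} \,\} \;=\; \termasm{\mathcal{T}}.
\]
Next I would apply the map $\gamma\mapsto\dom{\gamma}$ to both sides. The elementary fact used here is that for any function $f$ and any sets $A=B$ one has $\{f(a)\mid a\in A\}=\{f(b)\mid b\in B\}$; specialising $f=\dom{\cdot}$, with $A=\{R^{*}(\alpha')\mid\alpha'\in\termasm{\mathcal{S}}\}$ and $B=\termasm{\mathcal{T}}$, and simplifying the nested set-builder on the left via $\{\dom{\gamma}\mid\gamma\in\{R^{*}(\alpha')\mid\alpha'\in\termasm{\mathcal{S}}\}\}=\{\dom{R^{*}(\alpha')}\mid\alpha'\in\termasm{\mathcal{S}}\}$, one obtains
\[
\{\, \dom{R^{*}(\alpha')} \mid \alpha'\in\termasm{\mathcal{S}} \,\} \;=\; \{\, \dom{\beta} \mid \beta\in\termasm{\mathcal{T}} \,\},
\]
which is exactly the equality asserted in Definition~\ref{def:equiv-shape} (reading the subscript $n$ on $\mathcal{T}_{n}$ there as a typographical carry-over from a particular instantiation, i.e.\ as plain $\mathcal{T}$). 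Hence $\mathcal{S}$ and $\mathcal{T}$ have equivalent terminal shapes under $R$, proving the observation.

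I expect essentially no mathematical obstacle: the statement is a bookkeeping corollary of condition~(2). The only points requiring a sentence of care are (i) confirming that the $R$ in Definitions~\ref{def:equiv-prod} and~\ref{def:equiv-shape} is the same, so that the two occurrences of $R^{*}$ coincide, and (ii) phrasing the ``apply $\dom{\cdot}$ to both sides'' step correctly as an identity about the image of a set under a function rather than about individual elements. The final write-up should be only a few lines.
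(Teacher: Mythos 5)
Your proposal is correct: the paper states this as an unproven observation precisely because it follows immediately from condition~(2) of Definition~\ref{def:equiv-prod} by applying $\dom{\cdot}$ to both sides of the set equality, which is exactly your argument. Your two points of care (that $R^{*}$ is the same function in both definitions, and that the subscript $n$ in Definition~\ref{def:equiv-shape} is a typographical carry-over) are both the right readings.
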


The following two definitions (for simulation {\em dynamics}) can be safely ignored by the reader and are included only for the sake of completeness. 
\begin{definition}
\label{def-t-follows-s} We say that $\mathcal{T}$ \emph{follows}
$\mathcal{S}$ (under $R$), and we write $\mathcal{T} \dashv_R \mathcal{S}$ if
$\alpha' \rightarrow^\mathcal{S} \beta'$, for some $\alpha',\beta' \in
\prodasm{\mathcal{S}}$, implies that $R^*(\alpha') \to^\mathcal{T} R^*(\beta')$.
\end{definition}
\begin{definition}
\label{def-s-models-t}
We say that $\mathcal{S}$ \emph{models} $\mathcal{T}$ (under $R$), and we write
$\mathcal{S} \models_R \mathcal{T}$, if for every $\alpha \in
\prodasm{\mathcal{T}}$, there exists $\Pi \subset \prodasm{\mathcal{S}}$ where
$R^*(\alpha') = \alpha$ for all $\alpha' \in \Pi$, such that, for every $\beta
\in \prodasm{\mathcal{T}}$ where $\alpha \rightarrow^\mathcal{T} \beta$, (1) for
every $\alpha' \in \Pi$ there exists $\beta' \in \prodasm{\mathcal{S}}$ where
$R^*(\beta') = \beta$ and $\alpha' \rightarrow^\mathcal{S} \beta'$, and (2) for
every $\alpha'' \in \prodasm{\mathcal{S}}$ where $\alpha''
\rightarrow^\mathcal{S} \beta'$, $\beta' \in \prodasm{\mathcal{S}}$,
$R^*(\alpha'') = \alpha$, and $R^*(\beta') = \beta$, there exists $\alpha' \in
\Pi$ such that $\alpha' \rightarrow^\mathcal{S} \alpha''$.
\end{definition}

\noindent The previous definition essentially specifies that whenever $\mathcal{S}$
simulates an assembly $\alpha \in \prodasm{\mathcal{T}}$, there must be at least one valid growth path in $\mathcal{S}$ for each of the possible next steps 
$\mathcal{T}$ could make from~$\alpha$. 

\begin{definition}
\label{def-s-simulates-t} We say that $\mathcal{S}$ \emph{simulates}
$\mathcal{T}$ (under $R$) if $\mathcal{S} \Leftrightarrow_R \mathcal{T}$
(equivalent productions), $\mathcal{T} \dashv_R \mathcal{S}$ and $\mathcal{S}
\models_R \mathcal{T}$ (equivalent dynamics).
\end{definition}

\subsection{Intrinsic universality}\label{sec:iu_def}
Now that we have a formal definition of what it means for one tile assembly system to
simulate another, we can proceed to formally define the concept of intrinsic universality, i.e.\ when there is one general-purpose tile set that can be appropriately programmed to simulate any other tile system from a specified
class of tile assembly systems.
Let $\REPL$ denote the set of all supertile representation functions (i.e.\
$m$-block supertile representation functions for all $m\in\Z^+$).
Define $\frakC$ to be a class of tile assembly
systems, $\frakC_1$ to be the class of all temperature 1 tile assembly systems, and let $U$ be a tileset.

\begin{definition}\label{def:iu-specific-temp}
We say $U$ is \emph{intrinsically universal} for $\frakC$ \emph{at temperature}
$\tau' \in \Z^+$ if there are functions $\mathcal{R}:\frakC \to
\REPL$ and $S:\frakC \to \mathcal{A}^U_{< \infty}$ such that, for each
$\mathcal{T} = (T,\sigma,\tau) \in \frakC$, there is a constant $m\in\N$ such
that, letting $R = \mathcal{R}(\mathcal{T})$,
$\sigma_\mathcal{T}=S(\mathcal{T})$, and $\mathcal{U}_\mathcal{T} =
(U,\sigma_\mathcal{T},\tau')$, $\mathcal{U}_\mathcal{T}$ simulates~$\mathcal{T}$
at scale $m$ and using supertile representation function~$R$.
\end{definition}
That is, $R = \mathcal{R}(\mathcal{T})$ is a representation function that
interprets assemblies of $\mathcal{U}_\mathcal{T}$ as assemblies of
$\mathcal{T}$, and $\sigma_\calT = S(\mathcal{T})$ is the seed assembly used to program
tiles from $U$ to represent the seed assembly of $\mathcal{T}$. In this paper, we disprove the existence of an intrinsically universal tileset for~$\frakC_1$ (the set of all temperature~1 tile assembly systems) at temperature $\tau'=1$.


\bibliographystyle{abbrv} 
\bibliography{t1notiubib}

\end{document}